\newtheorem*{theorem*}{Theorem}
\newtheorem*{lemma*}{Lemma}
\newtheorem*{conj*}{Conjecture}
\newcommand{\eleq}[1]{\ensuremath{\stackrel{#1}{\sim}}}
\newcommand{\eeq}[1]{\ensuremath{\stackrel{#1}{=}}}
\DeclareMathOperator{\rank}{rank}
\DeclareMathOperator{\supp}{supp}
\begin{document}

\title{On metric regularity of Reed-Muller codes
}


\author{Alexey Oblaukhov 
}


\institute{A. Oblaukhov \at
    Sobolev Institute of Mathematics, \\
    Novosibirsk, Russia \\
    \email{oblaukhov@gmail.com} \\
	The work is supported by Mathematical Center in Akademgorodok under agreement No. 075-15-2019-1613 with the Ministry of Science and Higher Education of the Russian Federation and Laboratory of Cryptography JetBrains Research.     
}

\date{Received: date / Accepted: date}

\maketitle

\begin{abstract}
 In this work we study metric properties of the well-known family of binary Reed-Muller codes. Let $A$ be an arbitrary subset of the Boolean cube, and $\widehat{A}$ be the metric complement of $A$ --- the set of all vectors of the Boolean cube at the maximal possible distance from $A$. If the metric complement of $\widehat{A}$ coincides with $A$, then the set $A$ is called a {\it metrically regular set}. The problem of investigating metrically regular sets appeared when studying {\it bent functions}, which have important applications in cryptography and coding theory and are also one of the earliest examples of a metrically regular set. In this work we describe metric complements and establish the metric regularity of the codes $\mathcal{RM}(0,m)$ and $\mathcal{RM}(k,m)$ for $k \geqslant m-3$. Additionally, the metric regularity of the codes $\mathcal{RM}(1,5)$ and $\mathcal{RM}(2,6)$ is proved. Combined with previous results by Tokareva N. (2012) concerning duality of affine and bent functions, this establishes the metric regularity of most Reed-Muller codes with known covering radius. It is conjectured that all Reed-Muller codes are metrically regular.

\keywords{metrically regular set \and metric complement \and covering radius \and bent function \and Reed-Muller code \and deep hole}
\end{abstract}

\pagebreak

\section{Introduction}
\label{intro}

The problem of investigating and classifying {\it metrically regular sets} was posed by Tokareva \cite{TOK12,TOK15} when studying metric properties of {\it bent functions} \cite{ROT76}.
A Boolean function $f$ in even number of variables $m$ is called a {\it bent function} if it is at the maximal possible distance from the set of affine functions.

Bent functions have various applications in cryptography, coding theory and combinatorics \cite{MES16,TOK15}. In cryptography, bent functions are valued because of their outstanding nonlinearity, which allows one to construct S-boxes for block ciphers which possess high resistance to the linear cryptanalysis \cite{MES16}. However, many problems related to bent functions remain unsolved; in particular, the gap between the best known lower and upper bound on the number of bent functions is extremely large; currently known constructions of bent functions are rather scarse. In 2010 \cite{TOK10}, Tokareva has proved that, like bent functions are maximally distant from affine functions, affine functions are at the maximal possible distance from bent functions, thus establishing the {\it metric regularity} of both sets. This discovery arouses interest in studying the property of metric regularity in order to better understand the structure of the set of bent functions.

From the coding theory standpoint, bent functions form the set of points at the maximal possible distance from the Reed-Muller code of the first order in an even number of variables. Therefore, the aforementioned result by Tokareva establishes the metric regularity of the codes $\mathcal{RM}(1,m)$ for even $m$. Reed-Muller codes are extensively studied for many years, but their metric properties, like the covering radius, are very elusive and are being discovered to this day; just recently, Wang has found the covering radius of the code $\mathcal{RM}(2,7)$ to be equal to $40$ \cite{WAN19}. These problems put Reed-Muller codes in our focus of the research of metric regularity.

Let us briefly overview the results obtained in this area. As mentioned before, Tokareva \cite{TOK10} has established the metric regularity of the sets of affine/bent functions. Metric regularity of several classes of {\it partition set functions} is studied in \cite{STA18}, while the works \cite{KOL17,KUT19} touch upon metric properties of certain subclasses of bent functions. Metric regularity has been actively investigated by the author: metric complements of linear subspaces of the Boolean cube are studied in the paper \cite{OBL16}, while the works \cite{OBL18} and \cite{OBL19} are studying possible sizes of the largest and smallest metrically regular set.

In this work we investigate the metric regularity of Reed-Muller codes. Naturally, the knowledge of the covering radius of the code is necessary for working with the set of its most distant points. Among the codes of high order, covering radii of the codes $\mathcal{RM}(k,m)$ for $k \geqslant m-3$ are known. The covering radius of $\mathcal{RM}(1,m)$ for odd $m > 7$ is unknown, but has been determined for $\mathcal{RM}(1,5)$ \cite{BW72} and $\mathcal{RM}(1,7)$ \cite{MYK80,HOU96}. In \cite{SCH81}, Schatz has found the covering radius of $\mathcal{RM}(2,6)$, while recently Wang has established the covering radius of $\mathcal{RM}(2,7)$ \cite{WAN19}. For $m > 7$, the covering radius of $\mathcal{RM}(2,m)$ is currently unknown. We prove that the codes $\mathcal{RM}(k,m)$, for $k = 0$ and $k \geqslant m-3$ and the codes $\mathcal{RM}(1,5)$, $\mathcal{RM}(2,6)$ are metrically regular and also describe their metric complements in most cases.

The paper is structured as follows. After providing necessary definitions and examples, we prove the metric regularity of the code $\mathcal{RM}(1,5)$. After that we establish the metric regularity of the Reed-Muller codes of order $0$, order $m-2$ and higher, and then we move onto the codes of order $m-3$. In order to handle this case, we describe the method of ``syndrome matrices'' of calculating distances from vectors to the punctured $\mathcal{RM}(m-3,m)$ code, based on the ``Covering codes'' \cite{COH97} book by Cohen et al. Following the book, we calculate the covering radius of the Reed-Muller code of order $m-3$, and utilizing the method further, we obtain the metric complement of this code. The description of the complement allows us to establish that only the functions from $\mathcal{RM}(m-3,m)$ are contained in the second metric complement, which proves the metric regularity of the Reed-Muller codes of order $m-3$. We then proceed to establish the metric regularity of the code $\mathcal{RM}(2,6)$, based on the results obtained for the codes $\mathcal{RM}(2,5)$ and $\mathcal{RM}(1,5)$, since the former can be constructed from the latter using the $(\mathbf{u},\mathbf{u+v})$ construction. The paper concludes with an overview of the results obtained and a hypothesis concerning the metric regularity of all Reed-Muller codes.

\section{Definitions and examples}
\label{defs}

Let $\mathbb{F}_2^n$ be the space of binary vectors of length $n$ with the Hamming metric.
The {\it Hamming distance} $d(\cdot,\cdot)$ between two binary vectors is defined as the number of coordinates in which these vectors differ, while $wt(\cdot)$ denotes the \textit{weight} of a vector, i.e. the number of nonzero values it contains.
The plus sign $+$ denotes addition modulo two (componentwise in case of vectors), while the componentwise product of two binary vectors is denoted by $\ast$.

Let $X\subseteq\mathbb{F}_2^n$ be an arbitrary set and $y\in\mathbb{F}_2^n$ be an arbitrary vector. The distance from the vector $y$ to the set $X$ is defined as
\begin{equation*}
d(y,X) = \min\limits_{x\in X} d(y,x).
\end{equation*}
The {\it covering radius} of the set $X$ is defined as
\begin{equation*}
\rho(X) = \max\limits_{z\in\mathbb{F}_2^n}d(z,X).
\end{equation*}
The set $X$ with $\rho(X) = r$ is also called a {\it covering code} \cite{COH97} of radius $r$.

Consider the set
\begin{equation*}
Y=\{y\in\mathbb{F}_2^n | d(y,X)=\rho(X)\}
\end{equation*}
of all vectors at the maximal possible distance from the set $X$. This set is called the {\it metric complement} \cite{OBL16} of $X$ and is denoted by $\widehat{X}$. Vectors from the metric complement are sometimes called the {\it deep holes} of a code. If $\widehat{\widehat{X}} = X$ then the set $X$ is said to be {\it metrically regular} \cite{TOK15}.

Note that metrically regular sets always come in pairs, i.e. if $A$ is a metrically regular set, then its metric complement $\widehat{A}$ is also a metrically regular set and both of them have the same covering radius. For some simple examples of metric complements and metrically regular sets, refer to \cite{OBL16,OBL18,OBL19}.

The following trivial auxiliary lemma, established in \cite{OBL16}, will be used throughout the paper.
\begin{lemma*}
	Let $C\subseteq \mathbb{F}_2^n$ be a linear code. Then $\rho(\widehat{C}) = \rho(C)$ and a vector $x\in \mathbb{F}_2^n$ is in $\widehat{\widehat{C}}$ if and only if $x +\widehat{C} = \widehat{C}$.
\end{lemma*}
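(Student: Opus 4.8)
The plan is to build everything on the translation invariance of a linear code. The first step is the observation that, since $C + c = C$ for every $c \in C$, we have $d(y+c,C) = d(y,C)$ for all $y \in \mathbb{F}_2^n$, and hence $\widehat{C}$ is a union of cosets of $C$: $\widehat{C} + c = \widehat{C}$ for every $c \in C$. Equivalently, for any $h \in \widehat{C}$ the whole coset $h + C$ is contained in $\widehat{C}$, which yields, for every $v \in \mathbb{F}_2^n$ and every $h \in \widehat{C}$, the inequality $d(v,\widehat{C}) \le d(v, h+C) = d(v+h, C) \le \rho(C)$ (the middle equality being just the fact that the distance from $v$ to the coset $h+C$ equals the distance from $v+h$ to $C$).

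The second step uses this to pin down $\rho(\widehat{C})$. The inequality just derived gives $\rho(\widehat{C}) \le \rho(C)$. For the opposite direction, I would observe that any codeword $c \in C$ satisfies $d(c,h) \ge d(h,C) = \rho(C)$ for every $h \in \widehat{C}$, so $d(c,\widehat{C}) \ge \rho(C)$, whence $\rho(\widehat{C}) \ge \rho(C)$; the two bounds give $\rho(\widehat{C}) = \rho(C) =: \rho$. The same computation shows $d(c,\widehat{C}) = \rho$ for every $c \in C$, so in particular $C \subseteq \widehat{\widehat{C}}$ and, taking the zero codeword $\mathbf{0}$, $\rho = d(\mathbf{0},\widehat{C}) = \min_{h \in \widehat{C}} wt(h)$.

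The third step is the characterization of $\widehat{\widehat{C}}$. If $x + \widehat{C} = \widehat{C}$, then $h \mapsto x+h$ permutes $\widehat{C}$, so $d(x,\widehat{C}) = \min_{h \in \widehat{C}} wt(x+h) = \min_{h \in \widehat{C}} wt(h) = \rho = \rho(\widehat{C})$, i.e. $x \in \widehat{\widehat{C}}$. Conversely, if $x \in \widehat{\widehat{C}}$, i.e. $d(x,\widehat{C}) = \rho$, then for every $h \in \widehat{C}$ the first-step inequality with $v = x$ gives $\rho = d(x,\widehat{C}) \le d(x+h,C) \le \rho$, forcing $d(x+h,C) = \rho$, i.e. $x+h \in \widehat{C}$; thus $x + \widehat{C} \subseteq \widehat{C}$, and since $\widehat{C}$ is finite and $v \mapsto x+v$ is a bijection, $x + \widehat{C} = \widehat{C}$.

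I do not expect a genuine obstacle here — the lemma is, as the authors say, trivial. The one point to handle honestly is the first step: checking that $\widehat{C}$ really decomposes into full cosets of $C$, since both the inequality $d(v,\widehat{C}) \le d(v+h,C)$ and the substitution $h \mapsto x+h$ in the third step rely on it. The degenerate case $\rho(C) = 0$, where $C = \widehat{C} = \mathbb{F}_2^n$, can be disposed of separately in one line.
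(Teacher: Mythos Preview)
Your argument is correct and complete; each step is sound, including the coset-decomposition of $\widehat{C}$ that underpins the rest. Note, however, that the paper does not actually prove this lemma: it is merely stated, called ``trivial'', and attributed to \cite{OBL16}, so there is no in-paper proof to compare your approach against.
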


Let $\mathcal{F}^m$ be the set of all Boolean functions in $m$ variables. The Reed-Muller code of order $k$ in $m$ variables is defined as:
\begin{equation*}
    \mathcal{RM}(k,m) = \{f\in \mathcal{F}^m : \deg (f) \leqslant k\},
\end{equation*}
where $\deg(\cdot)$ denotes the degree of the \textit{algebraic normal form (ANF)}  of a function. The Reed-Muller code can be also represented as the set of \textit{value vectors} of the corresponding functions. Throughout the paper we will often switch between these two representations, sometimes ``on the fly''. In most cases, $m$ will denote the number of variables, while $n:=2^m$ will denote the dimension of the space of value vectors, which have coordinates numbered from $0$ to $2^{m}-1$. The $i$-th coordinate of a value vector is the value of the corresponding function at the binary vector of length $m$ which is the binary representation of the number $i$. Weights of functions, distances between functions and between a function and a set of functions are defined as distances between their value vectors.

Throughout the paper, vectors of length $m$ and square $m\times m$ matrices will be denoted using roman typestyle letters (e.g., $\mathrm{x}, \mathrm{A}$), while vectors of length $n$ and vectors derived from them, as well as matrices related to such vectors, will be denoted using bold letters (e.g., $\mathbf{v}, \mathbf{B}$).

Let $f$ and $g$ be two functions in $m$ variables. Let $L_{\mathrm{A}}^{\mathrm{b}}:\mathbb{F}_2^m \rightarrow \mathbb{F}_2^m$ denote the affine transformation of the variables with the matrix $\mathrm{A}$ and the vector $\mathrm{b}$:
\begin{equation*}
    (f\circ L_{\mathrm{A}}^{\mathrm{b}}) (\mathrm{x}) = f(\mathrm{Ax+b}).
\end{equation*}
Here $\circ$ denotes the operation of composition of two functions. If the vector $\mathrm{b}$ is zero, it will be omitted from the notation. Functions $f$ and $g$ are called {\it linearly equivalent} if one can be obtained from the other by applying a nonsingular linear transformation to the variables, i.e. $f = g\circ L_{\mathrm{A}}$, where $\det\mathrm{A} \neq 0$.

{\it Extended affine equivalence} is more common when classifying Boolean functions: functions $f$ and $g$ are called {\it $\mathrm{EA}$-equivalent} if there exists a nonsingular binary matrix $\mathrm{A}$, a Boolean vector $\mathrm{b}$ of length $m$ and a function $h$ of degree at most 1 such that $f = g\circ L_{\mathrm{A}}^{\mathrm{b}} + h$.

For our study we will use a variant of these two equivalence relations, which will be referred to as {\it extended linear equivalence (to the power of $k$)}. Functions $f$ and $g$ are called {\it $\mathrm{EL}^{k}$-equivalent} if there exists a nonsingular binary matrix $\mathrm{A}$ and a function $h$ of degree at most $k$ such that
\begin{equation*}
f = g\circ L_{\mathrm{A}} + h.
\end{equation*}
It is easy to see that this relation is indeed an equivalence. If two functions $f$ and $g$ are $\mathrm{EL}^{k}$-equivalent, we will denote it by $f \eleq{k} g$. We will also write $f \eeq{k} g$ if $f$ and $g$ differ by a function of degree at most $k$. Note that the last relation is a subrelation of the $\mathrm{EL}^{k}$-equivalence.

The Reed-Muller code of order $k$ in $m$ variables is usually denoted by $\mathcal{RM}(k,m)$. Since we will refer to these codes regularly, we will use $\mathcal{R}_{k,m}$ instead. The number of variables will often be omitted from the subscript if it is denoted by $m$.

\section{The Reed-Muller code $\mathcal{R}_{1,5}$} \label{RM15Section}

Let us first consider a special case --- the code $\mathcal{R}_{1,5}$. This is the set of affine functions, but in the odd number of variables, so it is not covered by the result of Tokareva concerning bent functions.

In 1972, Berlekamp and Welch presented a partition of all cosets of the code $\mathcal{R}_{1,5}$ into 48 classes with respect to the $\mathrm{EA}$-equivalence and obtained weight distributions for each class of cosets \cite{BW72}. The largest minimal weight (and therefore the covering radius of the code) among all classes is equal to $12$, and is attained on four coset classes (classes $14$, $22$, $26$ and $28$ in Table~\ref{table:RM15Cosets}). These four classes constitute the metric complement of $\mathcal{R}_{1,5}$. 

\begin{restatable}{theorem}{theoremone}
    The code $\mathcal{R}_{1,5}$ is metrically regular.
\end{restatable}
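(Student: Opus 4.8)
The plan is to use the auxiliary lemma: since $\mathcal{R}_{1,5}$ is a linear code, it suffices to show that a vector $x \in \mathbb{F}_2^{32}$ lies in $\widehat{\widehat{\mathcal{R}_{1,5}}}$ if and only if $x + \widehat{\mathcal{R}_{1,5}} = \widehat{\mathcal{R}_{1,5}}$, and then to verify that the only such $x$ are the affine functions themselves. By the result of Berlekamp and Welch quoted above, $\widehat{\mathcal{R}_{1,5}}$ is the union of the four coset classes $14$, $22$, $26$, $28$ of $\mathcal{R}_{1,5}$, so $\widehat{\mathcal{R}_{1,5}}$ is a union of cosets of $\mathcal{R}_{1,5}$. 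Consequently, if $x \in \mathcal{R}_{1,5}$ then trivially $x + \widehat{\mathcal{R}_{1,5}} = \widehat{\mathcal{R}_{1,5}}$, giving the inclusion $\mathcal{R}_{1,5} \subseteq \widehat{\widehat{\mathcal{R}_{1,5}}}$. The content of the theorem is the reverse inclusion.

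For the reverse inclusion, first I would note that the set of deep holes, and hence the stabilizer $S = \{x : x + \widehat{\mathcal{R}_{1,5}} = \widehat{\mathcal{R}_{1,5}}\}$, is invariant under the full group of $\mathrm{EA}$-transformations of the Boolean cube (affine changes of variable composed with addition of affine functions), since that group preserves distance to $\mathcal{R}_{1,5}$ and permutes its cosets, permuting the four classes among themselves. The stabilizer $S$ is itself a linear subspace of $\mathbb{F}_2^{32}$ containing $\mathcal{R}_{1,5}$ (which has dimension $6$), and it is $\mathrm{EA}$-invariant, so $S$ is a union of $\mathrm{EA}$-classes of cosets of $\mathcal{R}_{1,5}$. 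The strategy is then: for each of the $48 - 1 = 47$ non-affine coset classes, exhibit a single coset representative $g$ in that class and a single deep hole $w$ (a representative, or explicit vector, from one of classes $14,22,26,28$) such that $g + w$ is \emph{not} a deep hole — i.e., $d(g+w, \mathcal{R}_{1,5}) < 12$, equivalently $g+w$ lies in a coset class whose minimal weight is at most $11$. Since $S$ is a union of coset classes, a single such witness per class rules out that entire class, leaving only $\mathcal{R}_{1,5}$ itself.

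Carrying this out requires concrete data: explicit coset representatives for all $48$ Berlekamp–Welch classes, and the knowledge of which class each is, so that the $47$ sums $g + w$ can be classified by computing their distance to $\mathcal{R}_{1,5}$ (a short computation over $2^6 = 64$ affine functions for each). This is a finite check, mechanical in nature; the main obstacle is organizational rather than conceptual — assembling the $48$ representatives and their invariants (e.g. weight distributions of the cosets, which distinguish the classes in Table~\ref{table:RM15Cosets}) and confirming that in every one of the $47$ cases at least one product with a deep hole escapes the covering radius. One should also double-check that it suffices to test against a single deep-hole representative $w$ per orbit: because $S$ is a subspace, $g \in S$ forces $g + w \in \widehat{\mathcal{R}_{1,5}}$ for \emph{every} $w \in \widehat{\mathcal{R}_{1,5}}$, so producing one bad $w$ for one representative $g$ of the class is enough. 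If for some class a single clever choice of $(g,w)$ does not immediately drop the distance below $12$, one can fall back on averaging or counting arguments over the class, but I expect explicit witnesses to work throughout.
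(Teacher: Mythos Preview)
Your proposal is correct and follows essentially the same approach as the paper: invoke the linearity lemma to reduce to showing $f + \widehat{\mathcal{R}}_{1,5} \neq \widehat{\mathcal{R}}_{1,5}$ for each non-affine $f$, use EA-invariance to reduce to one representative per Berlekamp--Welch coset class, and for each such representative exhibit an explicit deep hole $g$ with $f+g \notin \widehat{\mathcal{R}}_{1,5}$. The paper adds one small optimization you omit---since $\rho(\mathcal{R}_{1,5})=12$ is even, odd-weight cosets are automatically excluded from $\widehat{\widehat{\mathcal{R}}}_{1,5}$, cutting the check from 47 classes to 28---and then simply tabulates the witnesses (its Table~1), identifying the class of $f+g$ by explicit affine equivalences rather than by recomputing distances.
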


\begin{proof}
	
	\begin{table}[]
		\begin{adjustbox}{width=\columnwidth,center}
			\begin{tabular}{|l|l|l|l|l|l|}
				\hline
				No    & Representative $f$           & Added $g \in \widehat{\mathcal{R}}_{1,5}$  & $C(g)$       & Sum $h=f+g$                                              & $C(h)$ \\ \hline
				$0$   & $0$                          & ---                                        & ---          & ---                                                      & ---  \\ \hline
				$1$   & $2345$                       & $123{+}14{+}25$                            & 22           & $2345{+}123{+}14{+}25$                                   & $12$  \\ \hline
				$2$   & $2345{+}14$                  & $123{+}14{+}25$                            & 22           & $2345{+}123{+}25 \sim 2345{+}123{+}34$                   & $8$  \\ \hline
				$3$   & $2345{+}24$                  & $2345{+}123{+}24{+}35$                     & 14           & $123{+}35 \sim 123{+}14$                                 & $21$  \\ \hline
				$4$   & $2345{+}24{+}35$             & $2345{+}123{+}24{+}35$                     & 14           & $123$                                                    & $19$  \\ \hline
				$5$   & $2345{+}14{+}25$             & $123{+}14{+}25$                            & 22           & $2345{+}123$                                             & $6$  \\ \hline
				$6$   & $2345{+}123$                 & $123{+}14{+}25$                            & 22           & $2345{+}14{+}25$                                         & $5$  \\ \hline
				$7$   & $2345{+}123{+}12$            & $12{+}34$                                  & 28           & $2345{+}123{+}34$                                        & $8$  \\ \hline
				$8$   & $2345{+}123{+}34$            & $12{+}34$                                  & 28           & $2345{+}123{+}12$                                        & $7$  \\ \hline
				$9$   & $2345{+}123{+}14$            & $14{+}25$                                  & 28           & $2345{+}123{+}25 \sim 2345{+}123{+}34$                   & $8$  \\ \hline
				$10$  & $2345{+}123{+}45$            & $12{+}45$                                  & 28           & $2345{+}123{+}12$                                        & $7$  \\ \hline
				$11$  & $2345{+}123{+}12{+}34$       & $12{+}34$                                  & 28           & $2345{+}123$                                             & $6$  \\ \hline
				$12$  & $2345{+}123{+}14{+}25$       & $123{+}14{+}25$                            & 22           & $2345$                                                   & $1$  \\ \hline
				$13$  & $2345{+}123{+}12{+}45$       & $12{+}45$                                  & 28           & $2345{+}123$                                             & $6$  \\ \hline
				$14^1$& $2345{+}123{+}24{+}35$       & $2345{+}123{+}24{+}35$                     & 14           & $0$                                                      & $0$  \\ \hline
				$15$  & $2345{+}123{+}145$           & $123{+}14{+}25$                            & 22           & $2345{+}145{+}14{+}25 \sim 2345{+}123{+}12{+}34$         & $11$  \\ \hline
				$16$  & $2345{+}123{+}145{+}45$      & $123{+}145{+}45{+}24{+}35$                 & 26           & $2345{+}24{+}35$                                         & $4$  \\ \hline
				$17$  & $2345{+}123{+}145{+}24{+}45$ & $2345{+}123{+}24{+}35$                     & 14           & $145{+}35{+}45 \sim 123{+}14$                            & $21$  \\ \hline
				$18$  & $2345{+}123{+}145{+}24{+}35$ & $2345{+}123{+}24{+}35$                     & 14           & $145 \sim 123$                                           & $19$  \\ \hline
				$19$  & $123$                        & $2345{+}123{+}24{+}35$                     & 14           & $2345{+}24{+}35$                                         & $4$  \\ \hline
				$20$  & $123{+}45$                   & $2345{+}123{+}24{+}35$                     & 14           & $2345{+}24{+}35{+}45 \sim 2345{+}24{+}35$                & $4$  \\ \hline
				$21$  & $123{+}14$                   & $123{+}14{+}25$                            & 22           & $25 \sim 12$                                             & $27$  \\ \hline
				$22^2$& $123{+}14{+}25$              & $123{+}14{+}25$                            & 22           & $0$                                                      & $0$  \\ \hline
				$23$  & $123{+}145$                  & $123{+}14{+}25$                            & 22           & $145{+}14{+}25 \sim 145{+}25 \sim 123{+}14$              & $21$  \\ \hline
				$24$  & $123{+}145{+}23$             & $23{+}45$                                  & 28           & $123{+}145{+}45 \sim 123{+}145{+}23$                     & $24$  \\ \hline
				$25$  & $123{+}145{+}24$             & $123{+}15{+}24$                            & 22           & $145{+}15 \sim 123$                                      & $19$  \\ \hline
				$26^3$& $123{+}145{+}45{+}24{+}35$   & $123{+}145{+}45{+}24{+}35$                 & 26           & $0$                                                      & $0$  \\ \hline
				$27$  & $12$                         & $12{+}34$                                  & 28           & $34 \sim 12$                                             & $27$  \\ \hline
				$28^4$& $12{+}34$                    & $12{+}34$                                  & 28           & $0$                                                      & $0$  \\ \hline
			\end{tabular}
		\end{adjustbox}
		\caption{Table of even weight coset classes of $\mathcal{R}_{1,5}$. Classes marked with a superscript are the classes which constitute $\widehat{\mathcal{R}}_{1,5}$. $C(\cdot)$ denotes the No of the class the function belongs to. Functions in the table are presented in an abbreviated notation: the number $i_1 i_2 \ldots i_k$ stands for the monomial $x_{i_1} x_{i_2} \ldots x_{i_k}$. For example, the representative function for the class 14 is $x_2 x_3 x_4 x_5 + x_1 x_2 x_3 + x_2 x_4 + x_3 x_5$.} \label{table:RM15Cosets}
	\end{table}
	
	Since $\mathcal{R}_{1,5}$ is linear, it follows that $\rho(\widehat{\mathcal{R}}_{1,5}) = \rho(\mathcal{R}_{1,5}) = 12$ and $f\in \widehat{\widehat{\mathcal{R}}}_{1,5}$ if and only if $f + \widehat{\mathcal{R}}_{1,5} = \widehat{\mathcal{R}}_{1,5}$.
	Thus, in order to establish the metric regularity of $\mathcal{R}_{1,5}$, we have to prove that for every $f\notin \mathcal{R}_{1,5}$ it holds $f + \widehat{\mathcal{R}}_{1,5} \neq \widehat{\mathcal{R}}_{1,5}$.
	
	Since the covering radius of the code is even, the second metric complement of $\mathcal{R}_{1,5}$ can consist only of the cosets with codewords of even weight. There are 29 classes of such cosets, including the $\mathcal{R}_{1,5}$ code itself; they are listed in Table~\ref{table:RM15Cosets}. Classes marked with a superscript are those which constitute $\widehat{\mathcal{R}}_{1,5}$. The classification is taken from the paper by Berlekamp and Welch \cite{BW72}, but in this table some of the class representatives were modified from their original variants using simple variable swaps (for the original representatives the reader is referred to Table~\ref{table:RM1528Cosets} in the appendix of the paper).
	
	Let us show that only the $\mathcal{R}_{1,5}$ code itself is contained in the second metric complement. Let $f_{c}\notin \mathcal{R}_{1,5}$ be a function from a certain coset equivalence class $C$, and assume that the function $f_{c} + g_{c}$, where $g_{c}\in \widehat{\mathcal{R}}_{1,5}$, does not belong to any of the 4 equivalence classes from the complement $\widehat{\mathcal{R}}_{1,5}$. This implies that $f_{c} + \widehat{\mathcal{R}}_{1,5} \neq \widehat{\mathcal{R}}_{1,5}$ and thus $f_{c}$ is not in the second metric complement.
	
	Let now $f\notin \mathcal{R}_{1,5}$ be an arbitrary function from the class $C$, and let $(\mathrm{A,b},h)$ be the matrix, the vector and the affine function such that
	\begin{equation*}
	f\circ L_{\mathrm{A}}^{\mathrm{b}} + h = f_{c}.
	\end{equation*}
	Denote
	\begin{equation*}
	g_f = (g_c+h)\circ (L_{\mathrm{A}}^{\mathrm{b}})^{-1}.
	\end{equation*}
	Then the function $f + g_f$ is $\mathrm{EA}$-equivalent to $f_{c} + g_{c}$ and therefore does not belong to $\widehat{\mathcal{R}}_{1,5}$. Since $(L_{\mathrm{A}}^{\mathrm{b}})^{-1} = L_{\mathrm{A}^{-1}}^{\mathrm{A^{-1}b}}$, $g_f$ belongs to $\widehat{\mathcal{R}}_{1,5}$ and therefore $f + \widehat{\mathcal{R}}_{1,5} \neq \widehat{\mathcal{R}}_{1,5}$, which means that $f\notin \widehat{\widehat{\mathcal{R}}}_{1,5}$.
	
	Thus, if we prove that $f + g \notin \widehat{\mathcal{R}}_{1,5}$ for some $f\in C$ and some $g\in \widehat{\mathcal{R}}_{1,5}$, we will prove that no function from the equivalence class $C$ is in the second metric complement.
	
	The proof can be found in Table~\ref{table:RM15Cosets}: for a representative $f$ from each even weight coset class we find a function $g\in \widehat{\mathcal{R}}_{1,5}$ such that $f+g$ is equivalent to the representative of some class which is not in $\widehat{\mathcal{R}}_{1,5}$. Thus, the second metric complement $\widehat{\widehat{\mathcal{R}}}_{1,5}$ contains only the code $\mathcal{R}_{1,5}$ itself, proving that $\mathcal{R}_{1,5}$ is metrically regular.
	
\end{proof}

Almost all equivalences presented in the fifth column of Table 1 are variable swaps or simple additions of the form $x_i \rightarrow x_i + 1$, $x_i \rightarrow x_i + x_j$ or (for the class $20$) $x_i \rightarrow x_i + x_j + x_k$ for certain $i,j,k$.

\section{The Reed-Muller codes of orders $0$, $m$, $m-1$ and $m-2$}

The Reed-Muller codes of orders $0$, $m$ and $m-1$ coincide with the repetition code, the whole space and the even weight code respectively. It is trivial that all of them are metrically regular.

The covering radius of the Reed-Muller code of order $m-2$ is equal to $2$ \cite{COH97}. By definition, this code consists of all Boolean functions of degree at most $m-2$. Since functions of degree $m$ have odd weights, while functions of smaller degree have even weights, functions of degree $m$ are at distance $1$ from $\mathcal{R}_{m-2}$, while functions of degree $m-1$ are at distance $2$ and therefore
\begin{equation*}
\widehat{\mathcal{R}}_{m-2} = \mathcal{R}_{m-1}\setminus \mathcal{R}_{m-2}.
\end{equation*}
Since $\mathcal{R}_{m-2}$ is linear, $\rho(\widehat{\mathcal{R}}_{m-2}) = \rho(\mathcal{R}_{m-2}) = 2$ and thus functions of degree $m$ are at distance $1$ from $\widehat{\mathcal{R}}_{m-2}$. It follows that $\widehat{\widehat{\mathcal{R}}}_{m-2} = \mathcal{R}_{m-2}$ and $\mathcal{R}_{m-2}$ is metrically regular.

\section{The Reed-Muller codes of order $m-3$: Syndrome matrices} \label{SynMethSection}

McLoughlin \cite{MCL79} has proved that
\begin{equation*}
    \rho(\mathcal{R}_{m-3}) =
    \begin{cases}
        m+1, & \text{if $m$ is odd,} \\
        m+2, & \text{if $m$ is even.}
    \end{cases}
\end{equation*}
We are going to reestablish this result following the book ``Covering codes'' by Cohen et al., since our new results that follow rely on the methods and terminology described in the book. In particular, we will describe the method of obtaining the covering radius of $\mathcal{R}_{m-3}$ using syndrome matrices as it is presented in the book, with few minor adjustments. After that we will proceed to study the metric complement of $\mathcal{R}_{m-3}$. Results in Section \ref{SynMethSection} and \ref{CovRadSection}, as well as general results concerning the covering radius of $\mathcal{R}_{m-3}$, belong to Cohen et al. \cite{COH97}, while all subsequent results concerning the metric complement and the metric regularity of the code have been obtained by the author.

Let us first consider the covering radius of the punctured Reed-Muller code $\mathcal{R}_{m-3}^{\circ}$, i.e., the code without the $0$-th coordinate (which corresponds to the value of the function at zero). Let $\mathbf{H}$ denote the parity check matrix of this code. The matrix $\mathbf{H}$ coincides with the parity check matrix of the non-punctured code $\mathcal{R}_{m-3}$, but with the first all-one row and the first column removed. Since $\mathcal{R}_{m-3}$ is dual to the code $\mathcal{R}_{2}$, the rows of $\mathbf{H}$ are punctured value vectors of the functions 
\begin{equation*}
x_1,\ldots,\,x_m,\,x_1x_2,\,x_1x_3,\ldots,\,x_{m-1}x_m.
\end{equation*}

The {\it syndrome} $\mathbf{s}$ of an arbitrary vector $\mathbf{v} \in \mathbb{F}_{2}^{n-1}$ is the product $\mathbf{Hv^T}$. Let us consider the syndrome $\mathbf{s}$ as an $m\times m$ symmetric matrix $\mathbf{S}$, where the element $s_{i,j}$ of the matrix is equal to the component of the syndrome corresponding to the row $x_ix_j$ of the parity check matrix $\mathbf{H}$, while the diagonal element $s_{i,i}$ is equal to the component of the syndrome corresponding to the row $x_i$ of the matrix $\mathbf{H}$. Thus we have built a one-to-one correspondence between all cosets of $\mathcal{R}_{m-3}^{\circ}$ and all symmetric binary matrices (``syndrome matrices'').

Let $\mathbf{e_1^{\circ}},\ldots\mathbf{e_m^{\circ}} \in \mathbb{F}_{2}^{n-1}$ be the punctured value vectors of the functions $x_1,\ldots,x_m$. Notice that the row of $\mathbf{H}$ corresponding to the function $x_ix_j$ is the componentwise product $\mathbf{e_i^{\circ}}\ast\mathbf{e_j^{\circ}}$.

Consider an $m\times (n-1)$ matrix $\mathbf{B_v}$ which has $\mathbf{e_i^{\circ}}\ast \mathbf{v}$ as its $i$-th row. Then the symmetric matrix $\mathbf{S_v} = \mathbf{B_vB_v^T}$ corresponds to the syndrome $\mathbf{Hv^T}$ of the vector $\mathbf{v}$. It is easy to see that if $f$ is a function with a punctured value vector equal to $\mathbf{v}$, then the set of nonzero columns of $\mathbf{B_v}$ is precisely the support of the function $f$ (bar, possibly, the all-zero vector).
The number of nonzero columns in $\mathbf{B_v}$ is equal to the weight of the vector $\mathbf{v}$.

Given an arbitrary vector $\mathbf{v} \in \mathbb{F}_{2}^{n-1}$, its distance from the code is equal to the weight of the coset leader:
\begin{equation*}
d(\mathbf{v},\mathcal{R}_{m-3}^{\circ}) = \min\limits_{\mathbf{u}: \mathbf{Hu^T} = \mathbf{Hv^T}} wt(\mathbf{u}).
\end{equation*}
Using the established correspondences between syndromes and symmetric matrices, we can rewrite this as follows:
\begin{equation*}
d(\mathbf{v},\mathcal{R}_{m-3}^{\circ}) = \min\limits_{\mathbf{u}: \mathbf{B_u B_u^T} = \mathbf{S_v}} Col(\mathbf{B_u}),
\end{equation*}
where $Col(\mathbf{B_u})$ is the number of nonzero columns in the matrix $\mathbf{B_u}$. Let us denote the minimum on the right by $t(\mathbf{S}):=\min\limits_{\mathbf{u}: \mathbf{B_u B_u^T} = \mathbf{S}} Col(\mathbf{B_u}).$
Then
\begin{equation*}
d(\mathbf{v},\mathcal{R}_{m-3}^{\circ}) = t(\mathbf{S_v}),
\end{equation*}
and, since the correspondence between all syndromes and all symmetric matrices is one-to-one, we have
\begin{equation*}
\rho(\mathcal{R}_{m-3}^{\circ}) = \max\limits_{\mathbf{v}}d(\mathbf{v},\mathcal{R}_{m-3}^{\circ}) = \max\limits_{\mathbf{S}} t(\mathbf{S}).
\end{equation*}
Moreover, a vector $\mathbf{v}$ is in the metric complement $\widehat{\mathcal{R}}_{m-3}^{\circ}$ if and only if $t(\mathbf{S_v}) = \rho(\mathcal{R}_{m-3}^{\circ})$.

Let us call any matrix $\mathbf{B}$ such that $\mathbf{BB^T} = \mathbf{S}$ a \textit{factor} of $\mathbf{S}$. We can thus describe the value $t(\mathbf{S})$ as \textit{the minimum number of nonzero columns in a factor over all factors of $\mathbf{S}$ of the form $\mathbf{B_u}$, where $\mathbf{u}\in \mathbb{F}_2^{n-1}$}. We will call any factor achieving this minimum a \textit{minimal factor}.

Let us now expand the definition of the value $t(\mathbf{S})$.

\begin{restatable}{lemma}{lemmaone}
    Let $\mathbf{S}$ be a symmetric matrix, and let $\mathbf{B}$ be its factor (i.e. $\mathbf{B B^T} = \mathbf{S}$).
    The following operations do not change the property of $\mathbf{B}$ being a factor of $\mathbf{S}$:
    \begin{enumerate}
      \item {deleting a zero column;}
      \item {deleting two equal columns;}
      \item {swapping any two columns;}
      \item {adding an arbitrary vector $b$ to each column from some subset of columns of $\mathbf{B}$ of even size, given that all columns of this subset sum to zero.}
    \end{enumerate}
\end{restatable}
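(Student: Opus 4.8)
The plan is to use the expansion of $\mathbf{B}\mathbf{B}^T$ as a sum of outer products of the columns of $\mathbf{B}$. Write $\mathbf{B}$ with columns $\mathbf{c}_1,\dots,\mathbf{c}_k \in \mathbb{F}_2^m$; then $\mathbf{B}\mathbf{B}^T = \sum_{j=1}^k \mathbf{c}_j\mathbf{c}_j^T$, all arithmetic over $\mathbb{F}_2$. Since this sum is symmetric in the columns and each summand depends only on a single column, operations 1--3 are immediate: a zero column contributes the zero matrix; two equal columns $\mathbf{c}_i=\mathbf{c}_j$ contribute $2\,\mathbf{c}_i\mathbf{c}_i^T=\mathbf{0}$; and reordering columns merely permutes the summands. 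Hence each of operations 1--3 leaves the sum $\sum_j \mathbf{c}_j\mathbf{c}_j^T$, and therefore the identity $\mathbf{B}\mathbf{B}^T=\mathbf{S}$, unchanged.

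For operation 4, I would isolate the subset $T$ of columns being modified, say $|T|=2\ell$ with $\sum_{j\in T}\mathbf{c}_j = \mathbf{0}$, and the vector $b$ added to each of them. For $j\in T$ the new summand is $(\mathbf{c}_j+b)(\mathbf{c}_j+b)^T = \mathbf{c}_j\mathbf{c}_j^T + \mathbf{c}_j b^T + b\,\mathbf{c}_j^T + b\,b^T$. Summing over $j\in T$ yields $\sum_{j\in T}\mathbf{c}_j\mathbf{c}_j^T + \bigl(\sum_{j\in T}\mathbf{c}_j\bigr) b^T + b\bigl(\sum_{j\in T}\mathbf{c}_j\bigr)^T + |T|\, b\,b^T$. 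The two cross terms vanish because $\sum_{j\in T}\mathbf{c}_j=\mathbf{0}$, and the last term vanishes because $|T|$ is even, so the contribution of the columns in $T$ is again $\sum_{j\in T}\mathbf{c}_j\mathbf{c}_j^T$, while the columns outside $T$ are untouched. Therefore $\mathbf{B}\mathbf{B}^T$ is unchanged, so the modified matrix is still a factor of $\mathbf{S}$.

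There is no real obstacle here; the computation is routine once one writes $\mathbf{B}\mathbf{B}^T$ in outer-product form. The only point requiring care is tracking the two hypotheses in operation 4 --- evenness of $|T|$, which is what kills the $|T|\,b\,b^T$ term over $\mathbb{F}_2$, and the vanishing of $\sum_{j\in T}\mathbf{c}_j$, which is what kills the cross terms --- and observing that both are genuinely needed. I would also note in passing that operations 1 and 2 are the only ones that alter the number of columns, and they can only decrease it; this is precisely the feature that will let us use the lemma to bound $t(\mathbf{S})$ in the sequel.
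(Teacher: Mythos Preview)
Your proof is correct and is precisely the routine outer-product computation the paper has in mind; the paper itself simply writes ``The proof is routine and is left to the reader.'' There is nothing to add.
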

\begin{proof}
    The proof is routine and is left to the reader.
\end{proof}

Since the subsets of nonzero columns of matrices $\{\mathbf{B_u} : \mathbf{u}\in \mathbb{F}_2^{n-1}\}$ are precisely all possible subsets of nonzero columns of length $m$, Lemma 1 allows us to remove zero columns from allowed factors and ignore the possibility of duplicate columns and thus reformulate the definition of the value $t(\mathbf{S})$ in the following manner, allowing the use of arbitrarily-sized matrices:

\textit{The value $t(\mathbf{S})$ is equal to the minimum number of columns in a factor over all factors of $\mathbf{S}$. Any factor achieving this minimum is called a \textit{minimal factor} of $\mathbf{S}$.}

Moreover, any factor $\mathbf{B}$ of $\mathbf{S}$ corresponds to exactly one factor of the initial form $\mathbf{B_u}$ --- the factor with the set of nonzero columns coinciding with the set of nonzero columns of $\mathbf{B}$. Therefore, presenting any minimal factor for a symmetric matrix $\mathbf{S}$ allows us to obtain a coset leader $\mathbf{u}$ for the coset which this symmetric matrix represents.

\section{The Reed-Muller codes of order $m-3$: Covering radius} \label{CovRadSection}

In order to determine the covering radius of $\mathcal{R}_{m-3}^{\circ}$ we will now investigate the maximum possible value of $t(\mathbf{S})$. Obviously,
\begin{equation*}
t(\mathbf{S}) \geqslant \min\limits_{\mathbf{B}:\mathbf{BB^T} = \mathbf{S}}\rank(\mathbf{B}) \geqslant \rank(\mathbf{S})
\end{equation*}
for any matrix $\mathbf{S}$, and therefore
\begin{equation*}
\max\limits_{\mathbf{S}} t(\mathbf{S}) \geqslant m.
\end{equation*}

This gives us a trivial lower bound. The following proposition provides a simple upper bound:

\begin{restatable}{lemma}{lemmatwo}
	Let $\mathbf{S}$ be a symmetric matrix, and let $\mathbf{B}$ be its minimal factor. Then all proper subsets of columns of $\mathbf{B}$ are linearly independent.
\end{restatable}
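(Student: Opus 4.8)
The plan is to prove the contrapositive in the form: if a factor $\mathbf{B}$ of $\mathbf{S}$ has a proper subset of columns that is linearly \emph{dependent}, then $\mathbf{B}$ is not minimal, because one can exhibit a factor of $\mathbf{S}$ with strictly fewer columns. Write the columns of $\mathbf{B}$ as $c_1,\dots,c_N$, so that $\mathbf{S}=\mathbf{B}\mathbf{B}^T=\sum_{i=1}^N c_ic_i^T$. A proper dependent subset means, over $\mathbb{F}_2$, that there is a nonempty index set $T\subsetneq\{1,\dots,N\}$ with $\sum_{i\in T}c_i=0$; since $T$ is proper, I may fix some index $j\notin T$. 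The construction then splits on the parity of $|T|$.

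If $|T|$ is even, I would appeal to operation~(4) of Lemma~1 with the vector $c_{i_0}$ for an arbitrary $i_0\in T$: the columns indexed by $T$ form an even-sized family summing to zero, so adding $c_{i_0}$ to each of them preserves the property of being a factor of $\mathbf{S}$; this turns the column $c_{i_0}$ into a zero column, which operation~(1) then lets me delete. The outcome is a factor of $\mathbf{S}$ with $N-1$ columns.

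If $|T|$ is odd, operation~(4) is not directly available, and this is the step I expect to be the crux. The remedy is to spend the ``free'' column $c_j$: form $\mathbf{B}'$ by replacing each column $c_i$ with $i\in T$ by $c_i+c_j$ and simultaneously deleting the column $c_j$. A one-line expansion of $\sum_{i\in T}(c_i+c_j)(c_i+c_j)^T$ shows that the cross terms vanish because $\sum_{i\in T}c_i=0$, leaving $\sum_{i\in T}c_ic_i^T+|T|\,c_jc_j^T$; hence $\mathbf{B}'\mathbf{B}'^T=\mathbf{S}+(|T|-1)\,c_jc_j^T=\mathbf{S}$, since $|T|-1$ is even. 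Thus $\mathbf{B}'$ is again a factor of $\mathbf{S}$, now with $N-1$ columns.

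In either case minimality of $\mathbf{B}$ is contradicted, which proves the lemma. As a consistency remark one can note that the degenerate possibilities $|T|\le 2$ never actually arise for a minimal factor: a zero column or a pair of equal columns could be removed using operations~(1) and~(2) of Lemma~1, so any dependent subset of a minimal factor already has size at least $3$. The only idea here that goes beyond Lemma~1 is the odd-size reduction, and it is precisely the properness of $T$ that supplies the extra column $c_j$ needed to repair the parity.
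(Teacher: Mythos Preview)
Your proof is correct. The paper does not actually give its own argument for this lemma---it simply refers the reader to \cite{COH97}, pp.~249--250---so there is no in-paper proof to compare against; your contrapositive argument with the parity split on $|T|$ is the natural one, and both the even case (via operation~(4) then~(1) of Lemma~1) and the odd case (via the direct rank-one expansion, using the extra column $c_j$ supplied by properness) are verified correctly.
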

\begin{proof}
	See \cite{COH97}, pp.~249--250.
	
	
	
\end{proof}

\begin{corollary}
	$t(\mathbf{S}) \leqslant m + 1$ for any symmetric $m\times m$ matrix $\mathbf{S}$.
\end{corollary}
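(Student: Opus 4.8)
The plan is to read off the bound directly from Lemma~2. Let $\mathbf{B}$ be a minimal factor of $\mathbf{S}$, so that $\mathbf{B}$ has exactly $t(\mathbf{S})$ columns, each of which is a vector in $\mathbb{F}_2^m$. Suppose for contradiction that $t(\mathbf{S}) \geqslant m+2$. Then any collection of $t(\mathbf{S}) - 1 \geqslant m+1$ of its columns forms a \emph{proper} subset of the columns of $\mathbf{B}$ (it omits at least one column, and it is nonempty), so by Lemma~2 this subset must be linearly independent. But a set of $m+1$ (or more) vectors in the $m$-dimensional space $\mathbb{F}_2^m$ is necessarily linearly dependent, a contradiction. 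Hence $t(\mathbf{S}) \leqslant m+1$.

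The only point that needs a word of care is the degenerate cases $t(\mathbf{S}) \leqslant 1$, where there are no proper subsets of size $\geqslant m+1$ to invoke; but there the inequality $t(\mathbf{S}) \leqslant m+1$ is immediate, so no separate argument is required. I do not anticipate any real obstacle here: the entire content is already packaged in Lemma~2, and the corollary is just the observation that "all proper subsets linearly independent" is impossible once the total column count exceeds $m+1$, since a proper subset can then be taken large enough to violate the dimension bound. Combined with the trivial lower bound $\max_{\mathbf{S}} t(\mathbf{S}) \geqslant m$ established just above, this pins the covering radius of $\mathcal{R}_{m-3}^{\circ}$ to within one, which is exactly what is needed to set up the finer analysis (parity/rank considerations) that distinguishes the odd and even cases of McLoughlin's formula in what follows.
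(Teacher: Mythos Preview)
Your proof is correct and follows essentially the same argument as the paper: assume $t(\mathbf{S}) \geqslant m+2$, take a minimal factor $\mathbf{B}$, and note that a proper subset of $m+1$ of its columns would have to be linearly independent by Lemma~2 yet cannot be in $\mathbb{F}_2^m$. The paper states this in one sentence, and your additional remarks about degenerate cases and the surrounding context are accurate but not needed for the argument itself.
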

\begin{proof}
	Assume that for some symmetric matrix $\mathbf{S}$ it holds $t(\mathbf{S}) \geqslant m+2$. This means that any minimal factor $\mathbf{B}$ of $\mathbf{S}$ has at least $m+2$ columns and therefore contains a linearly dependent proper subset of columns, which contradicts Lemma 2.
\end{proof}

This bound, combined with the previous one, shows us that the largest value of $t(\mathbf{S})$ is either $m$ or $m+1$. The following result describes the matrices with the larger value of $t(\mathbf{S})$.

\begin{lemma}
	Let $\mathbf{S}$ be a symmetric matrix. Then $t(\mathbf{S}) = m+1$ if and only if $\rank(\mathbf{S}) = m$ and $\mathbf{S}$ has an all-zero diagonal.
\end{lemma}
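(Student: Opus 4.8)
The plan is to prove the two implications separately, using the bounds $\rank(\mathbf{S})\leqslant t(\mathbf{S})\leqslant m+1$ already available (the Corollary and the displayed inequality preceding it), Lemma~2, and one elementary observation: for any matrix $\mathbf{B}$ over $\mathbb{F}_2$ the diagonal of $\mathbf{B}\mathbf{B}^{\mathrm{T}}$ equals the sum of the columns of $\mathbf{B}$, since $(\mathbf{B}\mathbf{B}^{\mathrm{T}})_{ii}=\sum_{j}\mathbf{B}_{ij}^{2}=\sum_{j}\mathbf{B}_{ij}$. Call this fact $(\star)$; in particular $\mathbf{S}=\mathbf{B}\mathbf{B}^{\mathrm{T}}$ has an all-zero diagonal if and only if the columns of $\mathbf{B}$ sum to zero.

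\emph{The easy direction.} Suppose $\rank(\mathbf{S})=m$ and $\mathbf{S}$ has a zero diagonal. Then $t(\mathbf{S})\in\{m,m+1\}$, so it suffices to rule out an $m\times m$ factor $\mathbf{B}$. Such a $\mathbf{B}$ would be invertible, because $m=\rank(\mathbf{S})=\rank(\mathbf{B}\mathbf{B}^{\mathrm{T}})\leqslant\rank(\mathbf{B})\leqslant m$; but by $(\star)$ the zero diagonal forces the $m$ columns of $\mathbf{B}$ to sum to zero, hence to be linearly dependent --- a contradiction. Therefore $t(\mathbf{S})=m+1$.

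\emph{The other direction.} Suppose $t(\mathbf{S})=m+1$ and let $\mathbf{B}$ be a minimal factor, with columns $c_{1},\dots,c_{m+1}\in\mathbb{F}_2^{m}$. By Lemma~2 every proper subset of these columns is linearly independent, so any $m$ of them form a basis of $\mathbb{F}_2^{m}$. Writing $c_{m+1}=\sum_{i\in T}c_{i}$ for a unique $T\subseteq\{1,\dots,m\}$, if $T$ were proper then $\{c_{i}:i\in T\}\cup\{c_{m+1}\}$ would be a linearly dependent proper subset of the columns, contradicting Lemma~2; hence $c_{1}+\dots+c_{m+1}=0$, and by $(\star)$ the matrix $\mathbf{S}$ already has a zero diagonal. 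Moreover, letting $\mathbf{C}$ be the invertible matrix with columns $c_{1},\dots,c_{m}$, we have $c_{m+1}=\mathbf{C}\mathbf{1}$, so that $\mathbf{B}=[\,\mathbf{C}\mid\mathbf{C}\mathbf{1}\,]$ and hence
\begin{equation*}
\mathbf{S}=\mathbf{B}\mathbf{B}^{\mathrm{T}}=\mathbf{C}\mathbf{C}^{\mathrm{T}}+\mathbf{C}\mathbf{1}\mathbf{1}^{\mathrm{T}}\mathbf{C}^{\mathrm{T}}=\mathbf{C}\,(I_{m}+J_{m})\,\mathbf{C}^{\mathrm{T}},
\end{equation*}
where $J_{m}$ is the all-ones matrix. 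Thus $\mathbf{S}$ is congruent to $I_{m}+J_{m}$; in particular $\rank(\mathbf{S})=\rank(I_{m}+J_{m})$, and $t(\mathbf{S})=t(I_{m}+J_{m})$ because $\mathbf{B}\mapsto\mathbf{C}^{-1}\mathbf{B}$ is a column-count-preserving bijection between factors of $\mathbf{S}$ and factors of $I_{m}+J_{m}$.

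\emph{Finishing, and the main obstacle.} A short kernel computation gives $\rank(I_{m}+J_{m})=m$ when $m$ is even and $\rank(I_{m}+J_{m})=m-1$ when $m$ is odd; the even case finishes the proof. The subtle point, which I expect to require the most care, is the odd case, where one must derive a contradiction rather than the conclusion: the $m$ vectors $\mathbf{1}+e_{1},\dots,\mathbf{1}+e_{m}$ sum to zero and satisfy $\sum_{i}(\mathbf{1}+e_{i})(\mathbf{1}+e_{i})^{\mathrm{T}}=I_{m}+J_{m}$ (for odd $m$ the diagonal entries come out $\equiv 0$ and the off-diagonal entries $\equiv 1$ modulo $2$), so $t(I_{m}+J_{m})\leqslant m$, contradicting $t(\mathbf{S})=t(I_{m}+J_{m})=m+1$; hence $m$ is even and $\rank(\mathbf{S})=m$. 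The real crux of the whole argument is the rigidity extracted from Lemma~2 --- that a minimal factor with $m+1$ columns must have the block form $[\,\mathbf{C}\mid\mathbf{C}\mathbf{1}\,]$ with $\mathbf{C}$ invertible --- which pins $\mathbf{S}$ down to a single congruence class. (Alternatively one could invoke the classification of alternating bilinear forms over $\mathbb{F}_2$, $\mathbf{S}\cong H^{\oplus k}\oplus 0$ with $H=\bigl(\begin{smallmatrix}0&1\\1&0\end{smallmatrix}\bigr)$, and compare $t$ of the two sides, but the computation above keeps the argument self-contained.)
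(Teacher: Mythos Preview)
Your argument is correct, but it follows a genuinely different route from the paper in the $\Rightarrow$ direction. The paper argues by contradiction: assuming $\rank(\mathbf{S})<m$, it picks a nontrivial row dependency $\sum_{p}\mathbf{S}_{i_p}=\mathbf{0}$, translates it into the relation $\mathbf{b}\mathbf{B}^{\mathrm{T}}=\mathbf{0}$ for $\mathbf{b}=\sum_p\mathbf{B}_{i_p}$, and then splits into cases according to whether $\mathbf{b}$ is zero, nonzero but not all-ones, or all-ones, with the last case further split by the parity of $m$ (using Lemma~1, operation~4, to shrink the factor when $m$ is odd). You instead extract from Lemma~2 the rigid block form $\mathbf{B}=[\,\mathbf{C}\mid\mathbf{C}\mathbf{1}\,]$ with $\mathbf{C}$ invertible, deduce the congruence $\mathbf{S}=\mathbf{C}(I_m+J_m)\mathbf{C}^{\mathrm{T}}$, and then simply compute $\rank(I_m+J_m)$ and, for odd $m$, exhibit an explicit $m$-column factor of $I_m+J_m$ to reach a contradiction. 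Your approach is more structural --- it identifies the unique congruence class of matrices with $t(\mathbf{S})=m+1$ --- and avoids the case analysis and the appeal to Lemma~1; the paper's approach is more hands-on but stays closer to the raw definition of $t$. Both lean on the same key ingredient, the ``every proper subset independent'' consequence of Lemma~2.
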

\begin{proof}
	
	$\Longleftarrow$
	
	Assume that the matrix $\mathbf{S}$ is nonsingular and has an all-zero diagonal, and let $\mathbf{B}$ be any of its factors. Notice that the vector consisting of all diagonal entries of the matrix $\mathbf{S}$ is the sum of all columns of $\mathbf{B}$. Therefore all columns of $\mathbf{B}$ sum to zero, which means that all its nonzero columns form a linearly dependent set of vectors. Since $\rank(\mathbf{B}) \geqslant \rank(\mathbf{S}) = m$, the matrix $\mathbf{B}$ has at least $m+1$ nonzero columns and therefore $t(\mathbf{S}) = m+1$.
	
	$\Longrightarrow$
	
	Assume that $t(\mathbf{S}) = m+1$. Let $\mathbf{B}$ be a minimal factor of $\mathbf{S}$. Note that all proper subsets of columns of $\mathbf{B}$ are linearly independent by Lemma 2, which implies that all columns of $\mathbf{B}$ sum to zero, since $\mathbf{B}$ is an $m\times (m+1)$ matrix. Since the vector consisting of diagonal elements of $\mathbf{S}$ is the sum of all columns of $\mathbf{B}$, $\mathbf{S}$ has an all-zero diagonal.
	
	Assume that $\rank(\mathbf{S}) < m$. Then there exists a subset of rows in $\mathbf{S}$ summing to $\mathbf{0}$; we denote these rows by $\mathbf{S_{i_1},S_{i_2},\ldots,S_{i_p}}$. Since $\mathbf{S_i = B_i B^T}$, this implies
	\begin{equation*}
	\mathbf{(B_{i_1}+\ldots+B_{i_p})B^T} = \mathbf{0}.
	\end{equation*}
	
	Denote $\mathbf{b} = \mathbf{B_{i_1}+\ldots+B_{i_p}}$. From the above it follows that the sum of certain columns of $\mathbf{B}$ (those corresponding to the $1$'s in the vector $\mathbf{b}$) is equal to zero.
	
	If the vector $\mathbf{b}$ is zero, then $\rank(\mathbf{B}) < m$ and it must have a linearly dependent proper subset of columns, contradiction with Lemma 2.
	
	If it is nonzero and not an all-ones vector, then we obtain a proper subset of columns of $\mathbf{B}$ which sum to $0$, contradiction with Lemma 2.
	
	Assume that $\mathbf{b}$ is an all-ones vector, and so all columns of $\mathbf{B}$ sum to zero.
	
	If $m$ is even, then the number of columns in $\mathbf{B}$ is odd and therefore $\mathbf{bb^T} = 1$, which contradicts $\mathbf{bB^T} = \mathbf{0}$.
	
	If $m$ is odd, then the number of columns in $\mathbf{B}$ is even and all rows have an even number of ones, and, by Lemma 1, we can add any column of $\mathbf{B}$ to all its columns and then remove a zero column from the resulting matrix, keeping it a factor of $\mathbf{S}$, which contradicts the minimality of $\mathbf{B}$.
	
	Thus, $\rank(\mathbf{S})$ is equal to $m$.
	
\end{proof}

Note that a matrix $\mathbf{S}$ with the properties described in the lemma (nonsingular with an all-zero diagonal) exists if and only if $m$ is even (see e.g. \cite{COH97}, p.~249). This means that
\begin{equation*}
\max\limits_{\mathbf{S}} t(\mathbf{S}) = m + 1 - \pi(m),
\end{equation*}
where $\pi(m)$ is the parity function, equal to $1$ for odd $m$ and to $0$ for even $m$.

\section{The Reed-Muller codes of order $m-3$: $m$ is even}

\subsection{\textbf{The covering radius and the metric complement of the punctured code}}

Let the number of variables $m$ be even. From previous sections we have:
\begin{equation*}
\rho(\mathcal{R}_{m-3}^{\circ}) = \max\limits_{\mathbf{S}} t(\mathbf{S}) = m+1.
\end{equation*}
A vector $\mathbf{v}\in \mathbb{F}_2^{n-1}$ is in the metric complement of $\mathcal{R}_{m-3}^{\circ}$ if and only if $t(\mathbf{S_v}) = m+1$.
The following statements will help us to characterize the syndromes of such vectors:

\begin{restatable}{lemma}{lemmathree}
    Let $\mathbf{S}$ be a symmetric $m\times m$ matrix, $m$ even. Then $t(\mathbf{S}) = m+1$ if and only if $\mathbf{S}$ has a factor of rank $m$ with $m+1$ columns which sum to zero.
\end{restatable}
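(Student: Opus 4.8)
The plan is to prove the two implications separately. The ($\Rightarrow$) direction will follow almost immediately from Lemma 2, while for ($\Leftarrow$) the strategy is to show that the hypothesised factor forces $\rank(\mathbf{S}) = m$ and an all-zero diagonal, and then to invoke the preceding lemma (the characterisation $t(\mathbf{S}) = m+1 \iff \rank(\mathbf{S}) = m$ and $\mathbf{S}$ has an all-zero diagonal). The one computational fact used throughout is the column/diagonal correspondence already noted in the text: for any factor $\mathbf{B}$ of $\mathbf{S}$, the $i$-th diagonal entry of $\mathbf{S} = \mathbf{B}\mathbf{B}^T$ equals $\sum_k B_{ik}^2 = \sum_k B_{ik}$ over $\mathbb{F}_2$, so the diagonal of $\mathbf{S}$ is exactly the sum of all columns of $\mathbf{B}$. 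In particular $\mathbf{S}$ has an all-zero diagonal if and only if the columns of any (equivalently, of some) factor sum to zero.

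For ($\Rightarrow$), suppose $t(\mathbf{S}) = m+1$ and let $\mathbf{B}$ be a minimal factor, so $\mathbf{B}$ is an $m\times(m+1)$ matrix. By Lemma 2 every proper subset of its columns is linearly independent; applying this to any $m$ of the $m+1$ columns shows they span $\mathbb{F}_2^m$, so $\rank(\mathbf{B}) = m$. The $m+1$ columns lie in $\mathbb{F}_2^m$, hence some nonempty subset of them sums to zero; since no proper subset of columns can sum to zero by Lemma 2, that subset must consist of all $m+1$ columns. Thus $\mathbf{B}$ is a factor of rank $m$ with $m+1$ columns summing to zero, as required.

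For ($\Leftarrow$), let $\mathbf{B}$ be a factor of $\mathbf{S}$ of rank $m$ with $m+1$ columns summing to zero. By the diagonal observation, $\mathbf{S}$ has an all-zero diagonal, so by the preceding lemma it suffices to prove $\rank(\mathbf{S}) = m$, i.e. $\ker(\mathbf{S}) = \{\mathbf{0}\}$. Viewing $\mathbf{B}\colon \mathbb{F}_2^{m+1} \to \mathbb{F}_2^m$, rank $m$ means $\mathbf{B}$ is surjective, so $\ker(\mathbf{B})$ is one-dimensional; since the columns sum to zero, $\mathbf{1} \in \ker(\mathbf{B})$ and hence $\ker(\mathbf{B}) = \langle \mathbf{1}\rangle$. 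Dually, $\mathbf{B}^T\colon \mathbb{F}_2^m \to \mathbb{F}_2^{m+1}$ is injective with image $(\ker\mathbf{B})^\perp = \langle\mathbf{1}\rangle^\perp$, the set of even-weight vectors of $\mathbb{F}_2^{m+1}$. Now take $x \in \ker(\mathbf{S})$: then $\mathbf{B}(\mathbf{B}^T x) = \mathbf{0}$, so $\mathbf{B}^T x \in \ker(\mathbf{B}) = \{\mathbf{0},\mathbf{1}\}$; but $\mathbf{1}$ has weight $m+1$, which is odd because $m$ is even, so $\mathbf{1} \notin \operatorname{im}(\mathbf{B}^T)$, forcing $\mathbf{B}^T x = \mathbf{0}$ and hence $x = \mathbf{0}$ by injectivity. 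Therefore $\rank(\mathbf{S}) = m$, and the preceding lemma gives $t(\mathbf{S}) = m+1$.

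The main obstacle is the rank computation in the ($\Leftarrow$) direction: one must exclude $\rank(\mathbf{S}) = m-1$, and this is precisely where the parity hypothesis ($m$ even, equivalently $m+1$ odd) enters, via $\mathbf{1} \notin \operatorname{im}(\mathbf{B}^T)$. (Alternatively one could quote that a symmetric $\mathbb{F}_2$-matrix with zero diagonal has even rank, but the parity argument above is more elementary and keeps the proof self-contained.) Everything else is routine bookkeeping with the column/diagonal correspondence together with Lemma 2 and the preceding lemma.
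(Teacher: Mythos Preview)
Your proof is correct. Both your argument and the paper's ultimately reduce to the preceding characterisation ($t(\mathbf{S}) = m+1$ iff $\rank(\mathbf{S}) = m$ and zero diagonal), but the routes differ, chiefly in the ($\Leftarrow$) direction. The paper argues by contradiction: assuming $t(\mathbf{S}) \leqslant m$, it takes a minimal factor $\mathbf{D}$, observes that its columns also sum to zero (same diagonal), so $\rank(\mathbf{D}) < m$ and hence $\rank(\mathbf{S}) < m$; it then notes that the given $\mathbf{B}$ has every proper subset of columns linearly independent, and refers back to the case analysis inside the proof of the preceding lemma (which, for even $m$, does not rely on minimality of the factor) to obtain a contradiction. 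You instead prove $\rank(\mathbf{S}) = m$ directly via a clean kernel/image computation: $\ker(\mathbf{B}) = \langle\mathbf{1}\rangle$, $\operatorname{im}(\mathbf{B}^T) = \langle\mathbf{1}\rangle^\perp$, and the parity of $m$ prevents $\mathbf{1}$ from lying in the image. Your version is more self-contained and makes the role of the hypothesis ``$m$ even'' explicit at a single point; the paper's version is economical in that it reuses work already done. For ($\Rightarrow$) the difference is minor: you apply Lemma~2 to the minimal factor directly, while the paper first invokes the preceding lemma to get $\rank(\mathbf{S}) = m$ and zero diagonal and then reads off the properties of $\mathbf{B}$.
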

\begin{proof}
	$\Longrightarrow$
	
	Assume that $t(\mathbf{S}) = m+1$ and let $\mathbf{B}$ be an arbitrary minimal factor of $\mathbf{S}$. By Proposition 1, $\rank(\mathbf{S}) = m$ and $\mathbf{S}$ has an all-zero diagonal. Therefore, $\rank(\mathbf{B}) = m$ and all its columns sum to zero.
	
	$\Longleftarrow$
	
	Let $\mathbf{B}$ be a factor of $\mathbf{S}$ of rank $m$ with $m+1$ columns which sum to $\mathbf{0}$.
	
	Assume that $t(\mathbf{S}) = k \leqslant m$ and let $\mathbf{D}$ be an arbitrary minimal factor of $\mathbf{S}$ with $k$ columns. Since the sum of all columns of a factor is the vector consisting of the diagonal elements of $\mathbf{S}$, the sum of all columns of $\mathbf{D}$ is also equal to zero. This implies that $\rank(\mathbf{D}) < m$, and therefore $\rank(\mathbf{S}) < m$.
	
	It is easy to see that each proper subset of columns of $\mathbf{B}$ is linearly independent. Notice that the existence of a factor with this property is shown to contradict with the assumption ``$\rank(\mathbf{S}) < m$'' in the proof of Proposition 1, and in the case when $m$ is even the proof does not rely on the minimality of $\mathbf{B}$.
	
	Thus, $t(\mathbf{S}) = m+1$ and $\mathbf{B}$ is a minimal factor of $\mathbf{S}$.
	
\end{proof}

It is easy to see that Lemma 4 describes all minimal factors of all matrices $\mathbf{S}$ satisfying $t(\mathbf{S}) = m+1$. Let us construct the following set:
\begin{multline*}
  \mathrm{U} = \{\mathbf{u}\in\mathbb{F}_2^{n-1} : \mathbf{B_u} \text{ has $m+1$ nonzero columns, $m$ of which are} \\
  \text{linearly independent and all of them sum to zero}\}.
\end{multline*}
Trivially, the set of matrices $\{\mathbf{B_u} : \mathbf{u}\in \mathrm{U}\}$ (up to columns permutations and zero columns removal) includes exactly all minimal factors described in Lemma 4. Therefore, if $t(\mathbf{S}) = m+1$ for some matrix $\mathbf{S}$, then there exists a vector $\mathbf{u}\in \mathrm{U}$ such that $\mathbf{S} = \mathbf{B_u B_u^T}$. Conversely, for any $\mathbf{u}\in \mathrm{U}$ it holds $t(\mathbf{B_u B_u^T}) = m+1$. Thus, the vectors from the set $\mathrm{U}$ cover all cosets contained in the metric complement of $\mathcal{R}^{\circ}_{m-3}$:
\begin{equation*}
\widehat{\mathcal{R}}^{\circ}_{m-3} = \bigcup\limits_{\mathbf{u}\in \mathrm{U}} \left(\mathbf{u} + \mathcal{R}_{m-3}^{\circ}\right).
\end{equation*}

\subsection{\textbf{The covering radius and the metric complement of the non-punctured code}}

We have obtained the covering radius and described the metric complement of the punctured code. Let us return to the regular, non-punctured Reed-Muller code $\mathcal{R}_{m-3}$. Since it is obtained from the punctured code by adding a parity check bit at the $0$-th coordinate, the following result will be of use:
\begin{restatable}{lemma}{lemmafour}
    Let $C$ be a code with the covering radius $r$ and the metric complement $\widehat{C}$. Let $C_{\pi}$ be the code obtained from $C$ by adding a parity check bit to all codewords of $C$ (in the front). Then $\rho(C_{\pi}) = r+1$ and $\widehat{C}_{\pi}$ is obtained from $\widehat{C}$ by
    \begin{enumerate}
      \item{adding a parity check bit to all vectors in case if $r$ is odd or}
      \item{adding an inversed parity check bit to all vectors in case if $r$ is even.}
    \end{enumerate}
\end{restatable}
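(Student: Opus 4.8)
The plan is to obtain a closed formula for $d(y,C_{\pi})$ and then read off both assertions directly. Write an arbitrary $y\in\mathbb{F}_2^{n+1}$ as $y=(a,x)$ with $a\in\mathbb{F}_2$ and $x\in\mathbb{F}_2^{n}$, and for any binary vector $z$ put $p(z):=wt(z)\bmod 2$, so that $C_{\pi}=\{(p(c),c):c\in C\}$. For each $c\in C$ the first coordinate contributes $1$ exactly when $a\ne p(c)$, hence
\begin{equation*}
d\bigl((a,x),(p(c),c)\bigr)=[a\ne p(c)]+d(x,c),
\end{equation*}
and the elementary identity $d(x,c)\equiv p(x)+p(c)\pmod 2$ shows that whether the two prepended bits agree depends only on $a$, on $p(x)$, and on the parity of $d(x,c)$.

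Setting $\varepsilon:=(a+p(x))\bmod 2$, a short case analysis on the parity of $d(x,c)$ shows that $[a\ne p(c)]+d(x,c)$ equals $d(x,c)$ rounded up to the nearest even integer when $\varepsilon=0$, and rounded up to the nearest odd integer when $\varepsilon=1$. Denote the corresponding maps $\mathbb{Z}_{\geqslant 0}\to\mathbb{Z}_{\geqslant 0}$ by $R_{0}$ and $R_{1}$; each is non-decreasing and satisfies $R_{i}(k)\in\{k,k+1\}$, so the minimum over $c\in C$ passes through it, giving
\begin{equation*}
d\bigl((a,x),C_{\pi}\bigr)=R_{\varepsilon}\bigl(d(x,C)\bigr),\qquad\varepsilon=(a+p(x))\bmod 2.
\end{equation*}

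The rest is parity bookkeeping. Since $R_{\varepsilon}(k)\leqslant k+1$ and $d(x,C)\leqslant r$ for every $x$, we get $d(y,C_{\pi})\leqslant r+1$; and for $x\in\widehat{C}$ (so that $d(x,C)=r$) one can choose $a$ so that $\varepsilon\not\equiv r\pmod 2$, which forces $R_{\varepsilon}(r)=r+1$. Hence $\rho(C_{\pi})=r+1$. For the metric complement, $d((a,x),C_{\pi})=r+1$ forces both $d(x,C)=r$, i.e.\ $x\in\widehat{C}$, and $R_{\varepsilon}(r)=r+1$, and the latter holds exactly when $\varepsilon\equiv r+1\pmod 2$, i.e.\ $a\equiv p(x)+r+1\pmod 2$. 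If $r$ is odd this says $a=p(x)$, so $\widehat{C_{\pi}}$ is $\widehat{C}$ with the ordinary parity bit prepended; if $r$ is even it says $a=p(x)+1$, so $\widehat{C_{\pi}}$ is $\widehat{C}$ with the inverted parity bit prepended. This is exactly the dichotomy in the statement.

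I do not expect a genuine obstacle, since the argument is essentially a careful parity computation. The two points that need attention are the justification that $\min_{c}$ commutes with the rounding maps $R_{0},R_{1}$ — which relies only on their being non-decreasing, together with the finiteness of $C$ so that the minima are attained — and keeping the two conventions straight ($r$ even versus $r$ odd, and parity bit versus inverted parity bit) so that the two cases of the conclusion are matched correctly.
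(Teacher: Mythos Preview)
Your argument is correct. The key identity $d\bigl((a,x),C_{\pi}\bigr)=R_{\varepsilon}\bigl(d(x,C)\bigr)$ is valid, and the commutation of the minimum with $R_{\varepsilon}$ is indeed justified by monotonicity of $R_{\varepsilon}$ together with the fact that the minimum is attained.

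Compared with the paper, the underlying parity observation is the same, but the packaging differs. The paper treats only the case $r$ even explicitly: it partitions $C$ and $\widehat{C}$ by weight parity into $C_0,C_1$ and $\widehat{C}_0,\widehat{C}_1$, observes that a vector of $\widehat{C}_0$ attains distance $r$ only to $C_0$ (and strictly larger to $C_1$), and then checks the four combinations $(\epsilon,c)$ with $\epsilon\in\{0,1\}$ and $c\in\widehat{C}_0\cup\widehat{C}_1$ by hand; the odd-$r$ case is declared analogous. Your closed formula via the rounding maps $R_0,R_1$ absorbs this case split into a single parameter $\varepsilon=a+p(x)$ and treats both parities of $r$ uniformly, which is cleaner and also makes the description of $\widehat{C_{\pi}}$ fall out in one line. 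The trade-off is that the paper's bare-hands version requires no auxiliary notation, while yours front-loads a small amount of bookkeeping (verifying the rounding description and the min-commutation) in exchange for a shorter endgame.
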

\begin{proof}
	Obviously, $\rho(C_{\pi}) \leqslant r+1$. 
	
	Let us prove (2). Assume that $r$ is even. Denote
	\begin{equation*}
	C_i = \{c\in C: wt(c) \bmod 2 = i \},\,\,\widehat{C}_i = \{c\in \widehat{C}: wt(c) \bmod 2 = i \},\,\,i=0,1.
	\end{equation*}
	Since $r$ is even, vectors from $\widehat{C}_0$ are at distance $r$ from $C_0$ and at a larger distance from $C_1$. Similarly, vectors from $\widehat{C}_1$ are at distance $r$ from $C_1$ and at a larger distance from $C_0$.
	
	Let $c' = (\epsilon, c)$, where $c\notin \widehat{C}$ and $\epsilon \in \{0,1\}$. Then $d(c', C_{\pi}) \leqslant d(c, C) + 1 \leqslant r$.
	
	Let $c \in \widehat{C}_1$. Then $d((1,c), C_{\pi}) = \min(d(c, C_1), d(c, C_0)+1) = r$, while $d((0,c), C_{\pi}) = \min(d(c, C_1)+1, d(c, C_0)) > r$.
	
	Let $c \in \widehat{C}_0$. Similarly, $d((1,c), C_{\pi}) = \min(d(c, C_1), d(c, C_0)+1) > r$, while $d((0,c), C_{\pi}) = \min(d(c, C_1)+1, d(c, C_0)) = r$.
	
	Therefore, vectors $\{(1,c) | c \in \widehat{C}_0\} \cup \{(0,c) | c \in \widehat{C}_1\}$ are the only ones at a distance larger than $r$ from $C_{\pi}$, and this distance can be only equal to $r+1$. The claim (2) of the lemma is proved.
	
	The proof of the case (1) is completely similar to the above, but with some sets switched around.
	
\end{proof}
Using this lemma we find that the covering radius of the non-punctured Reed-Muller code $\mathcal{R}_{m-3}$ is equal to $m+2$ and its metric complement can be described as follows:
\begin{equation*}
\widehat{\mathcal{R}}_{m-3} = \bigcup\limits_{\mathbf{u}\in \mathrm{U}} \left((\pi(\mathbf{u}),\mathbf{u}) + \mathcal{R}_{m-3}\right).
\end{equation*}

Let $f_{\mathbf{v}}$ denote the function with the value vector $\mathbf{v}\in\mathbb{F}_2^n$ (non-punctured). Recall that the set of nonzero columns of the matrix $\mathbf{B_{v^{\circ}}}$ coincides with the support of the function $f_{\mathbf{v}}$, bar, possibly, the zero vector. Since all vectors in $\mathrm{U}$ have odd weights and added parity check bit corresponds to the value of the function at the all-zero vector, we can describe the metric complement of $\mathcal{R}_{m-3}$ in terms of functions instead of their value vectors as follows:
\begin{equation*}
\widehat{\mathcal{R}}_{m-3} = \bigcup\limits_{g\in G} \left(g + \mathcal{R}_{m-3}\right),
\end{equation*}
where
\begin{multline*}G = \{f_{(1,\mathbf{u)}} : \mathbf{u}\in \mathrm{U}\} = \{g : \supp(g) = \{\mathrm{0},\mathrm{x}_1,\mathrm{x}_2\ldots,\mathrm{x}_m,\mathrm{x}_1+\ldots+\mathrm{x}_m\}, \\
\{\mathrm{x}_1,\ldots,\mathrm{x}_m\} \text{ are linearly independent}\}.
\end{multline*}

All functions in $G$ form an equivalence class with respect to the linear equivalence. Recall that two functions $f$ and $g$ are called {\it $\mathrm{EL}^{k}$-equivalent} if there exists a nonsingular binary matrix $\mathrm{A}$ and a function $h$ of degree at most $k$ such that $g = f\circ L_{\mathrm{A}} + h$. It is now easy to see that a function $g$ is in $\widehat{\mathcal{R}}_{m-3}$ if and only if it is $\mathrm{EL}^{m-3}$-equivalent to some function from $G$. Since all functions in the metric complement are equivalent, we can pick any function from it as the reference for equivalence (and we will change this reference when it is convenient). We will call the $\mathrm{EL}^{m-3}$-equivalence just ``equivalence'' for brevity from now on.

Let us give an explicit (algebraic normal form) description of a certain function from $G$.
Denote by $g^*$ the function with the support $\{\mathrm{0},\mathrm{e}_1,\mathrm{e}_2,\ldots,\mathrm{e}_m,\mathrm{1}\}$, where $\mathrm{e}_i\in \mathbb{F}_2^m$ is the vector with $1$ only in the $i$-th coordinate. Clearly, $g^* \in G$ and it is straightforward to construct the algebraic normal form of this function: it is the sum of all monomials containing an even number of variables, excluding the monomial with all variables included:
\begin{equation*}
g^*(\mathrm{x}) = 1 + \sum\limits_{k = 1}^{\frac{m}{2}-1} \sum\limits_{1\leqslant i_1<\ldots < i_{2k}\leqslant m} x_{i_1} x_{i_2}\ldots x_{i_{2k}}.
\end{equation*}
This function is equivalent to the sum of all monomials containing $m-2$ variables, so let us use this last function as $g^*$ moving forward.
Let $\overline{x_i}$ denote the product of all $m$ variables except $x_i$, and let $\overline{x_i x_j}$ denote the product of all $m$ variables except $x_i$ and $x_j$. Using these conventions, we can write this new representative function as follows:
\begin{equation*}
g^*(\mathrm{x}) := \sum\limits_{1\leqslant i < j\leqslant m} \overline{x_i x_j}.
\end{equation*}

\subsection{\textbf{Metric regularity}}

We have established that
\begin{equation*}
\widehat{\mathcal{R}}_{m-3} = \{g : g\eleq{m-3} g'\},
\end{equation*}
where $g'$ is an arbitrary function from the class $G$ (or from $\widehat{\mathcal{R}}_{m-3}$), and have constructed a certain representative of this equivalence class --- $g^*$.

Since the code $\mathcal{R}_{m-3}$ is linear, $\rho(\widehat{\mathcal{R}}_{m-3}) = \rho(\mathcal{R}_{m-3}) = m+2$ and a function $f$ is in $\widehat{\widehat{\mathcal{R}}}_{m-3}$ if and only if $f + \widehat{\mathcal{R}}_{m-3} = \widehat{\mathcal{R}}_{m-3}$.
Let us prove the metric regularity of $\mathcal{R}_{m-3}$ by proving that no functions other that those contained in $\mathcal{R}_{m-3}$ preserve the metric complement under addition.

Let $f\notin \mathcal{R}_{m-3}$ be an arbitrary function. Since $\widehat{\mathcal{R}}_{m-3}$ is an $\mathrm{EL}^{m-3}$-equivalence class, in order to show that $f + \widehat{\mathcal{R}}_{m-3} \neq \widehat{\mathcal{R}}_{m-3}$ it is enough to show there exists a function $f'$ such that $f' \eleq{m-3} f$ and $f' + \widehat{\mathcal{R}}_{m-3} \neq \widehat{\mathcal{R}}_{m-3}$.

\textbf{Case 1.} Let $f\notin \mathcal{R}_{m-3}$ be a function of degree greater than $m-2$. Since $\mathrm{EL}^{m-3}$-equivalence preserves the degree for functions of degree higher than $m-3$, any $g\in\widehat{\mathcal{R}}_{m-3}$ has degree $m-2$ (like $g^*$), while $f+g$ has a higher degree and therefore cannot be equivalent to any of the functions from $\widehat{\mathcal{R}}_{m-3}$. Thus, functions of degree greater than $m-2$ do not preserve any function from the metric complement and therefore cannot be in $\widehat{\widehat{\mathcal{R}}}_{m-3}$.

\textbf{Case 2.} Let $f\notin \mathcal{R}_{m-3}$ be a function of degree $m-2$. We can uniquely present it as follows:
\begin{equation*}
f(\mathrm{x}) = \sum\limits_{(i,j)\in I} \overline{x_i x_j} + h(\mathrm{x}),
\end{equation*}
where $\deg(h) < m-2$. Denote by $\tilde{f}$ the following quadratic function:
\begin{equation*}
\tilde{f}(\mathrm{x}) := \sum\limits_{(i,j)\in I} x_i x_j.
\end{equation*}
We will call $\tilde{f}$ the \textit{quadratic dual} of $f$.

The following result would be of use when handling this case:
\begin{restatable}{lemma}{lemmafive}
    Let $f$ and $g$ be two functions of degree $m-2$. Then $f\eleq{m-3} g$ if and only if their quadratic duals are $\mathrm{EL}^1$-equivalent ($\mathrm{EA}$-equivalent).
\end{restatable}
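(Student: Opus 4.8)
The plan is to reduce both sides of the stated equivalence to statements about \emph{homogeneous top parts} carrying a natural $\mathrm{GL}_m(\mathbb F_2)$-action, and then to show that passing to the quadratic dual is precisely the intertwiner between these two actions. First note that since $f,g$ have degree exactly $m-2$ and a change of variables $L_{\mathrm A}$ with $\det\mathrm A\neq 0$ preserves degree, one has $f\eleq{m-3}g$ iff there is a nonsingular $\mathrm A$ with $f^{(m-2)}=(g\circ L_{\mathrm A})^{(m-2)}$, where $p^{(d)}$ denotes the homogeneous degree-$d$ part of $p$ (the difference $f-g\circ L_{\mathrm A}$ then has degree $\leqslant m-3$; the converse is immediate). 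Likewise, because $\tilde f,\tilde g$ are homogeneous of degree $2$, one has $\tilde f\eleq{1}\tilde g$ iff $\tilde f=(\tilde g\circ L_{\mathrm A})^{(2)}$ for some nonsingular $\mathrm A$ (the added affine function of $\mathrm{EA}$-equivalence touches only degrees $\leqslant 1$). Write $\tau_{\mathrm A}(p)=(p\circ L_{\mathrm A})^{(m-2)}$ for the induced action on homogeneous degree-$(m-2)$ forms and $\sigma_{\mathrm A}(q)=(q\circ L_{\mathrm A})^{(2)}$ for the action on homogeneous quadratic forms.

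Second, observe that the ``complementation'' map $\Phi$ sending $\sum_{i<j}c_{ij}x_ix_j$ to $\sum_{i<j}c_{ij}\overline{x_ix_j}$ is a linear bijection from the space of homogeneous quadratic forms onto the space of homogeneous degree-$(m-2)$ forms, and that by the very definition of the quadratic dual we have $f^{(m-2)}=\Phi(\tilde f)$ and $g^{(m-2)}=\Phi(\tilde g)$. So the lemma becomes: there is a nonsingular $\mathrm A$ with $\Phi(\tilde f)=\tau_{\mathrm A}(\Phi(\tilde g))$ iff there is a nonsingular $\mathrm A$ with $\tilde f=\sigma_{\mathrm A}(\tilde g)$. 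Using the injectivity of $\Phi$, this follows at once provided we prove an intertwining identity $\tau_{\mathrm B}\circ\Phi=\Phi\circ\sigma_{\psi(\mathrm B)}$ for some bijection $\psi$ of $\mathrm{GL}_m(\mathbb F_2)$ onto itself, since then the existential quantifier over $\mathrm A$ matches up on both sides.

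The key computation is to evaluate $\tau_{\mathrm B}\circ\Phi$ explicitly. Substituting $x_t\mapsto\sum_s B_{ts}x_s$ into the monomial $\overline{x_ax_b}$ (the product of all variables except $x_a,x_b$) and keeping only the degree-$(m-2)$ part, the coefficient of $\overline{x_kx_l}$ equals the permanent of the submatrix of $\mathrm B$ obtained by deleting rows $a,b$ and columns $k,l$; over $\mathbb F_2$ this permanent is the corresponding minor of $\mathrm B$. By Jacobi's identity for complementary minors of the nonsingular matrix $\mathrm B$ --- with all signs trivial over $\mathbb F_2$ --- this minor equals $\det(\mathrm B)$ times the $2\times 2$ minor of $\mathrm B^{-1}$ on rows $k,l$ and columns $a,b$, which, since $\det\mathrm B=1$, is $(\mathrm B^{-1})_{ka}(\mathrm B^{-1})_{lb}+(\mathrm B^{-1})_{kb}(\mathrm B^{-1})_{la}$. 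Comparing this with the elementary formula $[x_kx_l]\,\sigma_{\mathrm A}(x_ax_b)=A_{ak}A_{bl}+A_{al}A_{bk}$ shows that $\tau_{\mathrm B}\circ\Phi=\Phi\circ\sigma_{(\mathrm B^{-1})^{T}}$. Since $\mathrm B\mapsto(\mathrm B^{-1})^{T}$ is an automorphism of $\mathrm{GL}_m(\mathbb F_2)$, the chain of equivalences in the previous paragraph closes, and the lemma follows.

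I expect the intertwining identity to be the only real obstacle: one must recognize that, at the level of top-degree coefficients, a linear change of variables acts on the family $\{\overline{x_ix_j}\}$ through the $2\times 2$ minors of the \emph{inverse} matrix, which is exactly the content of Jacobi's complementary-minor identity, and one must use that over $\mathbb F_2$ permanents coincide with determinants so that this identity applies verbatim. The remaining steps --- the two reductions to homogeneous parts and the bookkeeping of the existential quantifier through $\mathrm B\mapsto(\mathrm B^{-1})^{T}$ --- are routine. An alternative route to the same intertwining identity is to introduce the $\mathrm{GL}_m$-invariant bilinear pairing $\beta(u,v)=[x_1x_2\cdots x_m]\,(uv)$ on Boolean functions, with respect to which $\{\overline{x_ix_j}\}$ and $\{x_ix_j\}$ are dual bases; the equivariance of $\Phi$ then becomes almost automatic, at the cost of setting up this pairing.
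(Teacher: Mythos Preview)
Your proof is correct and takes a genuinely different route from the paper's. The paper decomposes an arbitrary nonsingular $L$ into elementary transvections $L_{ij}\colon x_i\leftarrow x_i+x_j$, computes by hand how each $L_{ij}$ acts on the monomials $\overline{x_kx_l}$ modulo degree $\leqslant m-3$, and observes that the induced action on the dual quadratic monomials $x_kx_l$ is precisely that of the transposed transvection $L_{ji}$; composing along a product of transvections finishes both directions. You instead establish the intertwining identity $\tau_{\mathrm B}\circ\Phi=\Phi\circ\sigma_{(\mathrm B^{-1})^{T}}$ for \emph{all} nonsingular $\mathrm B$ in one stroke, by recognising the top-degree coefficients of $\overline{x_ax_b}\circ L_{\mathrm B}$ as $(m-2)\times(m-2)$ minors of $\mathrm B$ and invoking Jacobi's complementary-minor identity (using permanent $=$ determinant over $\mathbb F_2$). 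The two approaches agree on generators since $(L_{ij}^{-1})^{T}=L_{ji}$ over $\mathbb F_2$, and indeed your global bijection $\mathrm B\mapsto(\mathrm B^{-1})^{T}$ is exactly the map the paper builds implicitly from transvections. Your argument is more conceptual and names the duality explicitly, at the cost of importing Jacobi's identity; the paper's argument is more elementary and self-contained but slightly ad hoc. One small gloss in your reduction: when matching $\mathrm{EL}^1$ with $\mathrm{EA}$ on the quadratic side you mention that the added affine function only touches degrees $\leqslant 1$, but you should also note (it is equally easy) that the translation $\mathrm x\mapsto\mathrm x+\mathrm b$ present in $\mathrm{EA}$-equivalence alters a homogeneous quadratic only in degrees $\leqslant 1$, so that it too is absorbed when passing to the degree-$2$ part.
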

\begin{proof}
	Since $\mathrm{EL}^{m-3}$-equivalence allows us to add functions of degree up to $m-3$, we will assume that both $f$ and $g$ contain only monomials of degree $m-2$. In what follows we will discard monomials of degree less than $m-2$ when talking about $\mathrm{EL}^{m-3}$-equivalence, and we will discard monomials of degree less than $2$ when talking about $\mathrm{EL}^1$-equivalence.
	
	Let $f(\mathrm{x}) = \sum\limits_{(i,j)\in I} \overline{x_i x_j}$ be the ANF of $f$. Let us perform the following simple nonsingular linear transformation of variables $L_{ij}$:
	\begin{equation*}
	L_{ij}:
	\begin{cases}
	x_i\leftarrow x_i+x_j, &  \\
	x_k\leftarrow x_k & \forall k\neq i.
	\end{cases}
	\end{equation*}
	The function $f$ changes under this transformation (disregarding monomials of degree less than $m-2$) in the following manner:
	\begin{equation*}
	L_{ij}:
	\begin{cases}
	\overline{x_i x_k}\leftarrow \overline{x_i x_k} & \forall k\neq i, \\
	\overline{x_j x_k}\leftarrow \overline{x_j x_k}+\overline{x_i x_k} & \forall k\neq i,j, \\
	\overline{x_k x_l}\leftarrow \overline{x_k x_l} & \forall k,l\neq i,j.
	\end{cases}
	\end{equation*}
	
	Let $f_1$ denote the function obtained after this transformation. Then it is easy to see that the dual function $\tilde{f}_1$ is obtained from the dual function $\tilde{f}$ (disregarding monomials of degree less than $2$ since we consider $\mathrm{EL}^1$-equivalence) by the following linear transformation:
	\begin{equation*}
	L_{ji}:
	\begin{cases}
	x_j\leftarrow x_j+x_i, &  \\
	x_k\leftarrow x_k & \forall k\neq j.
	\end{cases}
	\end{equation*}
	which is simply the transposed transformation.
	
	Assume now that $g$ is obtained from $f$ using some linear transformation $L$. Trivially, $L$ can be decomposed into a sequence of simple transformations:
	\begin{equation*}
	L = L_{i_1 j_1}\circ L_{i_2 j_2}\circ\ldots\circ L_{i_s j_s}.
	\end{equation*}
	From the above we can see that the dual function $\tilde{g}$ is obtained from $\tilde{f}$ using the following transformation $\tilde{L}$:
	\begin{equation*}
	\tilde{L} = L_{j_1 i_1}\circ L_{j_2 i_2}\circ\ldots\circ L_{j_s i_s}
	\end{equation*}
	which is a sequence of transposed simple transformations.
	
	Thus we have established that, if $f\eleq{m-3} g$, then $\tilde{f}\eleq{1} \tilde{g}$. The reverse can be shown using similar argumentation.
	
\end{proof}

It is known that any quadratic Boolean function is $\mathrm{EA}$-equivalent to the function of the form $x_1 x_2 + x_3 x_4 + \ldots + x_{2k-1} x_{2k}$ for some $k\leqslant \frac{m}{2}$, and any two functions of this form with different number of variables are not $\mathrm{EA}$-equivalent one to the other. Using this result and Lemma 6 we conclude that $f$ is equivalent to the function $p_k$ for some $k$ $(0 < k \leqslant \frac{m}{2})$, where
\begin{equation*}
p_k(\mathrm{x}) = \overline{x_{1}x_{2}} + \overline{x_{3}x_{4}} + \ldots + \overline{x_{2k-1}x_{2k}} = \sum\limits_{i=1}^{k} \overline{x_{2i-1}x_{2i}}.
\end{equation*}

Trivially, $g^*$ is equivalent to $p_{\frac{m}{2}}$. Then $p_k+p_{\frac{m}{2}}$ is equivalent to $p_{\frac{m}{2}-k}$, which is (by Lemma 6) not equivalent to $p_{\frac{m}{2}}$ and therefore not equivalent to $g^*$. This means that $f + \widehat{\mathcal{R}}_{m-3} \neq \widehat{\mathcal{R}}_{m-3}$ and therefore $f\notin \widehat{\widehat{\mathcal{R}}}_{m-3}$.

Since all functions which are not in $\mathcal{R}_{m-3}$ have degree $m-2$ or higher, we have just shown that none of them are in the second metric complement, and therefore $\mathcal{R}_{m-3}$ is metrically regular when $m$ is even.

\section{The Reed-Muller codes of order $m-3$: $m$ is odd}

\subsection{\textbf{The covering radius and the metric complement of the punctured code}}

Let the number of variables $m$ be odd. Many arguments for this case are similar or identical to the ones for the previous case, however, the proof a bit more complicated. From Section 6 we have:
\begin{equation*}
\rho(\mathcal{R}_{m-3}^{\circ}) = \max\limits_{\mathbf{S}} t(\mathbf{S}) = m,
\end{equation*}
and a vector $\mathbf{v}$ is in the metric complement of $\mathcal{R}_{m-3}^{\circ}$ if and only if $t(\mathbf{S_v}) = m$. The following lemma will help to characterize matrices achieving this maximum:
\begin{restatable}{lemma}{lemmasix}
    Let $\mathbf{S}$ be a symmetric $m\times m$ matrix, where $m$ is odd. Then $t(\mathbf{S}) = m$ if and only if $\mathbf{S}$ has an $m\times m$ factor which is either nonsingular, or has rank $m-1$ and all columns summing to zero.

\end{restatable}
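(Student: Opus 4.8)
The plan is to derive the statement from Lemma~2 (in a minimal factor, every proper subset of columns is linearly independent), together with two facts already available: $t(\mathbf{S}) \geqslant \rank(\mathbf{S})$, and the column-sum of any factor of $\mathbf{S}$ equals the vector of diagonal entries of $\mathbf{S}$. As everywhere in this section I take $m \geqslant 3$.

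For the direction ``$\Longrightarrow$'' I would start from a minimal factor $\mathbf{B}$ of $\mathbf{S}$, which by assumption has $m$ columns; by minimality and the first two parts of Lemma~1 it has no zero column and no two equal columns, hence is an $m \times m$ matrix. If its columns are linearly independent, $\mathbf{B}$ is the desired nonsingular factor. Otherwise there is a nontrivial relation among the columns, which by Lemma~2 cannot be supported on a proper subset, so all columns of $\mathbf{B}$ sum to $\mathbf{0}$; as any $m-1$ of them remain independent (Lemma~2 again), $\rank(\mathbf{B}) = m-1$. This yields exactly the two alternatives.

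For ``$\Longleftarrow$'' suppose $\mathbf{S}$ has such an $m \times m$ factor $\mathbf{B}$. Having $m$ columns, $\mathbf{B}$ already gives $t(\mathbf{S}) \leqslant m$, so the task is to show $t(\mathbf{S}) \geqslant m$. If $\mathbf{B}$ is nonsingular this is immediate: then $\mathbf{S} = \mathbf{B}\mathbf{B}^{T}$ is nonsingular and $t(\mathbf{S}) \geqslant \rank(\mathbf{S}) = m$. The substantive case is $\rank(\mathbf{B}) = m-1$ with the columns summing to $\mathbf{0}$, and here I would first prove the key inequality $\rank(\mathbf{S}) \geqslant m-1$. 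Since $\mathbf{B}\mathbf{1}^{T} = \mathbf{0}$, the one-dimensional kernel of $\mathbf{B}$ is $\{\mathbf{0},\mathbf{1}^{T}\}$; if $\mathbf{S}\mathbf{c}^{T} = \mathbf{0}$ then $\mathbf{B}(\mathbf{B}^{T}\mathbf{c}^{T}) = \mathbf{0}$, so $\mathbf{B}^{T}\mathbf{c}^{T} \in \{\mathbf{0},\mathbf{1}^{T}\}$, and $\mathbf{B}^{T}\mathbf{c}^{T} = \mathbf{1}^{T}$ is impossible because multiplying on the left by $\mathbf{1}$ gives $(\mathbf{B}\mathbf{1}^{T})^{T}\mathbf{c}^{T} = \mathbf{1}\mathbf{1}^{T}$, i.e. $0 = m \bmod 2 = 1$, which is where $m$ odd is used. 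Hence $\mathbf{B}^{T}\mathbf{c}^{T} = \mathbf{0}$, so $\mathbf{c}^{T}$ lies in the one-dimensional kernel of $\mathbf{B}^{T}$; thus $\dim\ker(\mathbf{S}) \leqslant 1$ and $\rank(\mathbf{S}) \geqslant m-1$. Finally, if $t(\mathbf{S}) = k \leqslant m-1$, a minimal factor $\mathbf{D}$ with $k$ columns has column-sum equal to the diagonal of $\mathbf{S}$, which is $\mathbf{0}$ (it equals the column-sum of $\mathbf{B}$), so the columns of $\mathbf{D}$ are dependent; by Lemma~2, $\rank(\mathbf{D}) = k-1 \leqslant m-2$, whence $\rank(\mathbf{S}) \leqslant m-2$, contradicting the bound just established. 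Therefore $t(\mathbf{S}) = m$.

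The main obstacle I anticipate is exactly the inequality $\rank(\mathbf{S}) \geqslant m-1$ in the rank-$(m-1)$ case: it is the only place the parity of $m$ is invoked, and it genuinely fails for even $m$ --- e.g. the all-ones $2 \times 2$ matrix is a rank-$1$ factor of $\mathbf{S} = \mathbf{0}$ --- which is consistent with $t(\mathbf{S})$ being able to reach $m+1$ there. The forward direction and the remainder of the converse are routine manipulations with Lemmas~1 and~2 and the rank bound.
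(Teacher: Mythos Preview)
Your argument is correct. The forward direction is essentially the paper's: a minimal factor with $m$ columns has, by Lemma~2, every proper subset of columns independent, which forces the dichotomy nonsingular / rank $m-1$ with all columns summing to zero.

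In the converse your route differs from the paper's in the rank-$(m-1)$ case. The paper splits into two subcases: it first rules out $t(\mathbf{S}) = m-1$ by observing that then a minimal factor $\mathbf{D}$ has an even number of columns summing to zero, so Lemma~1(4) lets one add the first column to all columns and delete the resulting zero column, contradicting minimality; then for $t(\mathbf{S}) \leqslant m-2$ it bounds $\rank(\mathbf{S}) \leqslant m-3$ and invokes Sylvester's inequality $\rank(\mathbf{B}\mathbf{B}^{T}) \geqslant 2(m-1) - m = m-2$ for the contradiction. You instead prove the sharper bound $\rank(\mathbf{S}) \geqslant m-1$ directly by a kernel computation (showing $\ker(\mathbf{S}) \subseteq \ker(\mathbf{B}^{T})$ because $\mathbf{B}^{T}\mathbf{c}^{T} = \mathbf{1}^{T}$ is impossible when $m$ is odd), which then dispatches all $k \leqslant m-1$ at once via $\rank(\mathbf{D}) = k-1 \leqslant m-2$. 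Your approach is a bit more self-contained (no Sylvester, no appeal to Lemma~1(4)) and isolates the parity obstruction cleanly; the paper's approach, on the other hand, stays closer to the combinatorial toolkit already developed and reuses the column-addition trick from the even-$m$ analysis.
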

\begin{proof}
	
	$\Longrightarrow$
	
	Assume that $t(\mathbf{S}) = m$ and let $\mathbf{B}$ be a minimal factor of $\mathbf{S}$ with $m$ columns. If the rank of $\mathbf{B}$ is smaller than $m-1$, then $\mathbf{B}$ has a proper subset of columns summing to zero, contradicting the minimality of $\mathbf{B}$, so the rank of the factor must be at least $m-1$. If the rank is $m$, the proof is finished.
	
	Assume that $\rank(\mathbf{B}) = m-1$. Then some subset of columns of $\mathbf{B}$ must sum to zero. Since $\mathbf{B}$ is minimal, it cannot be a proper subset by Lemma 2, therefore all columns of $\mathbf{B}$ must sum to zero.
	
	$\Longleftarrow$
	
	Clearly, $t(\mathbf{S}) \geqslant \rank(\mathbf{S})$, so if $\mathbf{S} = \mathbf{BB^T}$ for some nonsingular $m\times m$ matrix $\mathbf{B}$, then the proof is finished.
	
	Let $\mathbf{S} = \mathbf{BB^T}$ for some $\mathbf{B}$ of rank $m-1$ with all columns summing to zero.
	
	Assume that $t(\mathbf{S}) = k \leqslant m-1$ and let $\mathbf{D}$ be a minimal factor of $\mathbf{S}$.
	Since the sum of all columns of any factor is the vector composed of the diagonal elements of $\mathbf{S}$,  the sum of all columns of $\mathbf{D}$ is also zero.
	
	Assume that $k = m-1$. Then $\mathbf{D}$ has an even number of columns, and each row has an even number of ones, so we can add an arbitrary vector to all columns of $\mathbf{D}$ while keeping it a factor of $\mathbf{S}$ using Lemma 1. Let us add the first column of $\mathbf{D}$ to all its columns. Now the first column of $\mathbf{D}$ is zero and we can remove it by Lemma 1. We have now obtained a factor of $\mathbf{S}$ with fewer columns than $\mathbf{D}$ has, which contradicts the minimality of $\mathbf{D}$.
	
	Therefore, $k$ can be at most $m-2$. Since all its columns sum to zero, $\mathbf{D}$ is not a full-rank matrix. Hence $\rank(\mathbf{D})$ is at most $m-3$, which means that $\rank(\mathbf{S})$ is at most $m-3$ as well.
	
	Since $\mathbf{S} = \mathbf{BB^T}$, by Sylvester's inequality we obtain $\rank(\mathbf{S}) \geqslant \rank(\mathbf{B}) + \rank(\mathbf{B^T}) - m = m-2$. But we have just established that $\rank(\mathbf{S}) \leqslant m-3$, contradiction. 
	
	Thus, $t(\mathbf{S})$ has to be greater than $m-1$ and is equal to $m$.
	
\end{proof}

Lemma 7 describes all minimal factors of all matrices $\mathbf{S}$ satisfying $t(\mathbf{S}) = m$. Let us put
\begin{multline*}
  \mathrm{U}_1 = \{\mathbf{u} : \mathbf{B_u} \text{ has $m$ nonzero columns which are linearly independent}\}
\end{multline*}
and
\begin{multline*}
  \mathrm{U}_2 = \{\mathbf{u} : \mathbf{B_u} \text{ has $m$ nonzero columns, $m-1$ of which are} \\
  \text{linearly independent and the sum of all columns is equal to zero}\}.
\end{multline*}

Denote $\mathrm{U} = \mathrm{U}_1 \cup \mathrm{U}_2$. It is easy to see that the set of matrices $\{\mathbf{B_u} : \mathbf{u}\in \mathrm{U}\}$ (up to columns permutations and zero columns removal) includes exactly all minimal factors described in Lemma 7. Thus, if $t(\mathbf{S}) = m$ for some matrix $\mathbf{S}$, then there exists a vector $\mathbf{u}\in \mathrm{U}$ such that $\mathbf{S} = \mathbf{B_u B_u^T}$. Conversely, for any $\mathbf{u}\in \mathrm{U}$ it holds $t(\mathbf{B_u B_u^T}) = m$. Therefore, the vectors from the set $\mathrm{U}$ cover all cosets contained in the metric complement of $\mathcal{R}^{\circ}_{m-3}$:

\begin{equation*}
\widehat{\mathcal{R}}^{\circ}_{m-3} = \bigcup\limits_{\mathbf{u}\in \mathrm{U}} (\mathbf{u} + \mathcal{R}_{m-3}^{\circ}).
\end{equation*}

\subsection{\textbf{The covering radius and the metric complement of the non-punctured code}}

Let us return to the regular, non-punctured Reed-Muller code $\mathcal{R}_{m-3}$. As with the case when $m$ is even, since the code is obtained from the punctured one by adding a parity check bit, using Lemma 5 we conclude that the covering radius of $\mathcal{R}_{m-3}$ is equal to $m+1$, and its metric complement is
\begin{equation*}
\widehat{\mathcal{R}}_{m-3} = \bigcup\limits_{\mathbf{u}\in \mathrm{U}} ((\pi(\mathbf{u}),\mathbf{u}) + \mathcal{R}_{m-3}).
\end{equation*}

Recall once again that for any $\mathbf{v}\in\mathbb{F}_2^n$, the set of nonzero columns of $\mathbf{B_{v^{\circ}}}$ coincides with the support of the function $f_{\mathbf{v}}$, bar, possibly, the zero vector. Since all vectors in $\mathrm{U}$ have odd weight and added parity check bit corresponds to the value of the function at the all-zero vector, we can rewrite the metric complement of $\mathcal{R}_{m-3}$ in terms of functions instead of their value vectors:

\begin{equation*}
\widehat{\mathcal{R}}_{m-3} = \bigcup\limits_{g\in G_1 \cup G_2} g + \mathcal{R}_{m-3},
\end{equation*}
where
\begin{multline*}
G_1 = \{f_{(1,\mathbf{u)}} : \mathbf{u}\in \mathrm{U}_1\} = \\
= \{g : \supp(f) = \{\mathrm{0},\mathrm{x}_1,\mathrm{x}_2\ldots,\mathrm{x}_m\}, \{\mathrm{x}_1,\ldots,\mathrm{x}_m\} \text{ are linearly independent}\},
\end{multline*}
and
\begin{multline*}
G_2 = \{f_{(1,\mathbf{u)}} : \mathbf{u}\in \mathrm{U}_2\} = \\
= \{g : \supp(g) = \{\mathrm{0},\mathrm{x}_1,\mathrm{x}_2\ldots,\mathrm{x}_{m-1},\mathrm{x}_1+\ldots+\mathrm{x}_{m-1}\}, \\
\{\mathrm{x}_1,\ldots,\mathrm{x}_{m-1}\} \text{ are linearly independent}\}.
\end{multline*}

It is easy to see that all functions in $G_1$ form an equivalence class with respect to the linear equivalence, so do functions in $G_2$. Let us pick two arbitrary functions $g_1\in G_1$, $g_2\in G_2$ from these two classes.
Then it follows from the definition of the $\mathrm{EL}^{k}$-equivalence that a function $g$ is in $\widehat{\mathcal{R}}_{m-3}$ if and only if $g \eleq{m-3} g_1$ or $g \eleq{m-3} g_2$.
In fact, we can pick any function from the $\mathrm{EL}^{m-3}$-equivalence class of $G_1$ and from the $\mathrm{EL}^{m-3}$-equivalence class of $G_2$ respectively as our references of equivalence.

Let us give an explicit (algebraic normal form) description of a certain function from $G_1$. Denote by $g_1^*$ the function with the support $\{\mathrm{0},\mathrm{e}_1,\mathrm{e}_2,\ldots,\mathrm{e}_{m-1},\mathrm{1}\}$. After a bit of calculation one can explicitly describe its ANF:
\begin{equation*}
g_1^*(\mathrm{x}) = \overline{x_m} + (1+x_m)\left( 1+\sum\limits_{k = 1}^{\frac{m-3}{2}} \sum\limits_{1\leqslant i_1<\ldots < i_{2k}\leqslant m-1} x_{i_1} x_{i_2}\ldots x_{i_{2k}} \right).
\end{equation*}
This function has degree $m-1$ and, omitting all terms of degree less than $m-2$, it is trivially $\mathrm{EL}^{m-3}$-equivalent to the following function which we will use as $g_1^*$ from now on:
\begin{equation}
g_1^* := \overline{x_m} + x_m g^{\star},
\end{equation}
where $g^{\star}$, defined by
\begin{equation*}
g^{\star}(x_1,x_2,\ldots,x_{m-1}) = \left(\sum\limits_{1\leqslant i < j\leqslant m-1} \overline{x_i x_j}\right),
\end{equation*}
is a function of the first $m-1$ variables. Moving on we will denote the $(m-1)$-tuple of the first $m-1$ variables as $\bar{\mathrm{x}}$. We will also denote affine transformations of the first $m-1$ variables as $\bar{L}_{\mathrm{A}}^{\mathrm{b}}$ (with the matrix and the vector of corresponding sizes).

Let us now give an explicit description of a certain function from $G_2$. Denote by $g_2^*$ the function with the support $\{\mathrm{0},\mathrm{e}_1,\mathrm{e}_2,\ldots,\mathrm{e}_{m-1},\sum\limits_{i=1}^{m-1} \mathrm{e}_i\}$. After a bit of calculation one can explicitly describe its ANF:
\begin{equation*}
g_2^*(\mathrm{x}) = (1+x_m) \left( 1+\sum\limits_{k = 1}^{\frac{m-3}{2}} \sum\limits_{1\leqslant i_1<\ldots < i_{2k}\leqslant m-1} x_{i_1} x_{i_2}\ldots x_{i_{2k}} \right).
\end{equation*}
This function has degree $m-1$ and is trivially $\mathrm{EL}^{m-3}$-equivalent to the function $x_m g^{\star}$, which we will use as $g_2^*$ from now on:
\begin{equation}
g_2^* := x_m g^{\star}
\end{equation}

Note that $g_1^* = \overline{x_m} + g_2^*$.

Before we proceed to establish the metric regularity of $\mathcal{R}_{m-3}$, we will build some alternative representatives of the equivalence classes of $G_1$ and $G_2$. The following lemma will be helpful:
\begin{restatable}{lemma}{lemmaseven}
    Let $f$ be a function such that $f \eeq{m-2} \overline{x_m}$. Let $\mathrm{A}$ be a nonsingular $m\times m$ matrix. Then $f\circ L_{\mathrm{A}} \eeq{m-2} \overline{x_m}$ if and only if the matrix $\mathrm{A}$ has the following form:
    \begin{equation*}
    \mathrm{A} = \begin{pmatrix}
    \bar{\mathrm{A}} & \mathrm{0}^{m-1} \\
    \mathrm{w} & 1
    \end{pmatrix},
    \end{equation*}
    where $\mathrm{0}^{m-1}$ is an all-zero column of length $m-1$, $\bar{\mathrm{A}}$ is an arbitrary nonsingular $(m-1)\times (m-1)$ matrix and $\mathrm{w}$ is an arbitrary row of length $m-1$.
\end{restatable}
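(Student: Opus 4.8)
The plan is to absorb the low-degree part of $f$, reducing to the case $f=\overline{x_m}$, and then to decide for which nonsingular $\mathrm{A}$ the pure product of linear forms $\overline{x_m}\circ L_{\mathrm{A}}$ still equals $\overline{x_m}$ in top degree; this will turn out to be precisely a condition on the last column of $\mathrm{A}$.

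\emph{Reduction to $f=\overline{x_m}$.} Write $f=\overline{x_m}+h$ with $\deg h\leqslant m-2$. Precomposition with a linear map does not raise degree, so $\deg(h\circ L_{\mathrm{A}})\leqslant m-2$ and hence $f\circ L_{\mathrm{A}}\eeq{m-2}\overline{x_m}\circ L_{\mathrm{A}}$. Thus $f\circ L_{\mathrm{A}}\eeq{m-2}\overline{x_m}$ if and only if $\overline{x_m}\circ L_{\mathrm{A}}\eeq{m-2}\overline{x_m}$, and we may take $f=\overline{x_m}$. Recall that $g\eeq{m-2}g'$ means exactly that $g$ and $g'$ share their homogeneous components of degrees $m-1$ and $m$; the only monomials of those degrees in $m$ variables are $\overline{x_1},\dots,\overline{x_m}$ and $x_1\cdots x_m$.

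\emph{Analyzing $\overline{x_m}\circ L_{\mathrm{A}}$.} Let $\mathrm{a}_1,\dots,\mathrm{a}_m$ be the rows of $\mathrm{A}$ and $\ell_i(\mathrm{x})=\sum_j A_{ij}x_j$ the associated linear forms. Then $\overline{x_m}\circ L_{\mathrm{A}}=\ell_1\ell_2\cdots\ell_{m-1}$, a product of $m-1$ linearly independent linear forms; in particular it has degree at most $m-1$, so no degree-$m$ component. Multiplying out over $\mathbb{F}_2$, the coefficient of the monomial $\overline{x_k}$ is the $\mathbb{F}_2$-determinant $M_k$ of the $(m-1)\times(m-1)$ submatrix of $\mathrm{A}$ formed by rows $1,\dots,m-1$ and all columns except the $k$-th. (Alternatively, replacing $\ell_i$ by $\ell_i+\ell_j$ changes $\ell_1\cdots\ell_{m-1}$ only by a summand of degree $m-2$, so the degree-$(m-1)$ part of the product depends only on the linear span of $\ell_1,\dots,\ell_{m-1}$ and is nonzero.) Hence $\overline{x_m}\circ L_{\mathrm{A}}\eeq{m-2}\overline{x_m}$ if and only if $M_m=1$ and $M_k=0$ for every $k\neq m$.

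\emph{Translating the minor conditions.} Put $\mathrm{c}=(M_1,\dots,M_m)$. For each $i\leqslant m-1$, the matrix obtained from $\mathrm{A}$ by overwriting its last row with $\mathrm{a}_i$ has a repeated row, so its determinant vanishes; Laplace expansion along that last row gives, over $\mathbb{F}_2$, $\sum_j A_{ij}M_j=0$. Expanding $\det\mathrm{A}$ itself along its last row gives $\sum_j A_{mj}M_j=\det\mathrm{A}=1$, so $\mathrm{c}\neq\mathrm{0}$. Since $\mathrm{a}_1,\dots,\mathrm{a}_{m-1}$ are independent, their orthogonal complement is one-dimensional and $\mathrm{c}$ is its unique nonzero vector. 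Therefore $\mathrm{c}=\mathrm{e}_m$ if and only if $\mathrm{e}_m$ is orthogonal to $\mathrm{a}_1,\dots,\mathrm{a}_{m-1}$, that is, $A_{im}=0$ for $i=1,\dots,m-1$; together with nonsingularity of $\mathrm{A}$ this forces $A_{mm}=1$, which is exactly the block form in the statement, with $\bar{\mathrm{A}}$ the leading $(m-1)\times(m-1)$ block (nonsingular, since $\det\mathrm{A}=\det\bar{\mathrm{A}}$) and $\mathrm{w}$ the remaining entries of the last row, arbitrary. Conversely, if $\mathrm{A}$ has this form then $\ell_1,\dots,\ell_{m-1}$ involve only $x_1,\dots,x_{m-1}$, so $\overline{x_m}\circ L_{\mathrm{A}}$ is a degree-$(m-1)$ function of $x_1,\dots,x_{m-1}$ whose sole degree-$(m-1)$ monomial is $x_1\cdots x_{m-1}=\overline{x_m}$; hence $\overline{x_m}\circ L_{\mathrm{A}}\eeq{m-2}\overline{x_m}$.

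The main obstacle is the step that identifies the top-degree coefficients of $\ell_1\cdots\ell_{m-1}$ with the minors $M_k$ and recognizes $(M_1,\dots,M_m)$ as the generator of the orthogonal complement of $\mathrm{a}_1,\dots,\mathrm{a}_{m-1}$; everything else is the routine degree bookkeeping of the reduction step and a short direct verification of the converse.
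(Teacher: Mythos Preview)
Your proof is correct. Both you and the paper begin the same way: reduce to $f=\overline{x_m}$, write $\overline{x_m}\circ L_{\mathrm{A}}=\ell_1\cdots\ell_{m-1}$, and identify the coefficient of $\overline{x_k}$ in that product with the minor $M_k$ of $\mathrm{A}$ obtained by deleting the last row and the $k$-th column. From that point the arguments diverge. The paper works with the truncated \emph{columns} $\bar{\mathrm{A}}^1,\dots,\bar{\mathrm{A}}^m$ (first $m-1$ entries of each column): the minor conditions say $\{\bar{\mathrm{A}}^1,\dots,\bar{\mathrm{A}}^{m-1}\}$ is independent while each set $\{\bar{\mathrm{A}}^j:j\neq i\}$ with $i\neq m$ is dependent, which produces a family of linear relations $\bar{\mathrm{A}}\cdot\mathrm{B}_i^{\mathrm{T}}=\bar{\mathrm{A}}^m$ with $b_{ii}=0$; uniqueness of solutions forces all $\mathrm{B}_i$ equal, hence zero, hence $\bar{\mathrm{A}}^m=0$. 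You instead package the minors as the cofactor vector $\mathrm{c}=(M_1,\dots,M_m)$ and observe via Laplace expansion that $\mathrm{c}$ is the unique nonzero element of the orthogonal complement of the first $m-1$ \emph{rows} of $\mathrm{A}$; thus $\mathrm{c}=\mathrm{e}_m$ if and only if the last column of $\mathrm{A}$ vanishes above the bottom entry. Your route is shorter and invokes only the standard fact that the cofactor vector generates the one-dimensional kernel of the $(m-1)\times m$ row submatrix; the paper's column argument is more hands-on but reaches the same conclusion without appealing to that fact.
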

\begin{proof}
	
	$\Longleftarrow$
	
	Trivially, such transformation of the first $m-1$ variables keeps the monomial $\overline{x_m}$ in $f$ the only monomial of degree $(m-1)$, and the linear transformation cannot increase the degree of any of the other monomials.
	
	$\Longrightarrow$
	
	Assume that $f\circ L_{\mathrm{A}} \eeq{m-2} \overline{x_m}$. This means that the change of variables keeps the monomial $\overline{x_m}$ intact and does not produce any other monomials of degree $m-1$. Clearly, the action of this change on monomials of degree $m-2$ and smaller is irrelevant, so let us inspect the action on $\overline{x_m}$.
	
	It is easy to see that the coefficient of the monomial $\overline{x_i}$ in the resulting function, obtained after applying transformation $L_{\mathrm{A}}$ to the variables, is precisely the value of the $(m-1)\times (m-1)$ minor, obtained from the matrix $\mathrm{A}$ by removing the $m$-th row and the $i$-th column. So we need the matrix $\mathrm{A}$ to have all such minors be equal to zero, except for the last one, obtained by removing the last column.
	
	Let $\bar{\mathrm{A}}^1,\bar{\mathrm{A}}^2,\ldots,\bar{\mathrm{A}}^m$ denote the columns of the matrix $\mathrm{A}$ with the last coordinate removed. 
	Then the condition on the minors described above can be reformulated as follows: sets of columns $\{\bar{\mathrm{A}}^1,\ldots,\bar{\mathrm{A}}^{i-1},\bar{\mathrm{A}}^{i+1},\ldots,\bar{\mathrm{A}}^m\}$ are linearly dependent for all $i\neq m$, while the set of the first $m-1$ columns is linearly independent. This implies that the following set of equations holds:
	\begin{equation*}
	\begin{cases}
	\bar{\mathrm{A}}^m + \sum\limits_{j\leqslant m-1} b_{1,j}\bar{\mathrm{A}}^j = \mathrm{0} \\
	\bar{\mathrm{A}}^m + \sum\limits_{j\leqslant m-1} b_{2,j}\bar{\mathrm{A}}^j = \mathrm{0} \\
	\ldots \\
	\bar{\mathrm{A}}^m + \sum\limits_{j\leqslant m-1} b_{m-1,j}\bar{\mathrm{A}}^j = \mathrm{0} \\
	\end{cases}
	\end{equation*}
	where $\mathrm{B} = (b_{i,j})$ --- the coefficients matrix --- is an $(m-1)\times (m-1)$ matrix with $b_{i,i} = 0$ for all $i$.
	
	If we denote the rows of the matrix $\mathrm{B}$ by $\mathrm{B}_i$, and denote by $\bar{\mathrm{A}}$ the $(m-1)\times (m-1)$ matrix composed of the first $m-1$ columns $\bar{\mathrm{A}}^{1},\ldots,\bar{\mathrm{A}}^{m-1}$, we can rewrite this in the following manner:
	\begin{equation*}
	\begin{cases}
	\bar{\mathrm{A}} \cdot \mathrm{B}_1^T = \bar{\mathrm{A}}^m \\
	\bar{\mathrm{A}} \cdot \mathrm{B}_2^T = \bar{\mathrm{A}}^m \\
	\ldots \\
	\bar{\mathrm{A}} \cdot \mathrm{B}_{m-1}^T = \bar{\mathrm{A}}^m \\
	\end{cases}
	\end{equation*}
	Since $\bar{\mathrm{A}}$ is nonsingular, the solution to each equation (which is a system of equations on $b_{i,j}$'s for $i$-th row) is unique and hence $\mathrm{B}_1=\mathrm{B}_2=\ldots = \mathrm{B}_{m-1}$. Since $b_{i,i} = 0$, the matrix $\mathrm{B}$ is a zero matrix, which means that $\bar{\mathrm{A}}^m = 0$. This implies that the last column of the matrix $\mathrm{A}$ can have $1$ only in the last coordinate, and since $\mathrm{A}$ is nonsingular, this has to be the case. Thus, $\mathrm{A}$ is of the form stated in the lemma.
	
\end{proof}

This lemma shows us that all linear transformations of the described form, and only such transformations among all linear, transform functions of the form $\overline{x_m} + h$ with $\deg(h) \leqslant m-2$ into functions of the same form, preserving $\overline{x_m}$ as the only monomial of degree $m-1$. Let us look closer at how such transformations act on monomials of degree $m-2$ in such functions:

\begin{corollary}
Let $f$ be a function of degree $m-1$ such that
\begin{equation*}
f = \overline{x_m} + x_m f_1 + f_2,
\end{equation*}
where $f_1, f_2$ do not depend on $x_m$ and $\deg(f_1) \leqslant m-3$, $\deg(f_2) \leqslant m-2$.
Let $\mathrm{A}$ be a matrix satisfying the conditions of Lemma 8. Then
\begin{equation*}
f\circ L_{\mathrm{A}} \eeq{m-3} \overline{x_m} + x_m (f_1\circ \bar{L}_{\bar{\mathrm{A}}}) + f_3,
\end{equation*}
where $f_3$ is some function of degree at most $m-2$ which does not depend on the variable $x_m$.
\end{corollary}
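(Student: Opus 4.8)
The plan is to substitute the explicit block form of $\mathrm{A}$ supplied by Lemma 8 into $f = \overline{x_m} + x_m f_1 + f_2$, process the three summands one at a time, and collect all leftover terms of degree at most $m-2$ into $f_3$. First I would record that, writing $\mathrm{x} = (\bar{\mathrm{x}}, x_m)$, the map $L_{\mathrm{A}}$ sends the first $m-1$ coordinates to $\bar{\mathrm{A}}\bar{\mathrm{x}}$ and the last coordinate to $\mathrm{w}\bar{\mathrm{x}} + x_m$; this is merely the meaning of the matrix shape in Lemma 8.

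Next I would treat the top term $\overline{x_m}$. Since it uses only the variables $\bar{\mathrm{x}}$, we have $\overline{x_m}\circ L_{\mathrm{A}} = \overline{x_m}\circ \bar{L}_{\bar{\mathrm{A}}}$, a function of $\bar{\mathrm{x}}$ alone. By Lemma 8, applied with the hypothesis function taken to be $\overline{x_m}$ itself (which trivially satisfies $\overline{x_m}\eeq{m-2}\overline{x_m}$) and with $\mathrm{A}$ of the prescribed form, we get $\overline{x_m}\circ L_{\mathrm{A}}\eeq{m-2}\overline{x_m}$, i.e. $\overline{x_m}\circ L_{\mathrm{A}} = \overline{x_m} + r_0$ with $\deg r_0\leqslant m-2$ and $r_0$ independent of $x_m$. (Alternatively, directly: the coefficient of $x_1\cdots x_{m-1}$ in a product of $m-1$ linear forms read off the rows of $\bar{\mathrm{A}}$ equals $\mathrm{per}(\bar{\mathrm{A}}) = \det\bar{\mathrm{A}} = 1$ over $\mathbb{F}_2$, so the monomial $\overline{x_m}$ is retained.) Then I would process the middle term: because $f_1$ does not involve $x_m$, $(x_m f_1)\circ L_{\mathrm{A}} = (\mathrm{w}\bar{\mathrm{x}} + x_m)\,(f_1\circ\bar{L}_{\bar{\mathrm{A}}}) = x_m\,(f_1\circ\bar{L}_{\bar{\mathrm{A}}}) + r_1$, where both $x_m\,(f_1\circ\bar{L}_{\bar{\mathrm{A}}})$ and $r_1 := (\mathrm{w}\bar{\mathrm{x}})(f_1\circ\bar{L}_{\bar{\mathrm{A}}})$ have degree at most $1 + (m-3) = m-2$ and $r_1$ does not depend on $x_m$. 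Finally the low term: $f_2$ does not involve $x_m$, so $f_2\circ L_{\mathrm{A}} = f_2\circ\bar{L}_{\bar{\mathrm{A}}} =: r_2$, again of degree at most $m-2$ and independent of $x_m$.

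Summing the three contributions yields $f\circ L_{\mathrm{A}} = \overline{x_m} + x_m\,(f_1\circ\bar{L}_{\bar{\mathrm{A}}}) + f_3$ with $f_3 := r_0 + r_1 + r_2$, which has degree at most $m-2$ and does not depend on $x_m$; this is in fact an exact identity, so the weaker relation $\eeq{m-3}$ stated in the corollary certainly holds. I do not expect any genuine obstacle: once the substitution rule is written out, the computation is pure bookkeeping, and the only point needing a moment's care — that the leading monomial $\overline{x_m}$ is neither cancelled nor demoted in degree by the change of variables — is exactly the content of the ``$\Longleftarrow$'' direction of Lemma 8, already established above.
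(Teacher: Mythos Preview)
Your proposal is correct and is precisely the ``straightforward'' computation the paper alludes to: the paper's own proof consists of the single line ``Straightforward from the proof of Lemma 8,'' and your term-by-term substitution using the block form of $\mathrm{A}$ (with the leading-monomial preservation supplied by the $\Longleftarrow$ direction of Lemma 8) is exactly how one unpacks that. Your observation that the displayed relation in fact holds as an exact identity, not merely modulo degree $m-3$, is also correct.
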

\begin{proof}
	Straighforward from the proof of Lemma 8.
\end{proof}

Let us now build alternative representatives for the metric complement of $\mathcal{R}_{m-3}$. Since $\bar{\mathrm{A}}$ in Lemma 8 can be any nonsingular matrix, choosing $\bar{\mathrm{A}}$ so that $g^{\star}\circ \bar{L}_{\bar{\mathrm{A}}} \eeq{m-3} p_{\frac{m-1}{2}}$, (this is possible by Lemma 6) and filling the vector $\mathrm{w}$ with zeroes, we obtain a matrix $\mathrm{A}$ such that
\begin{equation}
g_{1}^{**} := g_1^*\circ L_{\mathrm{A}} \eeq{m-3} \overline{x_m} + x_m (g^{\star}\circ \bar{L}_{\bar{\mathrm{A}}}) + h_1 \eeq{m-3} \overline{x_m} + x_m p_{\frac{m-1}{2}} + h_1.
\end{equation}
Here $p_{\frac{m-1}{2}}, h_1$ do not depend on $x_m$ and $h_1$ has degree at most $m-2$. Additionally,
\begin{equation}
g_{2}^{**} := g_2^*\circ L_{\mathrm{A}} \eeq{m-3} x_m (g^{\star}\circ \bar{L}_{\bar{\mathrm{A}}}) \eeq{m-3} x_m p_{\frac{m-1}{2}}.
\end{equation}
We will use these equivalent functions $g_{1}^{**}$ and $g_{2}^{**}$ as class representatives in some cases.

\subsection{\textbf{Metric regularity}}

We have established that
\begin{equation*}
\widehat{\mathcal{R}}_{m-3} = \{g : g\eleq{m-3} g_1\} \cup \{g : g\eleq{m-3} g_2\},
\end{equation*}
where $g_1$ is an arbitrary representative of an $\mathrm{EL}^{m-3}$-equivalence class of $G_1$ and $g_2$ is an arbitrary representative of an $\mathrm{EL}^{m-3}$-equivalence class of $G_2$, and have presented some variants of these representatives --- functions $g_{1}^{*},g_{2}^{*},g_{1}^{**}$ and $g_{2}^{**}$ (equations (1)-(4)).

Since the code $\mathcal{R}_{m-3}$ is linear, $\rho(\widehat{\mathcal{R}}_{m-3}) = \rho(\mathcal{R}_{m-3}) = m+2$ and the function $f$ is in $\widehat{\widehat{\mathcal{R}}}_{m-3}$ if and only if $f + \widehat{\mathcal{R}}_{m-3} = \widehat{\mathcal{R}}_{m-3}$.
Let us prove the metric regularity of $\mathcal{R}_{m-3}$ by proving that no functions other than those contained in $\mathcal{R}_{m-3}$ preserve the metric complement under addition.

Let $f\notin \mathcal{R}_{m-3}$ be an arbitrary function. Since $\widehat{\mathcal{R}}_{m-3}$ is a union of two $\mathrm{EL}^{m-3}$-equivalence classes, in order to show that $f + \widehat{\mathcal{R}}_{m-3} \neq \widehat{\mathcal{R}}_{m-3}$ it is enough to show that there exists a function $f'$ such that $f' \eleq{m-3} f$ and $f' + \widehat{\mathcal{R}}_{m-3} \neq \widehat{\mathcal{R}}_{m-3}$.

\textbf{Case 1.} Let $f\notin \mathcal{R}_{m-3}$ be a function of degree greater than $m-1$. Since $\mathrm{EL}^{m-3}$-equivalence preserves degree of functions with degree higher than $m-3$, any $g\in\widehat{\mathcal{R}}_{m-3}$ has degree $m-1$ or $m-2$ (like $g_{1}^{*}$ and $g_{2}^{*}$ respectively), while $f+g$ has higher degree and therefore cannot be equivalent to any of the functions from $\widehat{\mathcal{R}}_{m-3}$. Thus, functions of degree greater than $m-1$ cannot be in $\widehat{\widehat{\mathcal{R}}}_{m-3}$.

\textbf{Case 2.} Let $f\notin \mathcal{R}_{m-3}$ be a function of degree $m-1$. Any function of degree $m-1$ is trivially $\mathrm{EL}^{m-3}$-equivalent to a function with $\overline{x_m}$ as the only monomial of degree $(m-1)$, so
\begin{equation}
f \eleq{m-3} \overline{x_m} + x_m f_1 + f_2,
\end{equation}
where $f_1, f_2$ do not depend on $x_m$, $f_1$ is either zero or has degree $m-3$, while $f_2$ is either zero or has degree $m-2$.

\textbf{Case 2.1.} Assume that $f_1$ in (5) is nonzero. Then, from Lemma 8 and Lemma 6 it follows that
\begin{equation}
f \eleq{m-3} \overline{x_m} + x_m p_k + f_3 =: f'
\end{equation}
for some $k > 0$ and some $f_3$ of degree at most $m-2$ ($p_k, f_3$ do not depend on $x_m$). If we now sum $f'$ and $g_{2}^{**} \in \widehat{\mathcal{R}}_{m-3}$, we obtain:
\begin{equation*}
g_{2}^{**} + f' \eeq{m-3} \overline{x_m} + x_m (p_k + p_{\frac{m-1}{2}}) + f_3 \eleq{m-3} \overline{x_m} + x_m p_{\frac{m-1}{2}-k} + f_4,
\end{equation*}
where $f_4$ is a function of degree at most $m-2$, not depending on $x_m$, and the last equivalence is a simple variable renaming.

Let us denote this last function as $g'$. It has degree $m-1$ and therefore cannot be equivalent to the functions from $G_2$. It cannot be equivalent to the functions from $G_1$ either, because, by Lemma 8, any linear transformation of variables with matrix $\mathrm{D}$ which preserves $\overline{x_m}$ will act onto it in the following manner:
\begin{equation*}
g'\circ L_{\mathrm{D}} \eeq{m-3} \overline{x_m} + x_m (p_{\frac{m-1}{2}-k}\circ \bar{L}_{\bar{\mathrm{D}}}) + f_5,
\end{equation*}
where $f_5$ is some function of degree at most $m-2$ in the first $m-1$ variables. It is clear that no matrix $\bar{\mathrm{D}}$ can match the monomials of degree $m-2$ containing variable $x_m$ of the function $g'$ and of the function $g_{1}^{**}$, since $p_{\frac{m-1}{2}-k}$ is not equivalent to $p_{\frac{m-1}{2}}$. Thus, the function $g' = g_{2}^{**} + f'$ is not in $\widehat{\mathcal{R}}_{m-3}$, and therefore, if $f_1$ is nonzero, $f$ is not in $\widehat{\widehat{\mathcal{R}}}_{m-3}$.

\textbf{Case 2.2}
Assume that both $f_1$ and $f_2$ in (5) are zero. Then
\begin{equation*}
f \eleq{m-3} \overline{x_m} =: f'.
\end{equation*}
Using the transformation $L_{1m}: x_1 \leftarrow x_1+x_m$ (and removing the terms of degree less than $m-2$), the function
$g_1^* = \overline{x_m} + x_m g^{\star}$
transforms into
$g_1^* \circ L_{1m} \eeq{m-3} \overline{x_m} + \overline{x_1} + x_m g^{\star}$.

If we now sum $f'$ and $g_1^* \circ L_{1m}\in \widehat{\mathcal{R}}_{m-3}$ we will obtain the function $g' \eeq{m-3} \overline{x_1} + x_m g^{\star}$.
If we swap the variables $x_1$ and $x_m$ in it by another linear transformation and regroup terms, we will see that
\begin{equation*}
g' \eleq{m-3} \overline{x_m} + \sum\limits_{2\leqslant i < j\leqslant m-1} \overline{x_i x_j} + \sum\limits_{i=2}^{m-1} \overline{x_i x_m} \eleq{m-3} \overline{x_m} + x_m p_{\frac{m-3}{2}} + h
\end{equation*}
for some $h$ of degree at most $m-2$ in the first $m-1$ variables. By Lemma 8 and Lemma 6, this function cannot be equivalent to $g_{1}^{**}$ and it is not equivalent to $g_{2}^{*}$ by degree comparison. Therefore, $g'$ is not in $\widehat{\mathcal{R}}_{m-3}$, and hence $f$ is not in $\widehat{\widehat{\mathcal{R}}}_{m-3}$.

\textbf{Case 2.3} Assume that $f_1$ in (5) is zero and $f_2$ is nonzero. Then
\begin{equation*}
f \eleq{m-3} \overline{x_m} + f_2 =: f'.
\end{equation*}
Since $f_2$ does not contain the variable $x_m$, all terms of $f_2$ are of the form $\overline{x_i x_m}$ for some $i$. Without loss of generality (swapping variables among the first $m-1$ if needed) we can assume that $f_2$ contains $\overline{x_{m-1} x_m}$.
Renaming variables in $g_{2}^{**}$, we can transform it into:
\begin{equation*}
g_{2}^{**} \eleq{m-3} \overline{x_2 x_3} + \overline{x_4 x_5} + \ldots + \overline{x_{m-1} x_{m}}.
\end{equation*}
If we now add $f'$ and the function above, which belongs to $\widehat{\mathcal{R}}_{m-3}$, we will obtain the function $g' := \overline{x_m} + \sum\limits_{k=1}^{\frac{m-3}{2}} \overline{x_{2k} x_{2k+1}} + \sum\limits_{i\in I}\overline{x_i x_m}$, which is equivalent to
\begin{equation*}
g' \eleq{m-3} \overline{x_m} + x_m p_{\frac{m-3}{2}} + h
\end{equation*}
for some $h$ of degree at most $m-2$ in the first $m-1$ variables. By Lemma 8 and Lemma 6, this function  cannot be equivalent to $g_{1}^{**}$ and it is not equivalent to $g_2^*$ by degree comparison. Therefore, $g'$ is not in $\widehat{\mathcal{R}}_{m-3}$, and thus $f$ is not in $\widehat{\widehat{\mathcal{R}}}_{m-3}$.

\textbf{Case 3.} If $f\notin \mathcal{R}_{m-3}$ is a function of degree $m-2$, then, by arguments similar to the case of even $m$, $f$ is equivalent to $p_k$ (in $m$ variables) for some $k > 0$. Then
\begin{equation*}
p_k + g_{2}^{**} \eleq{m-3} p_{\frac{m-1}{2}-k}.
\end{equation*}
The function on the right is inequivalent to both $g_{2}^{**}$ (because $\frac{m-1}{2} \neq \frac{m-1}{2}-k$) and $g_1^*$ (by degree comparison), therefore $f\notin\widehat{\widehat{\mathcal{R}}}_{m-3}$.

Since all functions which are not in $\mathcal{R}_{m-3}$ have degree $m-2$ or higher, we have proven that none of them are in the second metric complement, and therefore $\mathcal{R}_{m-3}$ is metrically regular when $m$ is odd.

Factoring in the results from Section 7, we have proved the following
\begin{theorem}
	$\mathcal{R}_{m-3}$ is metrically regular for any $m\geqslant 3$.
\end{theorem}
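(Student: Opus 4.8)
The plan is to prove Theorem 2 by splitting on the parity of $m$ and invoking the two case analyses carried out in the two preceding sections, together with a direct check of the base case $m=3$. Since $\mathcal{R}_{m-3}$ is linear, in both parities one first uses the auxiliary lemma to reduce metric regularity to the statement ``no $f\notin\mathcal{R}_{m-3}$ satisfies $f+\widehat{\mathcal{R}}_{m-3}=\widehat{\mathcal{R}}_{m-3}$'', and then exploits that $\widehat{\mathcal{R}}_{m-3}$ is a union of $\mathrm{EL}^{m-3}$-equivalence classes so that it suffices to test one representative $f'$ with $f'\eleq{m-3}f$ per equivalence class of $f$.

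For even $m$ I would appeal to Section 7: there $\widehat{\mathcal{R}}_{m-3}$ is the single $\mathrm{EL}^{m-3}$-equivalence class of $g^*\eleq{m-3}p_{\frac{m}{2}}$ (via the factor description of Lemma 4 and the parity-check-bit passage of Lemma 5). Using that $\mathrm{EL}^{m-3}$-equivalence preserves degrees exceeding $m-3$ (which disposes of all $f$ of degree $\geqslant m-1$) and the quadratic-dual correspondence of Lemma 6 (which converts $f\eleq{m-3}g$ into $\tilde f\eleq{1}\tilde g$, hence reduces everything to the classification of quadratic forms), one obtains $f+g^*\eleq{m-3}p_{\frac{m}{2}-k}\neleq{m-3}p_{\frac{m}{2}}$ for every $f\notin\mathcal{R}_{m-3}$ of degree $m-2$, so $f$ is not in the second metric complement.

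For odd $m$ I would appeal to Section 8, where the situation is genuinely heavier because $\widehat{\mathcal{R}}_{m-3}$ is a union of two classes, $G_1\ni g_1^*$ of degree $m-1$ and $G_2\ni g_2^*$ of degree $m-2$ (Lemma 7, then Lemma 5). After normalizing a degree-$(m-1)$ function to $\overline{x_m}+x_mf_1+f_2$ and using the rigidity of transformations preserving $\overline{x_m}$ (Lemma 8 and its corollary) together with Lemma 6, one runs the casework on $\deg f$ and on whether $f_1$, $f_2$ vanish, in each sub-case adding a convenient representative $g_1^{**}$ or $g_2^{**}$ and verifying that the sum lies in neither $G_1$ nor $G_2$. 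Finally $m=3$ gives $\mathcal{R}_{0,3}$, the repetition code, whose metric regularity is among the trivial cases recorded in Section 4 (alternatively it is just the smallest instance of the odd-$m$ analysis). Assembling the even case, the odd case, and $m=3$ yields the theorem for all $m\geqslant 3$.

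I expect the real obstacle of the whole development to sit in the odd case, concretely in Case 2.1 of Section 8: for $f$ of degree $m-1$ one must rule out that $f+g_2^{**}$ falls back into the \emph{other} degree-$(m-1)$ class $G_1$. This is exactly where the rigidity of Lemma 8 (only block-triangular nonsingular matrices keep $\overline{x_m}$ as the unique top monomial) must be combined with the $\mathrm{EA}$-inequivalence of $p_k$ and $p_{k'}$ for $k\neq k'$, so that the degree-$(m-2)$ parts divisible by $x_m$ can never be matched; everything else (Cases 1, 2.2, 2.3, 3, and the entire even case) is comparatively mechanical once Lemmas 4--8 are in hand.
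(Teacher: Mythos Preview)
Your proposal is correct and follows essentially the same route as the paper: the theorem is obtained by combining the even-$m$ analysis of Section~7 (a single $\mathrm{EL}^{m-3}$-class, handled via the quadratic-dual Lemma~6) with the odd-$m$ analysis of Section~8 (two classes, handled via the rigidity Lemma~8 and its corollary together with Lemma~6), after first reducing via the auxiliary lemma to showing that no $f\notin\mathcal{R}_{m-3}$ preserves $\widehat{\mathcal{R}}_{m-3}$ under addition. Your identification of Case~2.1 in the odd-$m$ argument as the genuine crux, and your separate remark that $m=3$ is the repetition code from Section~4, also agree with the paper.
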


\section{The Reed-Muller code $\mathcal{RM}(2,6)$} \label{RM26Section}

Let us consider one other special case. If we change the order of values in the value vectors of functions so that the first half of values corresponds to the values of the function when the last variable is set to $0$, and the other half corresponds to the values of the function when the last variable is set to $1$, then each Reed-Muller code (for $m>1$, $r>0$) can be inductively defined as follows:
\begin{equation*}
\mathcal{R}_{r,m} = \{(\mathbf{u},\mathbf{u+v}) : \mathbf{u}\in \mathcal{R}_{r,m-1}, \mathbf{v}\in \mathcal{R}_{r-1,m-1}\}.
\end{equation*}

In particular,
\begin{equation*}
\mathcal{R}_{2,6} = \{(\mathbf{u},\mathbf{u+v}) : \mathbf{u}\in \mathcal{R}_{2,5}, \mathbf{v}\in \mathcal{R}_{1,5}\}.
\end{equation*}

Since both $\mathcal{R}_{2,5}$ and $\mathcal{R}_{1,5}$ were shown to be metrically regular, this construction proves useful and allows us to establish the metric regularity of the code $\mathcal{R}_{2,6}$ as well. From now on, vectors in bold will represent value vectors of functions in $5$ variables (of length $32$), while value vectors of $6$-variable functions will be presented as pairs of value vectors of  $5$-variable functions. Additionaly, in this section we will use the notion of an \textit{automorphism} of a set, which will denote an isometric bijective mapping from the whole space to itself which maps the given set to itself.

First, let us establish one of the basic tools for the following investigations. Recall that $\rho(\mathcal{R}_{2,5}) = 6$ (Section 8), $\rho(\mathcal{R}_{1,5}) = 12$ \cite{BW72} and $\rho(\mathcal{R}_{2,6}) = 18$ \cite{SCH81}.

\begin{lemma}
	Let $\mathbf{(y,w)}$ be in $\widehat{\mathcal{R}}_{2,6}$. Then $\mathbf{y}\in \widehat{\mathcal{R}}_{2,5}$ and for any $\mathbf{u}\in \mathcal{R}_{2,5}$ such that $d(\mathbf{y},\mathbf{u}) = 6$ it holds $d(\mathbf{w+u}, \mathcal{R}_{1,5}) = 12$, i.e. $\mathbf{(w+u)} \in \widehat{\mathcal{R}}_{1,5}$.
\end{lemma}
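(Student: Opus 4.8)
The plan is to exploit the inductive $(\mathbf{u},\mathbf{u+v})$ description of $\mathcal{R}_{2,6}$ recalled just above the lemma, rewrite the distance from $(\mathbf{y},\mathbf{w})$ to $\mathcal{R}_{2,6}$ as a minimum, over $\mathbf{u}\in\mathcal{R}_{2,5}$, of a sum of two ``lower-dimensional'' distances, and then pin both summands down using the numerical coincidence $\rho(\mathcal{R}_{2,6}) = 18 = 6 + 12 = \rho(\mathcal{R}_{2,5}) + \rho(\mathcal{R}_{1,5})$.

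Concretely, for an arbitrary vector $(\mathbf{y},\mathbf{w})$, writing $\{\mathbf{u+v} : \mathbf{v}\in\mathcal{R}_{1,5}\} = \mathbf{u}+\mathcal{R}_{1,5}$ and using that we work over $\mathbb{F}_2$, I would first establish
\begin{equation*}
d\big((\mathbf{y},\mathbf{w}), \mathcal{R}_{2,6}\big) = \min_{\mathbf{u}\in\mathcal{R}_{2,5}} \Big( d(\mathbf{y},\mathbf{u}) + d(\mathbf{w+u},\mathcal{R}_{1,5}) \Big),
\end{equation*}
which is immediate from the additivity of the Hamming weight across the two halves and from minimizing over $\mathbf{v}$ first. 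The one consequence I need is the per-$\mathbf{u}$ lower bound: for \emph{every} $\mathbf{u}\in\mathcal{R}_{2,5}$,
\begin{equation*}
d(\mathbf{y},\mathbf{u}) + d(\mathbf{w+u},\mathcal{R}_{1,5}) \ \geq\ d\big((\mathbf{y},\mathbf{w}), \mathcal{R}_{2,6}\big).
\end{equation*}

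Now assume $(\mathbf{y},\mathbf{w})\in\widehat{\mathcal{R}}_{2,6}$, so the right-hand side equals $18$. Choosing $\mathbf{u}$ to be a nearest codeword of $\mathbf{y}$ in $\mathcal{R}_{2,5}$ and using $d(\mathbf{w+u},\mathcal{R}_{1,5})\le\rho(\mathcal{R}_{1,5}) = 12$, the lower bound gives $d(\mathbf{y},\mathcal{R}_{2,5}) + 12 \ge 18$, hence $d(\mathbf{y},\mathcal{R}_{2,5}) \ge 6$; since also $d(\mathbf{y},\mathcal{R}_{2,5})\le\rho(\mathcal{R}_{2,5}) = 6$, we conclude $d(\mathbf{y},\mathcal{R}_{2,5}) = 6$, i.e. $\mathbf{y}\in\widehat{\mathcal{R}}_{2,5}$. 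For the second assertion, take any $\mathbf{u}\in\mathcal{R}_{2,5}$ with $d(\mathbf{y},\mathbf{u}) = 6$ (such $\mathbf{u}$ exist, being precisely the nearest codewords of $\mathbf{y}$); the lower bound now reads $6 + d(\mathbf{w+u},\mathcal{R}_{1,5})\ge 18$, so $d(\mathbf{w+u},\mathcal{R}_{1,5})\ge 12$, and comparison with $\rho(\mathcal{R}_{1,5}) = 12$ forces equality, i.e. $\mathbf{w+u}\in\widehat{\mathcal{R}}_{1,5}$.

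I do not expect a genuine obstacle: the argument is just additivity of the weight over the $(\mathbf{u},\mathbf{u+v})$ halves together with the coincidence $18 = 6 + 12$. The only point requiring care is the direction of the inequality — each per-$\mathbf{u}$ sum is an \emph{upper} bound for $d((\mathbf{y},\mathbf{w}),\mathcal{R}_{2,6})$ and hence a \emph{lower} bound of $18$, so once one of the two summands is forced to its covering-radius value the other is squeezed to its maximum as well.
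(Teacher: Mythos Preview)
Your argument is correct and follows the same idea as the paper's proof: both use the $(\mathbf{u},\mathbf{u+v})$ structure to bound $d((\mathbf{y},\mathbf{w}),\mathcal{R}_{2,6})$ above by $d(\mathbf{y},\mathbf{u}) + d(\mathbf{w+u},\mathcal{R}_{1,5})$ for each $\mathbf{u}\in\mathcal{R}_{2,5}$, and then squeeze the two summands against the covering radii $6$ and $12$. The only cosmetic difference is that the paper argues the first part by contradiction (assuming $d(\mathbf{y},\mathcal{R}_{2,5})<6$ and exhibiting a codeword closer than $18$), whereas you phrase it as a direct chain of inequalities; the content is identical.
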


\begin{proof}
	Assume that $\mathbf{y} \notin \widehat{\mathcal{R}}_{2,5}$, i.e. $d(\mathbf{y},\mathcal{R}_{2,5}) < 6$. Then there exists a vector $\mathbf{u}\in \mathcal{R}_{2,5}$ such that $d(\mathbf{y},\mathbf{u}) < 6$. From the inductive construction of $\mathcal{R}_{2,6}$ it follows that the distance between $\mathbf{(y,w)}$ and $\mathcal{R}_{2,6}$ is at most the distance between $\mathbf{y}$ and $\mathbf{u}$ plus the distance between $\mathbf{w}$ and $\mathbf{u}+\mathcal{R}_{1,5}$. The latter is in turn equal to $d(\mathbf{w+u},\mathcal{R}_{1,5})$, which is bounded by the covering radius of $\mathcal{R}_{1,5}$. Therefore, 
	\begin{equation*}
		d(\mathbf{(y,w)},\mathcal{R}_{2,6}) \leqslant d(\mathbf{y},\mathbf{u}) + d(\mathbf{w+u},\mathcal{R}_{1,5}) < 6 + 12 = 18.
	\end{equation*}
	This contradicts with  $\mathbf{(y,w)}\in \widehat{\mathcal{R}}_{2,6}$, hence $\mathbf{y}$ is in $\widehat{\mathcal{R}}_{2,5}$.
	
	The second part is now trivial: if  $\mathbf{u}\in \mathcal{R}_{2,5}$ is a vector such that $d(\mathbf{y},\mathbf{u}) = 6$, then the distance $d(\mathbf{w+u},\mathcal{R}_{1,5})$ has to achieve the maximum of $12$ in order for the vector $\mathbf{(y,w)}$ to be in the metric complement of $\mathcal{R}_{2,6}$.
\end{proof}

Let $\mathbf{(\tilde{u},\tilde{u}+\tilde{v})} \in \widehat{\widehat{\mathcal{R}}}_{2,6}$. We will prove that
$\mathbf{(\tilde{u},\tilde{u}+\tilde{v})}$ is in $\mathcal{R}_{2,6}$ in two steps: first we establish that $\mathbf{\tilde{u}}$ is in $\mathcal{R}_{2,5}$, then we prove that $\mathbf{\tilde{v}}$ is in $\mathcal{R}_{1,5}$.

Recall (Section 8) that $\widehat{\mathcal{R}}_{2,5} = \{g : g\eleq{2} g_1\} \cup \{g : g\eleq{2} g_2\}$, where $g_1$ and $g_2$ are some representatives of two $\mathrm{EL}^2$-equivalence classes. Let us denote
\begin{equation*}
	\widehat{\mathcal{R}}_{2,5}^1 := \{g : g\eleq{2} g_1\},\,\,\widehat{\mathcal{R}}_{2,5}^2 := \{g : g\eleq{2} g_2\}.
\end{equation*}
Then the following lemma is useful for proving the first half.

\begin{lemma}
	For each $i=1,2$ one of the following statements holds:
	\begin{enumerate}
		\item {$\forall \mathbf{y} \in \widehat{\mathcal{R}}_{2,5}^i\,\,\forall \mathbf{w}\in\mathbb{F}_2^{32}$ it holds $\mathbf{(y,w)}\notin \widehat{\mathcal{R}}_{2,6}$};
		\item {$\forall \mathbf{y} \in \widehat{\mathcal{R}}_{2,5}^i\,\,\exists \mathbf{w}\in\mathbb{F}_2^{32}$ such that $\mathbf{(y,w)}\in \widehat{\mathcal{R}}_{2,6}$};
	\end{enumerate}
\end{lemma}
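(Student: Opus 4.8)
The key observation is that the two equivalence classes $\widehat{\mathcal{R}}_{2,5}^1$ and $\widehat{\mathcal{R}}_{2,5}^2$ are each orbits under the group of linear transformations of the variables. The plan is to exploit the interaction between such transformations of the $5$-variable space and the $(\mathbf{u},\mathbf{u+v})$ construction of $\mathcal{R}_{2,6}$. First I would observe that if $L$ is a linear (or affine) transformation of $5$ variables, then the map $(\mathbf{a},\mathbf{b}) \mapsto (\mathbf{a}\circ L, \mathbf{b}\circ L)$ is an isometry of $\mathbb{F}_2^{64}$ that maps $\mathcal{R}_{2,6}$ to itself --- that is, it is an automorphism of $\mathcal{R}_{2,6}$ in the sense introduced just above the lemma --- and therefore also maps $\widehat{\mathcal{R}}_{2,6}$ to itself. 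More generally, for a fixed $\mathbf{y}$, adding any element of $\mathcal{R}_{1,5}$ to the second coordinate, or adding $\mathbf{u}+\mathbf{u}'$ with $\mathbf{u},\mathbf{u}'\in\mathcal{R}_{2,5}$ and simultaneously translating $\mathbf{y}$, are isometries preserving both $\mathcal{R}_{2,6}$ and $\widehat{\mathcal{R}}_{2,6}$.

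Next I would fix $i\in\{1,2\}$ and pick the reference representative $g_i\in\widehat{\mathcal{R}}_{2,5}^i$. Every $\mathbf{y}\in\widehat{\mathcal{R}}_{2,5}^i$ satisfies $\mathbf{y} = g_i\circ L_{\mathrm{A}} + h$ for some nonsingular $\mathrm{A}$ and some $h$ of degree $\le 2$. Using the automorphisms from the previous paragraph: the linear transformation $L_{\mathrm{A}}$ moves $(g_i\circ L_{\mathrm{A}}, \mathbf{w})$ to $(g_i, \mathbf{w}\circ L_{\mathrm{A}^{-1}})$; the function $h$ of degree $\le 2$ lies in $\mathcal{R}_{2,5}$, and by Lemma 9 the condition $(\mathbf{y},\mathbf{w})\in\widehat{\mathcal{R}}_{2,6}$ is insensitive to adding an element of $\mathcal{R}_{2,5}$ to $\mathbf{y}$ provided one compensates in the $\mathbf{w}$ coordinate (indeed $(\mathbf{u},\mathbf{u}+\mathbf{v})\mapsto(\mathbf{u}+h,\mathbf{u}+h+\mathbf{v})$ shows $(\mathbf{y}+h,\mathbf{w}+h)\in\widehat{\mathcal{R}}_{2,6}$ iff $(\mathbf{y},\mathbf{w})\in\widehat{\mathcal{R}}_{2,6}$). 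Chaining these, the question ``does there exist $\mathbf{w}$ with $(\mathbf{y},\mathbf{w})\in\widehat{\mathcal{R}}_{2,6}$'' has the same answer for every $\mathbf{y}$ in the class $\widehat{\mathcal{R}}_{2,5}^i$; it reduces to the single representative $g_i$. Thus exactly one of the two alternatives in the lemma holds for class $i$, according to whether $\{\mathbf{w} : (g_i,\mathbf{w})\in\widehat{\mathcal{R}}_{2,6}\}$ is empty or not.

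The main obstacle is ensuring that every isometry used genuinely preserves $\widehat{\mathcal{R}}_{2,6}$ and correctly tracks the $\mathbf{w}$-coordinate; in particular one must check that $\mathbf{w}\circ L_{\mathrm{A}^{-1}}$ ranges over all of $\mathbb{F}_2^{32}$ as $\mathbf{w}$ does (clear, since $L_{\mathrm{A}^{-1}}$ is a bijection on value vectors) and that the degree-$\le 2$ correction $h$ really is absorbed by $\mathcal{R}_{2,5}$ on both coordinates simultaneously. Once the orbit argument is set up, the dichotomy is immediate: define, for each $i$, the predicate $P_i := \bigl(\exists\,\mathbf{w}\ (g_i,\mathbf{w})\in\widehat{\mathcal{R}}_{2,6}\bigr)$; the reductions above show $(\mathbf{y},\mathbf{w})$-existence is equivalent to $P_i$ for all $\mathbf{y}\in\widehat{\mathcal{R}}_{2,5}^i$, so if $P_i$ holds then alternative (2) holds and if $P_i$ fails then alternative (1) holds. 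I would not attempt to decide \emph{which} alternative holds for each $i$ inside this lemma --- that computation (presumably showing $P_1$ fails and $P_2$ holds, or similar) is the content of the subsequent argument, and the lemma only asserts the clean dichotomy.
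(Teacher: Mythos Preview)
Your proposal is correct and follows essentially the same approach as the paper's proof: both exploit that the $\mathrm{EL}^2$-equivalence on $\widehat{\mathcal{R}}_{2,5}^i$ lifts to automorphisms of $\mathcal{R}_{2,6}$ (via diagonal variable changes and translations by $(h,h)$ with $h\in\mathcal{R}_{2,5}$), so the existence of a suitable $\mathbf{w}$ is constant across the class. The only cosmetic difference is that the paper unfolds this through the explicit distance inequality $d(\mathbf{y^*},\mathbf{u})+\min_{\mathbf{v}}d(\mathbf{w}+\mathbf{u},\mathbf{v})<18$ and transforms it term by term, whereas you appeal directly to the general fact that isometric automorphisms of a code preserve its metric complement.
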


\begin{proof}
	Assume that for some $i$ the second statement does not hold. Then, inverting it, we obtain
	\begin{equation*}
		\exists \mathbf{y^*} \in \widehat{\mathcal{R}}_{2,5}^i : \forall \mathbf{w}\in\mathbb{F}_2^{32} \text{ it holds } \mathbf{(y^*,w)}\notin \widehat{\mathcal{R}}_{2,6}
	\end{equation*}
	We will now prove that any $\mathbf{y} \in \widehat{\mathcal{R}}_{2,5}^i$ satisfies the claim of this statement and not just the vector $\mathbf{y^*}$. First, note that the statement ``$\forall \mathbf{w}\in\mathbb{F}_2^{32} \text{ it holds } \mathbf{(y^*,w)}\notin \widehat{\mathcal{R}}_{2,6}$'' is equivalent to the following:
	\begin{equation}
		\forall \mathbf{w}\in\mathbb{F}_2^{32}\,\,\exists \mathbf{u} \in \mathcal{R}_{2,5} : d(\mathbf{y^*,u}) + \min\limits_{\mathbf{v}\in\mathcal{R}_{1,5}} d(\mathbf{w+u,v}) < 18.
	\end{equation}
	
	Let $\mathbf{y}$ be an arbitrary vector from $\widehat{\mathcal{R}}_{2,5}^i$. Since all functions in $\widehat{\mathcal{R}}_{2,5}^i$ are $\mathrm{EL}^2$-equivalent, there exists a nonsingular linear transformation of variables $L$ and a function $g\in \mathcal{R}_{2,5}$ such that $f_\mathbf{y} = f_\mathbf{y^*} \circ L + g$. Let us denote as $\mathbf{g}$ the value vector of $g$ and as $\mathbf{L}$ the linear transformation on $\mathbb{F}_2^{32}$, corresponding to the transformation $L$ on functions. Then $\mathbf{y} = \mathbf{Ly^*} + \mathbf{g}$.
	
	Let us take an arbitrary vector $\mathbf{w}\in\mathbb{F}_2^{32}$ and let us denote $\mathbf{w_y} = \mathbf{L^{-1}(w+g)}$. Then, by (7), there exists a vector $\mathbf{u} \in \mathcal{R}_{2,5}$ such that
	\begin{equation}
    d(\mathbf{y^*,u}) + \min\limits_{\mathbf{v}\in\mathcal{R}_{1,5}} d(\mathbf{w_y+u,v}) < 18.
    \end{equation}
	
	Trivially, the transformation $\mathbf{L}$, as well as the addition of the function $\mathbf{g}$ are both automorphisms of $\mathbb{F}_2^{32}$. So let us apply $\mathbf{L}$ to all vectors being compared in (8) and add $\mathbf{g}$ to some of them, without changing the inequality:
	\begin{equation}
	d(\mathbf{Ly^*+g,Lu+g}) + \min\limits_{\mathbf{v}\in\mathcal{R}_{1,5}} d(\mathbf{Lw_y+Lu,Lv}) < 18.
	\end{equation}
	
	Note that the transformation $\mathbf{L}$ is also an automorphism of the codes $\mathcal{R}_{1,5}$ and $\mathcal{R}_{2,5}$. Let us denote $\mathbf{u_y}:=\mathbf{Lu+g}$. Then $\mathbf{u_y} \in \mathcal{R}_{2,5}$ and (9) can be transformed into
	\begin{equation}
	d(\mathbf{y,u_y}) + \min\limits_{\mathbf{v}\in\mathcal{R}_{1,5}} d(\mathbf{w+u_y,v}) < 18.
	\end{equation}
	
	Thus, for an arbitrary $\mathbf{w}\in\mathbb{F}_2^{32}$ we have found a vector $\mathbf{u_y} \in \mathcal{R}_{2,5}$ such that (10) holds. This means that (7) holds for the vector $\mathbf{y}$ that we have previously selected from $\widehat{\mathcal{R}}_{2,5}^i$. Since our selection was arbitrary, we have proved that the first statement of the lemma holds for the set $\widehat{\mathcal{R}}_{2,5}^i$ in question.
\end{proof}

\begin{proposition}
	Let $\mathbf{(\tilde{u},\tilde{u}+\tilde{v})} \in \widehat{\widehat{\mathcal{R}}}_{2,6}$. Then $\mathbf{\tilde{u}}\in\mathcal{R}_{2,5}$. 
\end{proposition}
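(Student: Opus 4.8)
The plan is to combine the linearity of $\mathcal{R}_{2,6}$ with Lemmas~9 and~10, reducing whenever possible to the already-established metric regularity of $\mathcal{R}_{2,5}$ (Theorem~2 with $m=5$). Since $\mathcal{R}_{2,6}$ is linear, the auxiliary lemma gives $\mathbf{(\tilde{u},\tilde{u}+\tilde{v})}\in\widehat{\widehat{\mathcal{R}}}_{2,6}$ if and only if $\mathbf{(\tilde{u},\tilde{u}+\tilde{v})}+\widehat{\mathcal{R}}_{2,6}=\widehat{\mathcal{R}}_{2,6}$. In particular, for every $\mathbf{(y,w)}\in\widehat{\mathcal{R}}_{2,6}$ the vector $\mathbf{(\tilde{u}+y,\tilde{u}+\tilde{v}+w)}$ lies in $\widehat{\mathcal{R}}_{2,6}$, so by Lemma~9 its first half $\mathbf{\tilde{u}+y}$ lies in $\widehat{\mathcal{R}}_{2,5}$. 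Writing $R$ for the set of all first halves of vectors of $\widehat{\mathcal{R}}_{2,6}$, we have $R\neq\emptyset$, $R\subseteq\widehat{\mathcal{R}}_{2,5}$, and $\mathbf{\tilde{u}}+R\subseteq R$, hence $\mathbf{\tilde{u}}+R=R$ because adding $\mathbf{\tilde{u}}$ is an involution of the finite set $\mathbb{F}_2^{32}$.

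The first step is to identify $R$. By Lemma~10 applied with $i=1$ and with $i=2$, each of $\widehat{\mathcal{R}}_{2,5}^1,\widehat{\mathcal{R}}_{2,5}^2$ is either disjoint from $R$ or entirely contained in it, so $R$ is one of $\widehat{\mathcal{R}}_{2,5}^1$, $\widehat{\mathcal{R}}_{2,5}^2$, $\widehat{\mathcal{R}}_{2,5}$. Which alternative holds is decided by one test per class: by the automorphism-transfer argument in the proof of Lemma~10 it suffices to decide, for a single representative $\mathbf{y}$ of $\widehat{\mathcal{R}}_{2,5}^i$ (say the value vector of $g_i^{*}$ in $5$ variables), whether there exists $\mathbf{w}$ with $d(\mathbf{y},\mathbf{u})+d(\mathbf{w+u},\mathcal{R}_{1,5})\geqslant 18$ for all $\mathbf{u}\in\mathcal{R}_{2,5}$ (the reverse inequality being automatic from the $(\mathbf{u},\mathbf{u+v})$ construction); I would settle this from the Berlekamp--Welch weight distribution of $\mathcal{R}_{1,5}$ together with the coset structure of $\mathcal{R}_{2,5}$.

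If $R=\widehat{\mathcal{R}}_{2,5}$, then $\mathbf{\tilde{u}}+\widehat{\mathcal{R}}_{2,5}=\widehat{\mathcal{R}}_{2,5}$, so by the auxiliary lemma $\mathbf{\tilde{u}}\in\widehat{\widehat{\mathcal{R}}}_{2,5}=\mathcal{R}_{2,5}$ (Theorem~2), which is the assertion. If instead $R=\widehat{\mathcal{R}}_{2,5}^j$ is a single class, I would proceed by homogeneous degree. Writing a general element of $R$ as $g_j^{*}\circ L_{\mathrm{A}}+h$ with $\deg h\leqslant 2$ and comparing homogeneous components of each degree $d$ on the two sides of $\mathbf{\tilde{u}}+R=R$, one sees that the degree-$d$ homogeneous part of $\mathbf{\tilde{u}}$ stabilises, under translation, the set $D_d$ of degree-$d$ homogeneous parts of elements of $R$; it therefore suffices to show $D_d$ has trivial translation stabiliser for every $d\geqslant 3$, since then $\mathbf{\tilde{u}}$ has degree $\leqslant 2$, i.e.\ $\mathbf{\tilde{u}}\in\mathcal{R}_{2,5}$. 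For $d=4$ (relevant when $g_j^{*}=g_1^{*}$; for $g_j^{*}=g_2^{*}$ one has $D_4=\{\mathbf{0}\}$ trivially) the degree-$4$ homogeneous part of $g_1^{*}\circ L_{\mathrm{A}}$ is the quartic whose coefficient vector is the tuple of $4\times4$ minors of the top four rows of $\mathrm{A}$, and as $\mathrm{A}$ varies this ranges over all nonzero vectors of $\mathbb{F}_2^5$ (Pl\"ucker coordinates of $\mathrm{Gr}(4,5;\mathbb{F}_2)\cong\mathbb{P}^4(\mathbb{F}_2)$); thus $D_4$ is a punctured space $V\setminus\{\mathbf{0}\}$, which indeed has trivial translation stabiliser. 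For $d=3$ one needs the analogous statement about the degree-$3$ homogeneous parts of $g_1^{*}\circ L_{\mathrm{A}}$ and $g_2^{*}\circ L_{\mathrm{A}}$, using the explicit forms $g_1^{*}=\overline{x_m}+x_m g^{\star}$ and $g_2^{*}=x_m g^{\star}$ of Section~8.

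I expect the main obstacle to be exactly this single-class branch, and within it the degree-$3$ layer: the degree-$3$ homogeneous parts occurring in a single class form a $\mathrm{GL}_5(\mathbb{F}_2)$-orbit whose cardinality need not be odd, so the easy parity argument that works at degree~$4$ can break down, and one has to use the stronger fact that one and the same $\mathbf{\tilde{u}}$ realises $\mathbf{\tilde{u}}+(g_j^{*}\circ L_{\mathrm{A}}+h)\in R$ for all nonsingular $\mathrm{A}$ simultaneously. The cleanest route, which I would attempt first, is to establish already in the first step that statement~(2) of Lemma~10 holds for both $i=1$ and $i=2$ --- equivalently, that $R=\widehat{\mathcal{R}}_{2,5}$ in all cases --- so that the proof collapses to the short argument through the metric regularity of $\mathcal{R}_{2,5}$, leaving only that finite (possibly computer-assisted) check.
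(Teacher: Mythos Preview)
Your overall framework matches the paper exactly: you define the set of first halves (the paper calls it $Y$), use Lemmas~9 and~10 to pin it down to one of $\widehat{\mathcal{R}}_{2,5}^1$, $\widehat{\mathcal{R}}_{2,5}^2$, $\widehat{\mathcal{R}}_{2,5}$, deduce $\mathbf{\tilde u}+Y=Y$, and finish the case $Y=\widehat{\mathcal{R}}_{2,5}$ via the metric regularity of $\mathcal{R}_{2,5}$. The divergence is only in the single-class branch.

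There you propose either a fresh homogeneous-degree analysis (which you correctly work out at degree~$4$ but leave open at degree~$3$) or a computer check establishing $Y=\widehat{\mathcal{R}}_{2,5}$ outright. The paper does neither: it simply points back to Subsection~8.3 and observes that the case analysis there already shows, for each class $\widehat{\mathcal{R}}_{2,5}^i$ \emph{separately}, that no $f\notin\mathcal{R}_{2,5}$ satisfies $f+\widehat{\mathcal{R}}_{2,5}^i=\widehat{\mathcal{R}}_{2,5}^i$. Indeed, in every case of Subsection~8.3 the element $g$ one adds to $f$ can be chosen from the class under consideration, and the resulting $f+g$ is shown to lie in neither class (by degree in Cases~1 and~3, and by the $p_k$-invariant via Lemmas~6 and~8 in Case~2). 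So the ``degree-$3$ obstacle'' you anticipate is precisely what Subsection~8.3 has already resolved --- your degree-by-degree plan is in effect reinventing that proof through the quadratic-dual lens, and you need not redo it.

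In short: the proposal is structurally correct but takes a longer route in the single-class case and leaves it incomplete; the paper's shortcut is to recycle the already-proved stabiliser computation from Subsection~8.3, which disposes of all three possibilities for $Y$ uniformly without any further case work or computation.
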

\begin{proof}
Let us denote $Y := \{\mathbf{y}\in\mathbb{F}_2^{32}\,|\,\exists \mathbf{w}\in\mathbb{F}_2^{32} ; (\mathbf{y,w})\in\widehat{\mathcal{R}}_{2,6}\}$. From Lemma 9 it follows that $Y\subseteq \widehat{\mathcal{R}}_{2,5}$ and is nonempty. From Lemma 10 we can conclude that $Y$ can only coincide with one of the three sets: $\widehat{\mathcal{R}}_{2,5}^1$, $\widehat{\mathcal{R}}_{2,5}^2$ or $\widehat{\mathcal{R}}_{2,5}$.

Let $\mathbf{(\tilde{u},\tilde{u}+\tilde{v})} \in \widehat{\widehat{\mathcal{R}}}_{2,6}$. Then, as we know, $\mathbf{(\tilde{u},\tilde{u}+\tilde{v})} + \widehat{\mathcal{R}}_{2,6} = \widehat{\mathcal{R}}_{2,6}$.
This implies that $\mathbf{\tilde{u}} + Y = Y$. Since $\mathcal{R}_{2,5}$ is proven to be metrically regular, we know that only the vectors from $\mathcal{R}_{2,5}$ preserve its metric complement under addition. Following the proof of the metric regularity of the code $\mathcal{R}_{m-3,m}$ for $m$ odd (Subsection 8.3), it is easy to see that the same can be shown true for the sets $\widehat{\mathcal{R}}_{2,5}^1$ and $\widehat{\mathcal{R}}_{2,5}^2$ if they are considered separately one from another. Therefore, regardless of the contents of $Y$, only vectors from $\mathcal{R}_{2,5}$ preserve it under addition, and therefore $\mathbf{\tilde{u}}\in\mathcal{R}_{2,5}$.
\end{proof}

Recall from Section 3 that $\widehat{\mathcal{R}}_{1,5}$ is composed of $4$ $\mathrm{EA}$-equivalence classes: $\widehat{\mathcal{R}}_{1,5} = \bigcup_{i=1}^4 \widehat{\mathcal{R}}_{1,5}^i$. Similar to Lemma 10, the following statement holds:

\begin{lemma}
	For each $i=1,2,3,4$ one of the following statements holds:
	\begin{enumerate}
		\item {$\forall \mathbf{w'} \in \widehat{\mathcal{R}}_{1,5}^i\,\,\forall \mathbf{(y,w)}\in\widehat{\mathcal{R}}_{2,6}\,\,\forall\mathbf{u}\in\mathcal{R}_{2,5}\,\,(d\mathbf{(y,u)}=6 \rightarrow \mathbf{w+u \neq w'})$};
		\item {$\forall \mathbf{w'} \in \widehat{\mathcal{R}}_{1,5}^i\,\,\exists \mathbf{(y,w)}\in\widehat{\mathcal{R}}_{2,6}\,\,\exists\mathbf{u}\in\mathcal{R}_{2,5} : (d\mathbf{(y,u)}=6 \wedge \mathbf{w+u = w'})$};
	\end{enumerate}
\end{lemma}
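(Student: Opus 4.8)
The statement parallels Lemma 10, with the transport carried out on the second-block datum instead of on $\mathbf{y}$. The point is to show that, for a fixed class $\widehat{\mathcal{R}}_{1,5}^i$, the set of vectors of that class which arise as $\mathbf{w+u}$ for some $\mathbf{(y,w)}\in\widehat{\mathcal{R}}_{2,6}$ and some $\mathbf{u}\in\mathcal{R}_{2,5}$ with $d(\mathbf{y,u})=6$ is invariant under the $\mathrm{EA}$-equivalence defining the class, hence is either empty or the whole class. So fix $i$, assume statement (2) fails for it, and negate (2): there is a vector $\mathbf{w'_0}\in\widehat{\mathcal{R}}_{1,5}^i$ such that no $\mathbf{(y,w)}\in\widehat{\mathcal{R}}_{2,6}$ and $\mathbf{u}\in\mathcal{R}_{2,5}$ with $d(\mathbf{y,u})=6$ satisfy $\mathbf{w+u}=\mathbf{w'_0}$. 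I would deduce the same for an arbitrary $\mathbf{w'}\in\widehat{\mathcal{R}}_{1,5}^i$, which is exactly statement (1); so suppose towards a contradiction that some such $\mathbf{w'}$ is realized, say by $\mathbf{(y,w)}\in\widehat{\mathcal{R}}_{2,6}$ and $\mathbf{u}\in\mathcal{R}_{2,5}$ with $d(\mathbf{y,u})=6$ and $\mathbf{w+u}=\mathbf{w'}$.

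Since $\mathcal{R}_{1,5}$ is exactly the set of affine functions in five variables, $\mathbf{w'_0}$ and $\mathbf{w'}$ lying in one $\mathrm{EA}$-class gives $f_{\mathbf{w'_0}}=f_{\mathbf{w'}}\circ L_{\mathrm{A}}^{\mathrm{b}}+h$ with $\mathrm{A}$ nonsingular and $\deg h\leqslant 1$ (the coset shifts being absorbed into the affine part $h$). As a map on value vectors of $\mathbb{F}_2^{32}$ this is $\mathbf{z}\mapsto\phi(\mathbf{z})+\mathbf{a}$, where $\phi$ is the permutation matrix of the substitution $\mathrm{x}\mapsto\mathrm{Ax+b}$ and $\mathbf{a}\in\mathcal{R}_{1,5}$; in particular it sends $\mathbf{w'}$ to $\mathbf{w'_0}$. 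I would lift its two elementary factors to isometries of $\mathbb{F}_2^{64}$: the coordinate substitution to $\Phi_{\phi}(\mathbf{y,w})=(\phi(\mathbf{y}),\phi(\mathbf{w}))$, and the affine shift to the translation $\Phi_{\mathbf{a}}(\mathbf{y,w})=(\mathbf{y},\mathbf{w}+\mathbf{a})$. Using the inductive description $\mathcal{R}_{2,6}=\{(\mathbf{u},\mathbf{u+v}):\mathbf{u}\in\mathcal{R}_{2,5},\,\mathbf{v}\in\mathcal{R}_{1,5}\}$, together with the facts that $\phi$ fixes every Reed-Muller code and that $(\mathbf{0},\mathbf{a})$ is itself a codeword of $\mathcal{R}_{2,6}$, one checks directly that $\Phi_{\phi}$ and $\Phi_{\mathbf{a}}$ each map $\mathcal{R}_{2,6}$, and hence (being isometries) $\widehat{\mathcal{R}}_{2,6}$, onto themselves; each acts on the first block by $\phi$, resp.\ trivially, so it fixes $\mathcal{R}_{2,5}$ setwise and preserves the condition $d(\mathbf{y,u})=6$; and each transforms the datum $\mathbf{w+u}$ by $\phi$, resp.\ by adding $\mathbf{a}$ --- exactly as the corresponding factor of the $\mathrm{EA}$-map acts.

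Composing these lifts according to the factorization of the $\mathrm{EA}$-map, I obtain an isometry $\Phi$ of $\mathbb{F}_2^{64}$ that fixes $\widehat{\mathcal{R}}_{2,6}$ and $\mathcal{R}_{2,5}$ setwise, preserves $d(\mathbf{y,u})=6$, and whose induced action on the datum $\mathbf{w+u}$ is the map $\mathbf{z}\mapsto\phi(\mathbf{z})+\mathbf{a}$ sending $\mathbf{w'}$ to $\mathbf{w'_0}$. Applying $\Phi$ to $\mathbf{(y,w)}$ and (in the first block) to $\mathbf{u}$ yields $\mathbf{(y'',w'')}\in\widehat{\mathcal{R}}_{2,6}$ and $\mathbf{u''}\in\mathcal{R}_{2,5}$ with $d(\mathbf{y'',u''})=6$ and $\mathbf{w''+u''}=\mathbf{w'_0}$, contradicting the non-realizability of $\mathbf{w'_0}$. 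Hence no $\mathbf{w'}\in\widehat{\mathcal{R}}_{1,5}^i$ is realized, which is statement (1). I expect the lifting step to be the only genuine obstacle: one must confirm that \emph{every} generator of the $\mathrm{EA}$-equivalence on $\widehat{\mathcal{R}}_{1,5}$ lifts to an honest isometry of $\mathbb{F}_2^{64}$ compatible with the $(\mathbf{u},\mathbf{u+v})$-construction of $\mathcal{R}_{2,6}$. In Lemma 10 the transport acted on the first block and forced the correction term $\mathbf{w_y}=\mathbf{L^{-1}(w+g)}$; here the transported datum lives in the second block, and the lifts act either identically on both blocks (coordinate substitutions) or only on the second block (affine additions), so no correction appears and compatibility with $\mathcal{R}_{2,6}$ follows immediately from its inductive description.
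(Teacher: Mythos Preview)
Your argument is correct and follows essentially the same route as the paper's proof: negate (2) to obtain a non-realizable witness, then transport along the $\mathrm{EA}$-equivalence by lifting the affine variable substitution to the diagonal action $(\phi,\phi)$ on $\mathbb{F}_2^{64}$ and the affine shift to a translation by $(\mathbf{0},\mathbf{a})\in\mathcal{R}_{2,6}$, both of which are automorphisms of $\mathcal{R}_{2,6}$ and hence of $\widehat{\mathcal{R}}_{2,6}$. The paper phrases this as a direct verification (applying $\mathbf{A}^{-1}$ to an arbitrary $(\mathbf{y,w}),\mathbf{u}$ and invoking the witness property of $\mathbf{w^*}$), whereas you set it up by contradiction and are more explicit about the two lifts and why each preserves the relevant data; the content is the same.
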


\begin{proof}
	
	Assume that for some $i$ the second statement does not hold. Then there exists a vector $\mathbf{w^*} \in \widehat{\mathcal{R}}_{1,5}^i$ such that
	\begin{equation}
	\forall \mathbf{(y,w)}\in\widehat{\mathcal{R}}_{2,6}\,\,\forall\mathbf{u}\in\mathcal{R}_{2,5}\,\,(d\mathbf{(y,u)}=6 \rightarrow \mathbf{w+u \neq w^*})
	\end{equation}
	
	Let $\mathbf{w'}$ be an arbitrary vector from $\widehat{\mathcal{R}}_{1,5}^i$. Since all functions in $\widehat{\mathcal{R}}_{1,5}^i$ are $\mathrm{EA}$-equivalent, there exists a nonsingular affine transformation of variables $A$ and a function $g\in \mathcal{R}_{1,5}$ such that $f_\mathbf{w'} = f_\mathbf{w^*} \circ A + g$. Let us denote as $\mathbf{g}$ the value vector of $g$ and as $\mathbf{A}$ the linear transformation on $\mathbb{F}_2^{32}$, corresponding to the transformation $A$ on functions. Then $\mathbf{w'} = \mathbf{Aw^*} + \mathbf{g}$.
	
	Let $\mathbf{(y,w)}$ be an arbitrary vector from $\widehat{\mathcal{R}}_{2,6}$ and $\mathbf{u}$ be an arbitrary vector from $\mathcal{R}_{2,5}$. Since $\mathbf{A}$ is an automorphism of the codes $\mathcal{R}_{2,5}$ and $\mathcal{R}_{1,5}$, the vector $\mathbf{A^{-1}u}$ is in $\mathcal{R}_{2,5}$ and $\mathbf{(A,A)}\cdot \mathcal{R}_{2,6} = \mathcal{R}_{2,6}$. Since the addition of $\mathbf{g}$ is an automorphism of $\mathcal{R}_{2,5}$, $\mathcal{R}_{1,5}$ and $\mathbb{F}_2^{32}$, the vector $\mathbf{(A^{-1}y,A^{-1}(w+g))}$ is in $\widehat{\mathcal{R}}_{2,6}$. Hence, from (11) it follows that:
	\begin{equation}
		d\mathbf{(A^{-1}y,A^{-1}u)}=6 \rightarrow \mathbf{A^{-1}(w+g)+A^{-1}u \neq w^*}.
	\end{equation}
	
	Applying $\mathbf{A}$, (12) shows to be equivalent to the following statement:
	
	\begin{equation}
	d\mathbf{(y,u})=6 \rightarrow \mathbf{w+u \neq w'}.
	\end{equation}
	
	Hence, we have shown that for an arbitrary vector $\mathbf{w'}$ from $\widehat{\mathcal{R}}_{1,5}^i$, an arbitrary vector $\mathbf{(y,w)}$ from $\widehat{\mathcal{R}}_{2,6}$ and an arbitrary $\mathbf{u}$ from $\mathcal{R}_{2,5}$ (13) holds. This proves that the first statement holds for the class $\widehat{\mathcal{R}}_{1,5}^i$.
	
\end{proof}

The following result shows that any of the $\mathrm{EA}$-equivalence classes of the metric complement of $\mathcal{R}_{1,5}$ are also rather ``unstable'' when summed with a non-affine function:
\begin{lemma}
	For any $\mathbf{v}\notin \mathcal{R}_{1,5}$ and any $i=1,2,3,4$ there exists a vector $\mathbf{w}\in \widehat{\mathcal{R}}_{1,5}^i$ such that $\mathbf{v+w} \notin \widehat{\mathcal{R}}_{1,5}$.
\end{lemma}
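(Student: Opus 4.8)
The plan is to reduce the statement, exactly as in the proof of Theorem~1, to a finite check on coset representatives, after first peeling off a trivial parity case. Since $\mathcal{R}_{1,5}$ is an even code and each of the four classes $\widehat{\mathcal{R}}_{1,5}^i$ has coset--leader weight $12$, every vector of $\widehat{\mathcal{R}}_{1,5}$ has even weight. Hence if $\mathbf{v}$ has odd weight, then $\mathbf{v}+\mathbf{w}$ has odd weight for every $\mathbf{w}\in\widehat{\mathcal{R}}_{1,5}$, so $\mathbf{v}+\mathbf{w}\notin\widehat{\mathcal{R}}_{1,5}$ and any $\mathbf{w}\in\widehat{\mathcal{R}}_{1,5}^i$ serves. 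So from now on we may assume $\mathbf{v}$ has even weight, i.e. $f_{\mathbf{v}}$ lies in one of the $28$ even non-trivial coset classes of Table~\ref{table:RM15Cosets}.

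Next I would reduce to those $28$ representatives. Each $\widehat{\mathcal{R}}_{1,5}^i$ is an $\mathrm{EA}$-equivalence class, hence closed under $g\mapsto g+h$ (with $h$ affine) and under $g\mapsto g\circ L_{\mathrm{A}}^{\mathrm{b}}$; in particular these operations fix each of the four classes setwise. Let $f=f_{\mathbf{v}}$ lie in the coset class $C$ with representative $f_c$, and choose $(\mathrm{A},\mathrm{b},h)$ with $f\circ L_{\mathrm{A}}^{\mathrm{b}}+h=f_c$. If $g_c\in\widehat{\mathcal{R}}_{1,5}^i$ satisfies $f_c+g_c\notin\widehat{\mathcal{R}}_{1,5}$, then $g_f:=(g_c+h)\circ(L_{\mathrm{A}}^{\mathrm{b}})^{-1}$ again lies in $\widehat{\mathcal{R}}_{1,5}^i$ (the construction uses only $\mathrm{EA}$-operations, which preserve the class) while $f+g_f=(f_c+g_c)\circ(L_{\mathrm{A}}^{\mathrm{b}})^{-1}$ is $\mathrm{EA}$-equivalent to $f_c+g_c$ and hence lies outside $\widehat{\mathcal{R}}_{1,5}$. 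So it suffices to prove: for each of the $28$ even non-trivial coset representatives $f_c$ and each $i\in\{1,2,3,4\}$ there is $g\in\widehat{\mathcal{R}}_{1,5}^i$ with $f_c+g\notin\widehat{\mathcal{R}}_{1,5}$.

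I would then carry out this finite check. For the four representatives that are themselves deep-hole representatives (classes $14,22,26,28$) the ``diagonal'' choice $g=f_c$ gives $f_c+g=0\notin\widehat{\mathcal{R}}_{1,5}$, settling the index $i$ equal to the class of $f_c$; more generally, a deep-hole function can never be absorbed by a translate from its own class, since its sum with itself is $0$, which disposes of all four self-referential cases uniformly. For the remaining pairs one proceeds exactly as in the construction of Table~\ref{table:RM15Cosets}: for each $f_c$ and each $i$ one exhibits a $g\in\widehat{\mathcal{R}}_{1,5}^i$ (Table~\ref{table:RM15Cosets} already records one such $g$ per $f_c$, whose class is read off from $C(g)$), adds it, and identifies the $\mathrm{EA}$-class of the sum by a few variable swaps, checking it is none of $14,22,26,28$. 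This is a bounded computation (at most $28\cdot 4$ pairs) and can be tabulated in complete analogy with Table~\ref{table:RM15Cosets}.

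The main obstacle is precisely this last step: one must verify that a witness exists for \emph{every} pair (even coset class, deep-hole class $i$), not merely for $\widehat{\mathcal{R}}_{1,5}$ as a whole. A priori it is conceivable that for some nonzero coset an entire class is absorbed, $\mathbf{v}+\widehat{\mathcal{R}}_{1,5}^i\subseteq\widehat{\mathcal{R}}_{1,5}$; the argument behind Theorem~1 only rules out $\mathbf{v}+\widehat{\mathcal{R}}_{1,5}\subseteq\widehat{\mathcal{R}}_{1,5}$, so it produces a witness lying in \emph{some} one class for each $\mathbf{v}$, but gives no control over which class. Ruling out absorption for each of the four classes separately is exactly what the case check accomplishes, with the diagonal observation above eliminating the cases where $\mathbf{v}$ is itself a deep hole.
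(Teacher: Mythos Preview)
Your proposal is correct and follows essentially the same approach as the paper: the paper's proof likewise reduces (``Like in Section~3'') to showing that for each $i$ and each even non-trivial coset class there exist $f\in C$ and $g\in\widehat{\mathcal{R}}_{1,5}^i$ with $f+g\notin\widehat{\mathcal{R}}_{1,5}$, and then discharges the resulting $28\times 4$ finite check by four explicit tables in the Appendix (one per deep-hole class), exactly as you describe.
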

\begin{proof}
	Like in Section 3, in order to prove the statement it is enough to show that for any $i=1,2,3,4$ and any $\mathrm{EA}$-equivalence class $C$ of $\mathbb{F}_2^{32}$ of even weight (other than $\mathcal{R}_{1,5}$) there exists
	a function $f\in C$ and a function $g\in \widehat{\mathcal{R}}_{1,5}^i$ such that $f+g \notin \widehat{\mathcal{R}}_{1,5}$. The proof for this can be found in the appendix in Tables~\ref{table:RM1514Cosets}-\ref{table:RM1528Cosets}.
\end{proof}

\begin{theorem}
	The code $\mathcal{R}_{2,6}$ is metrically regular.
\end{theorem}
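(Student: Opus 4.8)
The plan is to split the statement into the two ``block'' claims that every $\mathbf{(\tilde u,\tilde u+\tilde v)}\in\widehat{\widehat{\mathcal{R}}}_{2,6}$ satisfies $\mathbf{\tilde u}\in\mathcal{R}_{2,5}$ and $\mathbf{\tilde v}\in\mathcal{R}_{1,5}$: the first is exactly Proposition~2, so what remains is the second. Once both hold, $\mathbf{(\tilde u,\tilde u+\tilde v)}$ lies in $\mathcal{R}_{2,6}$ by its inductive $(\mathbf{u},\mathbf{u+v})$ description, i.e. $\widehat{\widehat{\mathcal{R}}}_{2,6}\subseteq\mathcal{R}_{2,6}$; the reverse inclusion holds for any linear code (a codeword preserves the metric complement under addition), so $\mathcal{R}_{2,6}$ is metrically regular.

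First I would introduce the set
\begin{equation*}
W:=\{\,\mathbf{w+u} : \mathbf{(y,w)}\in\widehat{\mathcal{R}}_{2,6},\ \mathbf{u}\in\mathcal{R}_{2,5},\ d(\mathbf{y},\mathbf{u})=6\,\}.
\end{equation*}
By Lemma~9 we have $W\subseteq\widehat{\mathcal{R}}_{1,5}$, and $W\neq\emptyset$ because $\widehat{\mathcal{R}}_{2,6}\neq\emptyset$ and each $\mathbf{(y,w)}$ in it has $\mathbf{y}\in\widehat{\mathcal{R}}_{2,5}$, hence a nearest codeword $\mathbf{u}\in\mathcal{R}_{2,5}$ at distance exactly $6$. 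By Lemma~11 each $\mathrm{EA}$-class $\widehat{\mathcal{R}}_{1,5}^i$ is either disjoint from $W$ or entirely contained in $W$, so $W$ is a union of some of the four classes; being nonempty, it contains at least one full class $\widehat{\mathcal{R}}_{1,5}^{i_0}$.

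Now let $\mathbf{(\tilde u,\tilde u+\tilde v)}\in\widehat{\widehat{\mathcal{R}}}_{2,6}$, so that $\mathbf{(\tilde u,\tilde u+\tilde v)}+\widehat{\mathcal{R}}_{2,6}=\widehat{\mathcal{R}}_{2,6}$, and recall $\mathbf{\tilde u}\in\mathcal{R}_{2,5}$ by Proposition~2. I would show $\mathbf{\tilde v}+W=W$: given $\mathbf{(y,w)}\in\widehat{\mathcal{R}}_{2,6}$ and $\mathbf{u}\in\mathcal{R}_{2,5}$ with $d(\mathbf{y},\mathbf{u})=6$, the vector $\mathbf{(\tilde u+y,\tilde u+\tilde v+w)}$ again lies in $\widehat{\mathcal{R}}_{2,6}$, the codeword $\mathbf{\tilde u+u}\in\mathcal{R}_{2,5}$ is at distance $6$ from $\mathbf{\tilde u+y}$, and therefore $\mathbf{\tilde v}+(\mathbf{w+u})=(\mathbf{\tilde u+\tilde v+w})+(\mathbf{\tilde u+u})\in W$ by the very definition of $W$; as $\mathbf{w+u}$ runs over all of $W$ this gives $\mathbf{\tilde v}+W\subseteq W$, hence $\mathbf{\tilde v}+W=W$ since adding $\mathbf{\tilde v}$ is an involution. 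If $\mathbf{\tilde v}\notin\mathcal{R}_{1,5}$, then Lemma~12 applied to the full class $\widehat{\mathcal{R}}_{1,5}^{i_0}\subseteq W$ produces some $\mathbf{w'}\in\widehat{\mathcal{R}}_{1,5}^{i_0}$ with $\mathbf{\tilde v}+\mathbf{w'}\notin\widehat{\mathcal{R}}_{1,5}$, contradicting $\mathbf{\tilde v}+\mathbf{w'}\in\mathbf{\tilde v}+W=W\subseteq\widehat{\mathcal{R}}_{1,5}$. Thus $\mathbf{\tilde v}\in\mathcal{R}_{1,5}$, and the theorem follows.

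The genuine difficulty, I expect, is not in the assembly above but in the structural inputs it consumes: the whole scheme works only because the ``second-block residues'' $\mathbf{w+u}$ occurring in deep holes of $\mathcal{R}_{2,6}$ assemble into a union of \emph{entire} $\mathrm{EA}$-equivalence classes of $\widehat{\mathcal{R}}_{1,5}$ (Lemma~11), so that the instability statement Lemma~12 --- which only supplies a destabilising addend when a whole class is available --- can be invoked. Proving Lemmas~10 and~11 is the technical core; it amounts to transporting membership in $\widehat{\mathcal{R}}_{2,6}$ along affine/linear automorphisms that stabilise both $\mathcal{R}_{1,5}$ and $\mathcal{R}_{2,5}$, and rests ultimately on the explicit coset-class data for $\mathcal{R}_{1,5}$ that underlies Lemma~12.
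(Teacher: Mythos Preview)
Your proof is correct and follows essentially the same approach as the paper. The only difference is cosmetic: you package the ``second-block residues'' into an explicit set $W$ and prove $\mathbf{\tilde v}+W=W$, whereas the paper works directly with a single witness $\mathbf{w^*}$ obtained from Lemma~11 and Lemma~12 and derives the contradiction pointwise; the sequence of ingredients (Proposition~2, Lemma~9, Lemma~11, Lemma~12) and the logical flow are identical.
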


\begin{proof}
	
Since any linear code is a subset of its second metric complement, we only need to prove that $\widehat{\widehat{\mathcal{R}}}_{2,6}\subseteq \mathcal{R}_{2,6}$. Let $\mathbf{(\tilde{u},\tilde{u}+\tilde{v})}$ be a vector from $\widehat{\widehat{\mathcal{R}}}_{2,6}$. We have already proved that $\mathbf{\tilde{u}}$ is in $\mathcal{R}_{2,5}$, therefore the vector $\mathbf{(0,\tilde{v})}$ is also in $\widehat{\widehat{\mathcal{R}}}_{2,6}$. Let us prove that $\mathbf{\tilde{v}}$ is in $\mathcal{R}_{1,5}$.	
	
Assume that $\mathbf{\tilde{v}}\notin\mathcal{R}_{1,5}$. Since (by Lemma 9) for an arbitrary $\mathbf{(y,w)}\in\widehat{\mathcal{R}}_{2,6}$ there exists a vector $\mathbf{u}\in\mathcal{R}_{2,5}$ such that $\mathbf{(w+u)}\in \widehat{\mathcal{R}}_{1,5}$, for some $i$ the second statement of Lemma 11 must hold. 

By Lemma 12, there exists a vector $\mathbf{w^*}\in \widehat{\mathcal{R}}_{1,5}^i$ such that $\mathbf{\tilde{v} + w^*} \notin \widehat{\mathcal{R}}_{1,5}$. 

By the second statement of Lemma 11, for this $\mathbf{w^*}$ there exists a vector $\mathbf{(y,w)}\in\widehat{\mathcal{R}}_{2,6}$ and a vector $\mathbf{u}\in\mathcal{R}_{2,5}$ such that $(d\mathbf{(y,u)}=6 \wedge \mathbf{w+u = w^*})$. Since $\mathbf{(0,\tilde{v})}\in\widehat{\widehat{\mathcal{R}}}_{2,6}$, $\mathbf{(y,w+\tilde{v})}$ is also in $\widehat{\mathcal{R}}_{2,6}$, and since $d\mathbf{(y,u)}=6$, by Lemma 9 the vector $\mathbf{w+\tilde{v}+u}$ is in $\widehat{\mathcal{R}}_{1,5}$. But $\mathbf{w+\tilde{v}+u} = \mathbf{\tilde{v} + w^*} \notin \widehat{\mathcal{R}}_{1,5}$, contradiction. Therefore, $\mathbf{\tilde{v}}\in\mathcal{R}_{1,5}$ and hence $\mathbf{(\tilde{u},\tilde{u}+\tilde{v})}\in \mathcal{R}_{2,6}$.
\end{proof}

\section{Conclusion}

In this paper we have established the metric regularity of the codes $\mathcal{RM}(1,5)$, $\mathcal{RM}(2,6)$ and of the codes $\mathcal{RM}(k,m)$ for $k\geqslant m-3$. Factoring in the result by Tokareva \cite{TOK10}, which proves the metric regularity of $\mathcal{RM}(1,m)$ for even $m$, all infinite families of Reed-Muller codes with known covering radius are covered. The only other Reed-Muller codes with known covering radius, metric regularity of which has not been yet established, are $\mathcal{RM}(1,7)$ and $\mathcal{RM}(2,7)$.  Given these results, we formulate the following
\begin{conj*}
    All Reed-Muller codes $\mathcal{RM}(k,m)$ are metrically regular.
\end{conj*}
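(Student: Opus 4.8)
The plan is to attack the conjecture by strong induction on the number of variables $m$ (with, if convenient, a secondary induction on the order $k$), using the Plotkin decomposition $\mathcal{R}_{k,m} = \{(\mathbf{u},\mathbf{u+v}) : \mathbf{u}\in\mathcal{R}_{k,m-1},\ \mathbf{v}\in\mathcal{R}_{k-1,m-1}\}$ exactly as in Section~\ref{RM26Section}. The boundary orders are already disposed of: $k=0$, $k=m$, $k=m-1$, $k=m-2$ are trivial, $k\geqslant m-3$ is settled above, $k=1$ with $m$ even is Tokareva's theorem, and $\mathcal{RM}(1,5)$, $\mathcal{RM}(2,6)$ are handled in this paper; the finitely many remaining small cases (notably $\mathcal{RM}(1,7)$ and $\mathcal{RM}(2,7)$) could in principle be settled by a computer-assisted coset analysis in the style of Table~\ref{table:RM15Cosets}. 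Thus the heart of the matter is the inductive step.

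In the inductive step one assumes $\mathcal{R}_{k,m-1}$ and $\mathcal{R}_{k-1,m-1}$ metrically regular, \emph{together with} a workable description of $\widehat{\mathcal{R}}_{k,m-1}$ and $\widehat{\mathcal{R}}_{k-1,m-1}$ as finite unions of equivalence classes, and reruns the argument of Section~\ref{RM26Section}. First prove the covering-radius identity $\rho(\mathcal{R}_{k,m}) = \rho(\mathcal{R}_{k,m-1}) + \rho(\mathcal{R}_{k-1,m-1})$, then the analogue of Lemma~9: if $(\mathbf{y},\mathbf{w})\in\widehat{\mathcal{R}}_{k,m}$ then $\mathbf{y}\in\widehat{\mathcal{R}}_{k,m-1}$, and $\mathbf{w}+\mathbf{u}\in\widehat{\mathcal{R}}_{k-1,m-1}$ for every $\mathbf{u}\in\mathcal{R}_{k,m-1}$ at distance $\rho(\mathcal{R}_{k,m-1})$ from $\mathbf{y}$. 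Next, by the ``automorphism transport'' device (push an inequality through the linear map and translation realising an equivalence, as in Lemmas~10--12), show that the first-coordinate projection $Y$ of $\widehat{\mathcal{R}}_{k,m}$ is a union of whole equivalence classes contained in $\widehat{\mathcal{R}}_{k,m-1}$, and that each such class is translation-stable only under codewords of $\mathcal{R}_{k,m-1}$ — here the metric regularity of $\mathcal{R}_{k,m-1}$, applied class by class, is used. This forces $\tilde{\mathbf{u}}\in\mathcal{R}_{k,m-1}$ whenever $(\tilde{\mathbf{u}},\tilde{\mathbf{u}}+\tilde{\mathbf{v}})\in\widehat{\widehat{\mathcal{R}}}_{k,m}$, so $(\mathbf{0},\tilde{\mathbf{v}})\in\widehat{\widehat{\mathcal{R}}}_{k,m}$ as well; an ``instability'' lemma in the spirit of Lemma~12 (every $\mathbf{v}\notin\mathcal{R}_{k-1,m-1}$ ejects some member of each equivalence class of $\widehat{\mathcal{R}}_{k-1,m-1}$ from that set), combined with the analogue of Lemma~11, then yields a contradiction unless $\tilde{\mathbf{v}}\in\mathcal{R}_{k-1,m-1}$, and the step is complete.

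I expect two obstacles, either of which is by itself formidable. The first is that the covering radius is simply not known: $\rho(\mathcal{RM}(1,m))$ is open for odd $m>7$, $\rho(\mathcal{RM}(2,m))$ for $m>7$, and essentially nothing sharp is known for the middle orders $3\leqslant k\leqslant m-4$, so the target set $\widehat{\mathcal{R}}_{k,m}$ cannot even be named; moreover the Plotkin identity $\rho(\mathcal{R}_{k,m})=\rho(\mathcal{R}_{k,m-1})+\rho(\mathcal{R}_{k-1,m-1})$ need not hold in general, which would already break the analogue of Lemma~9, and the recursion does not bottom out cleanly since it feeds back into the still-open families $\mathcal{RM}(1,m)$ with $m$ odd. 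The second, deeper obstacle is structural: the transport-plus-class-by-class machinery needs $\widehat{\mathcal{R}}_{k,m-1}$ and $\widehat{\mathcal{R}}_{k-1,m-1}$ to decompose into finitely many \emph{explicitly described} equivalence classes, whereas already for $\mathcal{RM}(1,m)$ the metric complement is the set of bent functions, whose classification up to affine equivalence is a famous open problem widely believed to be intractable. A more robust route would therefore be to bypass classification and establish, in the spirit of Tokareva, the global rigidity statement $\{f : f+\widehat{\mathcal{R}}_{k,m}=\widehat{\mathcal{R}}_{k,m}\}=\mathcal{R}_{k,m}$ directly; but proving such rigidity for the deep holes of an arbitrary Reed--Muller code, without an explicit handle on those deep holes, is precisely the content of the conjecture and appears to demand a genuinely new idea.
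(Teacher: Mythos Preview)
The statement you are attempting is explicitly a \emph{conjecture} in the paper, not a theorem: the paper offers no proof and, immediately after stating it, explains why the remaining cases resist the methods developed. There is thus no ``paper's proof'' to compare against.

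Your proposal is candid about this: you sketch an inductive Plotkin-decomposition strategy modelled on Section~\ref{RM26Section} and then, correctly, identify the fatal obstructions. Both obstacles you name are precisely those the paper itself flags. First, the identity $\rho(\mathcal{R}_{k,m})=\rho(\mathcal{R}_{k,m-1})+\rho(\mathcal{R}_{k-1,m-1})$ is what made $\mathcal{RM}(2,6)$ tractable---the paper singles out the fact that $\rho(\mathcal{R}_{2,6})$ ``attains an upper bound given by the $(\mathbf{u},\mathbf{u+v})$ construction'' as the special feature that let the argument go through---and this identity is not known (and not expected) to hold in general; without it the analogue of Lemma~9 collapses immediately. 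Second, $\rho(\mathcal{R}_{k,m})$ is unknown for almost all $(k,m)$ outside the ranges treated, so the object $\widehat{\mathcal{R}}_{k,m}$ cannot even be named; the paper stresses this for $\mathcal{RM}(1,7)$ and $\mathcal{RM}(2,7)$, where the ``unconstructive nature of the results which describe their covering radius'' already blocks the method. Your further remark that the class-by-class transport device would require an equivalence classification of deep holes---already the bent-function classification problem when $k=1$---is apt and goes somewhat beyond what the paper makes explicit.

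In short, there is no error to point to because you are not actually claiming a proof; you have accurately diagnosed why the conjecture is open, and your diagnosis agrees with, and in places sharpens, the paper's own assessment.
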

The availability of the coset weight distributionfor allowed us to consider the code $\mathcal{RM}(1,5)$, and the fact that the covering radius of $\mathcal{RM}(2,6)$ attains an upper bound given by the $(\mathbf{u},\mathbf{u+v})$ construction \cite{SCH81} allowed us to establish its metric regularity even without describing the metric complement. However, the codes $\mathcal{RM}(1,7)$ and $\mathcal{RM}(2,7)$ are much harder to consider because of the lack of similar regularities, the larger number of variables, the larger covering radius and the unconstructive nature of the results which describe their covering radius.

I would like to thank Natalia Tokareva, Alexander Kutsenko and the collective of the Selmer Center of the University in Bergen for the inspiration and helpful remarks during the development of this work.


\begin{thebibliography}{}
%
%
\bibitem {BW72}
Berlekamp E., Welch L.: Weight distributions of the cosets of the $(32, 6)$ Reed-Muller code. IEEE Transactions on Information Theory. \textbf{18}(1), 203--207 (1972).

\bibitem {COH97}
Cohen G., Honkala I., Litsyn S., Lobstein A.: Covering codes. Elsevier. \textbf{54}, (1997).

\bibitem {HOU96}
Hou X. D.: Covering Radius of the Reed-Muller code $R(1, 7)$ -- A Simpler Proof. Journal of Combinatorial Theory, Series A. \textbf{74}(2), 337--341 (1996).

\bibitem {KOL17}
Kolomeec N.: The graph of minimal distances of bent functions and its properties. Designs, Codes and Cryptography. \textbf{85}(3), 395--410 (2017).

\bibitem {KUT19}
Kutsenko A.: Metrical properties of self-dual bent functions. Designs, Codes and Cryptography (2019). doi:10.1007/s10623-019-00678-x

\bibitem {MCL79}
McLoughlin A. M.: The Covering Radius of the $(m-3)$-rd Order Reed Muller Codes and a Lower Bound on the $(m-4)$-th Order Reed Muller Codes. SIAM Journal on Applied Mathematics. \textbf{37}(2), 419--422 (1979).

\bibitem{MES16}
Mesnager S.: Bent Functions: Fundamentals and Results. Springer International Publishing, (2016).

\bibitem {MYK80}
Mykkeltveit J.: The covering radius of the $(128, 8)$ Reed-Muller code is $56$. IEEE Transactions on Information Theory. \textbf{26}(3), 359--362 (1980).

\bibitem {NEU92}
Neumaier A.: Completely regular codes. Discrete mathematics. \textbf{106}, 353--360 (1992).

\bibitem {OBL16}
Oblaukhov A. K.: Metric complements to subspaces in the Boolean cube. Journal of Applied and Industrial Mathematics. \textbf{10}(3), 397--403 (2016).

\bibitem {OBL18}
Oblaukhov A. K.: Maximal metrically regular sets. Siberian Electronic Mathematical Reports. \textbf{15}, 1842--1849 (2018).

\bibitem {OBL19}
Oblaukhov A.: A lower bound on the size of the largest metrically regular subset of the Boolean cube. Cryptography and Communications. \textbf{11}(4), 777--791 (2019).

\bibitem {ROT76}
Rothaus O. S.: On ``bent'' functions. Journal of Combinatorial Theory, Series A. \textbf{20}(3), 300--305 (1976).

\bibitem {SCH81}
Schatz J.: The second order Reed-Muller code of length $64$ has covering radius $18$. IEEE Transactions on Information Theory. \textbf{27}(4), 529--530 (1981).

\bibitem {STA18}
Stanica P., Sasao T., Butler J. T.: Distance duality on some classes of Boolean functions. Journal of Combinatorial Mathematics and Combinatorial Computing. 2018.

\bibitem {TOK10}
Tokareva N. N.: The group of automorphisms of the set of bent functions. Discrete Mathematics and Applications. \textbf{20}(5--6), 655--664 (2010).

\bibitem {TOK12}
Tokareva N.: Duality between bent functions and affine functions. Discrete Mathematics. \textbf{312}(3), 666--670 (2012).

\bibitem {TOK15}
Tokareva N.: Bent functions: results and applications to cryptography. Academic Press, (2015).

\bibitem {WAN19}
Wang Q.: The covering radius of the Reed–Muller code $RM(2,7)$ is $40$. Discrete Mathematics. \textbf{342}(12), Article 111625 (2019).

\end{thebibliography}



\section*{Appendix}

Tables~\ref{table:RM1514Cosets}-\ref{table:RM1528Cosets} show that for any $\mathrm{EA}$-equivalence class $\widehat{\mathcal{R}}_{1,5}^i$ of $\widehat{\mathcal{R}}_{1,5}$ and for each $\mathrm{EA}$-equivalence class $C$ of $\mathbb{F}_2^{32}$ there exists a function $f\in C$ and a function $g\in \widehat{\mathcal{R}}_{1,5}^i$ such that $f+g$ does not belong to $\widehat{\mathcal{R}}_{1,5}$. Note that, if this function $f$ is not in $\widehat{\mathcal{R}}_{1,5}$ and $f+g$ belongs to a class $C'$, then we do not have to search for a function with such properties in the class $C'$ since $(f+g) + g = f$ does not belong to $\widehat{\mathcal{R}}_{1,5}$ --- this is why some rows in the following tables are skipped.

Notations in Tables~\ref{table:RM1514Cosets}-\ref{table:RM1528Cosets} are the same as in Table~\ref{table:RM15Cosets} (see Section 3). The second column of Table~\ref{table:RM1528Cosets} contains ``canonical'' representatives for each $\mathrm{EA}$-equivalence class, as they were obtained in the paper \cite{BW72} by Berlekamp and Welch. In other columns and tables, some representatives are changed by either simple variable swaps or more complex transformations. These more complex transformations are marked with an asterisk and explained below for each table, along with other clarifications. Hereafter ``$i{\leftarrow}i+j$'' stands for ``$x_i{\leftarrow}x_i + x_j$'', while two-way arrows denote variable swapping; all transformations are applied consecutively.

\textbf{Table~\ref{table:RM1514Cosets}:} Representatives $f$ for classes $7$, $9$, $10$ and $22$ (column $2$) are obtained from ``canonical'' using the following transformations:\\
$(7)\,3{\leftarrow}3{+}0;\,(9)\,4{\leftarrow}4{+}3{+}0;\,(10)\,1{\leftarrow}1{+}0;\,(22)\,4{\leftrightarrow}5;1{\leftrightarrow}3;$ \\
Functions $g$ from $\widehat{\mathcal{R}}_{1,5}^1$ (third column) are obtained from ``canonical'' using transformations: \\
$2345{+}123{+}24{+}35 \circ (2{\leftarrow}2{+}0) = 2345{+}345{+}123{+}13{+}24{+}35;$\\
$2345{+}123{+}24{+}35 \circ (5{\leftarrow}5{+}0) = 2345{+}234{+}123{+}24{+}35;$\\
$2345{+}123{+}24{+}35 \circ (1{\leftarrow}1{+}0) = 2345{+}123{+}24{+}35{+}23;$

Transformations which produce function in column $5$ from $h$ in column $4$:\\
$(1)\,2{\leftarrow}2{+}0;4{\leftarrow}4{+}0;1{\leftrightarrow}3;\,(2)\,2{\leftarrow}2{+}0;4{\leftarrow}4{+}0;1{\leftarrow}1{+}4;3{\leftarrow}3{+}0;5{\leftarrow}5{+}2;1{\leftrightarrow}3;$\\
$2{\leftrightarrow}4;3{\leftrightarrow}5;\,(3)\,1{\leftrightarrow}3;4{\leftrightarrow}5;\,(5)\,3{\leftarrow}3{+}0;1{\leftrightarrow}2;\,(6)\,3{\leftarrow}3{+}0;1{\leftrightarrow}3;3{\leftrightarrow}4;4{\leftrightarrow}5;$ \\
$(7)\,5{\leftarrow}5{+}0;1{\leftrightarrow}5;3{\leftrightarrow}4;\,(8)\,1{\leftrightarrow}3;2{\leftrightarrow}5;\,(9)\,4{\leftarrow}4{+}0;1{\leftarrow}1{+}2;1{\leftrightarrow}4;2{\leftrightarrow}5;$ \\
$(10)\,4{\leftarrow}4{+}3;2{\leftarrow}2{+}5;1{\leftrightarrow}4;\,(11)\,1{\leftarrow}1{+}4;\,(12)\,1{\leftarrow}1{+}2;2{\leftrightarrow}4;\,(13)\,3{\leftarrow}3{+}0;4{\leftarrow}4{+}0;$ \\
$1{\leftrightarrow}4;2{\leftrightarrow}5;4{\leftrightarrow}5;\,(15)\,3{\leftarrow}3{+}0;1{\leftrightarrow}4;2{\leftrightarrow}3;4{\leftrightarrow}5;\,(16)\,1{\leftarrow}1{+}0;3{\leftarrow}3{+}0;1{\leftrightarrow}4;2{\leftrightarrow}3;$ \\
$4{\leftrightarrow}5;\,(17)\,1{\leftarrow}1{+}0;1{\leftrightarrow}5;3{\leftrightarrow}4;2{\leftrightarrow}5;\,(18)\,2{\leftrightarrow}4;3{\leftrightarrow}5;\,(20)\,4{\leftarrow}4{+}3{+}0;5{\leftarrow}5{+}2{+}0;$ \\
$(23)\,2{\leftrightarrow}4;3{\leftrightarrow}5;5{\leftarrow}5{+}2{+}0;1{\leftarrow}1{+}0;4{\leftarrow}4{+}3{+}0;\,(24)\,2{\leftrightarrow}4;3{\leftrightarrow}5;5{\leftarrow}5{+}2{+}0;$ \\
$1{\leftarrow}1{+}0;4{\leftarrow}4{+}3{+}0;\,(26)\,2{\leftrightarrow}4;3{\leftrightarrow}5;$ \\

\textbf{Table~\ref{table:RM1522Cosets}:} Functions $g$ from $\widehat{\mathcal{R}}_{1,5}^2$ (third column) are obtained from ``canonical'' using variable swaps. Transformations which produce function in column $5$ from $h$ in column $4$:\\
$(1)\,2{\leftrightarrow}3;\,(2)\,4{\leftarrow}4{+}2{+}0;2{\leftrightarrow}3;\,(3)\,1{\leftarrow}1{+}0;2{\leftrightarrow}3;\,(4)\,3{\leftarrow}3{+}2{+}0;1{\leftarrow}1{+}2{+}3;$ \\
$(7)\,4{\leftarrow}4{+}2{+}0;2{\leftrightarrow}4;3{\leftrightarrow}5;\,(8)\,2{\leftrightarrow}4;\,(10)\,2{\leftarrow}2{+}4{+}0;2{\leftrightarrow}4;3{\leftrightarrow}5;\,(11)\,2{\leftarrow}2{+}5{+}0;$ \\
$(13)\,2{\leftarrow}2{+}5{+}0;5{\leftarrow}5{+}3{+}0;3{\leftrightarrow}5;\,(15)\,2{\leftrightarrow}4;3{\leftrightarrow}5;\,(16)\,1{\leftarrow}1{+}0;2{\leftrightarrow}4;3{\leftrightarrow}5;$ \\
$(17)\,1{\leftarrow}1{+}0;2{\leftrightarrow}5;3{\leftrightarrow}4;\,(18)\,2{\leftrightarrow}4;3{\leftrightarrow}5;\,(19)\,3{\leftarrow}3{+}0;1{\leftrightarrow}5;\,(21)\,1{\leftrightarrow}5;$ \\
$(23)\,5{\leftarrow}5{+}0;1{\leftrightarrow}5;2{\leftrightarrow}4;3{\leftrightarrow}5;\,(24)\,5{\leftarrow}5{+}0;3{\leftarrow}3{+}5;2{\leftrightarrow}4;3{\leftrightarrow}5;\,(25)\,4{\leftarrow}4{+}0;2{\leftrightarrow}4;$ \\
$3{\leftrightarrow}5;\,(26)\,4{\leftarrow}4{+}0;2{\leftarrow}2{+}5;2{\leftrightarrow}4;3{\leftrightarrow}5;\,(27)\,1{\leftrightarrow}2;4{\leftrightarrow}5;$ \\

\textbf{Table~\ref{table:RM1526Cosets}:} Representatives $f$ for classes $4$ and $9$ (column $2$) are obtained from ``canonical'' using the following transformations: $(4)\,3{\leftarrow}3{+}0;\,(9)\,1{\leftarrow}1{+}2;$ \\
Functions $g$ from $\widehat{\mathcal{R}}_{1,5}^3$ (third column) are obtained from ``canonical'' using transformations: \\
$123{+}145{+}23{+}24{+}35 \circ (1{\leftarrow}1{+}2) = 123{+}145{+}245{+}24{+}35;$\\
$123{+}145{+}245{+}24{+}35 \circ (3{\leftarrow}3{+}0) = 123{+}145{+}245{+}24{+}35{+}12;$\\
$123{+}145{+}245{+}24{+}35{+}12 \circ (4{\leftrightarrow}5) = 123{+}145{+}245{+}25{+}34{+}12;$\\
$123{+}145{+}245{+}24{+}35{+}12 \circ (2{\leftrightarrow}4;3{\leftrightarrow}5) = 123{+}145{+}234{+}24{+}35{+}14;$\\
$123{+}145{+}23{+}24{+}35 \circ (2{\leftrightarrow}4;3{\leftrightarrow}5) = 123{+}145{+}45{+}24{+}35;$

Transformations which produce function in column $5$ from $h$ in column $4$:\\
$(1)\,3{\leftarrow}3{+}0;\,(2)\,3{\leftarrow}3{+}0;\,(4)\,1{\leftarrow}1{+}0;\,(5)\,1{\leftarrow}1{+}2;1{\leftarrow}1{+}0;\,(6)\,2{\leftarrow}2{+}5{+}0;$ \\
$3{\leftarrow}3{+}4{+}0;1{\leftarrow}1{+}0;2{\leftrightarrow}4;3{\leftrightarrow}5;\,(7)\,3{\leftarrow}3{+}4;1{\leftarrow}1{+}4;2{\leftrightarrow}4;3{\leftrightarrow}5;\,(8)\,2{\leftarrow}2{+}5{+}0;$ \\
$2{\leftrightarrow}4;3{\leftrightarrow}5;\,(9)\,5{\leftarrow}5{+}0;2{\leftarrow}2{+}5{+}0;2{\leftrightarrow}4;3{\leftrightarrow}5;\,(11)\,3{\leftarrow}3{+}2;2{\leftrightarrow}4;3{\leftrightarrow}5;2{\leftrightarrow}3;$ \\
$(14)\,2{\leftrightarrow}4;3{\leftrightarrow}5;\,(15)\,5{\leftarrow}5{+}2{+}0;3{\leftrightarrow}4;\,(16)\,3{\leftrightarrow}4;\,(17)\,4{\leftarrow}4{+}3{+}0;\,(19)\,1{\leftarrow}1{+}0;$ \\
$3{\leftarrow}3{+}4;2{\leftarrow}2{+}5;2{\leftrightarrow}4;3{\leftrightarrow}5;\,(21)\,3{\leftarrow}3{+}0;1{\leftrightarrow}4;2{\leftrightarrow}3;4{\leftrightarrow}5;\,(22)\,5{\leftarrow}5{+}0;2{\leftarrow}2{+}4;$ \\
$2{\leftrightarrow}4;3{\leftrightarrow}5;\,(23)\,5{\leftarrow}5{+}0;1{\leftrightarrow}5;3{\leftrightarrow}4;\,(24)\,5{\leftarrow}5{+}0;5{\leftarrow}5{+}2{+}0;1{\leftrightarrow}5;3{\leftrightarrow}4;$ \\
$(27)\,5{\leftarrow}5{+}2;1{\leftarrow}1{+}0;3{\leftarrow}3{+}4;$ \\

\textbf{Table~\ref{table:RM1528Cosets}:} Functions $g$ from $\widehat{\mathcal{R}}_{1,5}^4$ (third column) are obtained from ``canonical'' using variable swaps. Transformations which produce function in column $5$ from $h$ in column $4$:\\
$(2)\,2{\leftrightarrow}4;\,(7)\,2{\leftrightarrow}3;\,(9)\,2{\leftrightarrow}3;4{\leftrightarrow}5;\,(16)\,1{\leftarrow}1{+}0;\,(17)\,1{\leftarrow}1{+}0;\,(19)\,3{\leftarrow}3{+}0;$ \\
$(21)\,1{\leftrightarrow}2;4{\leftrightarrow}5;\,(23)\,1{\leftarrow}1{+}0;\,(24)\,1{\leftarrow}1{+}0;\,(25)\,2{\leftrightarrow}3;4{\leftrightarrow}5;\,(27)\,1{\leftrightarrow}3;2{\leftrightarrow}4;$ \\

\begin{table}[h]
	\begin{adjustbox}{width=\columnwidth,center}
		\begin{tabular}{|l|l|l|l|l|l|}
			\hline
			No    & Representative $f$           & $g$ from $\widehat{\mathcal{R}}_{1,5}^1 (14)$ & Sum $h=f{+}g$                     & $h$ is equal to   & $C(h)$ \\ \hline
			
			$0$   & $0$                          & ---                                      & ---                               & ---               & --- \\ \hline
			$1$   & $2345$                       & $2345{+}345{+}123{+}13{+}24{+}35$        & $123{+}345{+}13{+}24{+}35$        & $123{+}145{+}24$  & $25$  \\ \hline
			$2$   & $2345{+}12$                  & $2345{+}345{+}123{+}13{+}24{+}35$        & $123{+}345{+}12{+}13{+}24{+}35$   & $123{+}145{+}23$  & $24$ \\ \hline
			$3$   & $2345{+}24$                  & $2345{+}123{+}24{+}35$                   & $123{+}35$                        & $123{+}14$        & $21$  \\ \hline
			$4$   & $2345{+}24{+}35$             & $2345{+}123{+}24{+}35$                   & $123$                             & $\leftarrow$      & $19$  \\ \hline
			$5$   & $2345{+}12{+}35$             & $2345{+}123{+}24{+}35$                   & $123{+}12{+}24$                   & $123{+}14$        & $21$  \\ \hline
			$6$   & $2345{+}123$                 & $2345{+}234{+}123{+}24{+}35$             & $234{+}24{+}35$                   & $123{+}14$        & $21$  \\ \hline
			$7$   & $2345{+}245{+}123^*$         & $2345{+}123{+}24{+}35$                   & $245{+}24{+}35$                   & $123{+}14$        & $21$  \\ \hline
			$8$   & $2345{+}123{+}24$            & $2345{+}123{+}24{+}35$                   & $35$                              & $12$              & $27$  \\ \hline
			$9$   & $2345{+}123{+}14{+}13^*$     & $2345{+}345{+}123{+}13{+}24{+}35$        & $345{+}14{+}24{+}35$              & $123{+}14$        & $21$  \\ \hline
			$10$  & $2345{+}123{+}45{+}23^*$     & $2345{+}123{+}24{+}35$                   & $23{+}24{+}35{+}45$               & $12$              & $27$  \\ \hline
			$11$  & $2345{+}123{+}12{+}35$       & $2345{+}123{+}24{+}35$                   & $12{+}24$                         & $12$              & $27$  \\ \hline
			$12$  & $2345{+}123{+}14{+}35$       & $2345{+}123{+}24{+}35$                   & $14{+}24$                         & $12$              & $27$  \\ \hline
			$13$  & $2345{+}123{+}13{+}45$       & $2345{+}345{+}123{+}13{+}24{+}35$        & $345{+}24{+}35{+}45$              & $123{+}14$        & $21$  \\ \hline
			$14^1$& $2345{+}123{+}24{+}35$       & $2345{+}123{+}24{+}35$                   & $0$                               & $\leftarrow$      & $0$  \\ \hline
			$15$  & $2345{+}123{+}145$           & $2345{+}234{+}123{+}24{+}35$             & $145{+}234{+}24{+}35$             & $123{+}145{+}24$  & $25$  \\ \hline
			$16$  & $2345{+}123{+}145{+}45$      & $2345{+}234{+}123{+}24{+}35$             & $145{+}234{+}24{+}35{+}45$        & $123{+}145{+}24$  & $25$  \\ \hline
			$17$  & $2345{+}123{+}145{+}24{+}45$ & $2345{+}123{+}24{+}35$                   & $145{+}45{+}35$                   & $123{+}14$        & $21$  \\ \hline
			$18$  & $2345{+}123{+}145{+}24{+}35$ & $2345{+}123{+}24{+}35$                   & $145$                             & $123$             & $19$  \\ \hline
			$19$  & $123$                        & ---                                      & ---                               & ---               & ---  \\ \hline
			$20$  & $123{+}45$                   & $2345{+}123{+}24{+}35$                   & $2345{+}24{+}35{+}45$             & $2345{+}23{+}45$  & $4$  \\ \hline
			$21$  & $123{+}14$                   & ---                                      & ---                               & ---               & ---  \\ \hline
			$22^2$& $123{+}24{+}35^*$            & $2345{+}123{+}24{+}35$                   & $2345$                            & $\leftarrow$      & $1$  \\ \hline
			$23$  & $123{+}145$                  & $2345{+}123{+}24{+}35{+}23$              & $2345{+}145{+}24{+}35{+}23$       & $2345{+}123{+}45$ & $10$  \\ \hline
			$24$  & $123{+}145{+}23$             & $2345{+}123{+}24{+}35$                   & $2345{+}145{+}24{+}35{+}23$       & $2345{+}123{+}45$      & $10$  \\ \hline
			$25$  & $123{+}145{+}24$             & ---                                      & ---                               & ---               & ---  \\ \hline
			$26^3$& $123{+}145{+}23{+}24{+}35$   & $2345{+}123{+}24{+}35$                   & $2345{+}145{+}23$                 & $2345{+}123{+}45$ & $10$  \\ \hline
			$27$  & $12$                         & ---                                      & ---                               & ---               & ---  \\ \hline
			$28^4$& $24{+}35$                    & $2345{+}123{+}24{+}35$                   & $2345{+}123$                      & $\leftarrow$      & $6$  \\ \hline
		\end{tabular}
	\end{adjustbox}
	\caption{Proof of Lemma 12 for the class $\widehat{\mathcal{R}}_{1,5}^1$.}\label{table:RM1514Cosets}
\end{table}

\begin{table}[]
	\begin{adjustbox}{width=\columnwidth,center}
		\begin{tabular}{|l|l|l|l|l|l|}
			\hline
			No    & Representative $f$           & $g$ from $\widehat{\mathcal{R}}_{1,5}^2 (22)$ & Sum $h=f{+}g$                & $h$ is equal to         & $C(h)$ \\ \hline
			
			$0$   & $0$                          & ---                                      & ---                          & ---                     & --- \\ \hline
			$1$   & $2345$                       & $123{+}14{+}25$                          & $2345{+}123{+}14{+}25$       & $2345{+}123{+}14{+}35$  & $12$  \\ \hline
			$2$   & $2345{+}12$                  & $123{+}14{+}25$                          & $2345{+}123{+}12{+}14{+}25$  & $2345{+}123{+}14{+}35$  & $12$ \\ \hline
			$3$   & $2345{+}23$                  & $123{+}14{+}25$                          & $2345{+}123{+}23{+}14{+}25$  & $2345{+}123{+}14{+}35$  & $12$  \\ \hline
			$4$   & $2345{+}25{+}34$             & $123{+}14{+}25$                          & $2345{+}123{+}14{+}34$       & $2345{+}123{+}14$       & $9$  \\ \hline
			$5$   & $2345{+}14{+}25$             & $123{+}14{+}25$                          & $2345{+}123$                 & $\leftarrow$            & $6$  \\ \hline
			$6$   & $2345{+}123$                 & ---                                      & ---                          & ---                     & $21$  \\ \hline
			$7$   & $2345{+}123{+}12$            & $123{+}14{+}25$                          & $2345{+}12{+}14{+}25$        & $2345{+}12{+}34$        & $5$  \\ \hline
			$8$   & $2345{+}123{+}25$            & $123{+}14{+}25$                          & $2345{+}14$                  & $2345{+}12$             & $2$  \\ \hline
			$9$   & $2345{+}123{+}14$            & ---                                      & ---                          & ---                     & $21$  \\ \hline
			$10$  & $2345{+}123{+}45$            & $123{+}14{+}25$                          & $2345{+}14{+}25{+}45$        & $2345{+}12{+}34$        & $5$  \\ \hline
			$11$  & $2345{+}123{+}12{+}34$       & $123{+}15{+}34$                          & $2345{+}12{+}15$             & $2345{+}12$             & $2$  \\ \hline
			$12$  & $2345{+}123{+}14{+}35$       & ---                                      & ---                          & ---                     & $27$  \\ \hline
			$13$  & $2345{+}123{+}12{+}45$       & $123{+}15{+}34$                          & $2345{+}12{+}15{+}45{+}34$   & $2345{+}12{+}34$        & $5$  \\ \hline
			$14^1$& $2345{+}123{+}24{+}35$       & $123{+}24{+}35$                          & $2345$                       & $\leftarrow$            & $1$  \\ \hline
			$15$  & $2345{+}123{+}145$           & $123{+}14{+}25$                          & $2345{+}145{+}14{+}25$       & $2345{+}12{+}34$        & $11$  \\ \hline
			$16$  & $2345{+}123{+}145{+}45$      & $123{+}14{+}25$                          & $2345{+}145{+}14{+}25{+}45$  & $2345{+}12{+}34$        & $11$  \\ \hline
			$17$  & $2345{+}123{+}145{+}24{+}45$ & $123{+}24{+}35$                          & $2345{+}145{+}35{+}45$       & $2345{+}123{+}24$       & $8$  \\ \hline
			$18$  & $2345{+}123{+}145{+}24{+}35$ & $123{+}24{+}35$                          & $2345{+}145$                 & $2345{+}123$            & $6$  \\ \hline
			$19$  & $123{+}235$                  & $123{+}14{+}25$                          & $235{+}14{+}25$              & $123{+}45$              & $20$  \\ \hline
			$20$  & $123{+}45$                   & ---                                      & ---                          & ---                     & ---  \\ \hline
			$21$  & $123{+}14$                   & $123{+}14{+}25$                          & $25$                         & $12$                    & $27$  \\ \hline
			$22^2$& $123{+}14{+}25$              & $123{+}14{+}25$                          & $0$                          & $\leftarrow$            & $0$  \\ \hline
			$23$  & $123{+}145$                  & $123{+}14{+}25$                          & $145{+}14{+}25$              & $123{+}14$              & $21$  \\ \hline
			$24$  & $123{+}145{+}23$             & $123{+}14{+}25$                          & $145{+}14{+}25{+}23$         & $123{+}45$              & $20$  \\ \hline
			$25$  & $123{+}145{+}24$             & $123{+}15{+}24$                          & $145{+}15$                   & $123$                   & $19$  \\ \hline
			$26^3$& $123{+}145{+}23{+}24{+}35$   & $123{+}15{+}24$                          & $145{+}15{+}23{+}35$         & $123{+}45$              & $20$  \\ \hline
			$27$  & $14$                         & $123{+}14{+}25$                          & $123{+}25$                   & $123{+}14$              & $21$  \\ \hline
			$28^4$& $14{+}25$                    & $123{+}14{+}25$                          & $123$                        & $\leftarrow$            & $19$  \\ \hline
		\end{tabular}
	\end{adjustbox}
	\caption{Proof of Lemma 12 for the class $\widehat{\mathcal{R}}_{1,5}^2$.}\label{table:RM1522Cosets}
\end{table}

\begin{table}[]
	\begin{adjustbox}{width=\columnwidth,center}
		\begin{tabular}{|l|l|l|l|l|l|}
			\hline
			No    & Representative $f$           & $g$ from $\widehat{\mathcal{R}}_{1,5}^3 (26)$ & Sum $h=f{+}g$                           & $h$ is equal to              & $C(h)$ \\ \hline
			
			$0$   & $0$                          & ---                                      & ---                                     & ---                          & --- \\ \hline
			$1$   & $2345$                       & $123{+}145{+}245{+}24{+}35{+}12$         & $2345{+}123{+}145{+}245{+}24{+}35{+}12$ & $2345{+}123{+}145{+}24{+}35$ & $18$  \\ \hline
			$2$   & $2345{+}12$                  & $123{+}145{+}245{+}24{+}35$              & $2345{+}123{+}145{+}245{+}24{+}35{+}12$ & $2345{+}123{+}145{+}24{+}35$ & $18$ \\ \hline
			$3$   & $2345{+}23$                  & $123{+}145{+}23{+}24{+}35$               & $2345{+}123{+}145{+}24{+}35$            & $\leftarrow$                 & $18$  \\ \hline
			$4$   & $2345{+}245{+}23{+}45^*$     & $123{+}145{+}245{+}24{+}35$              & $2345{+}123{+}145{+}23{+}24{+}35{+}45$  & $2345{+}123{+}145{+}24{+}35$ & $18$  \\ \hline
			$5$   & $2345{+}12{+}35$             & $123{+}145{+}245{+}24{+}35{+}12$         & $2345{+}123{+}145{+}245{+}24$           & $2345{+}123{+}145{+}24{+}45$ & $17$  \\ \hline
			$6$   & $2345{+}123$                 & $123{+}145{+}23{+}24{+}35$               & $2345{+}145{+}23{+}24{+}35$             & $2345{+}123{+}45$            & $10$  \\ \hline
			$7$   & $2345{+}123{+}12$            & $123{+}145{+}245{+}24{+}35$              & $2345{+}145{+}245{+}24{+}35{+}12$       & $2345{+}123{+}35{+}14$       & $12$  \\ \hline
			$8$   & $2345{+}123{+}24$            & $123{+}145{+}23{+}24{+}35$               & $2345{+}145{+}23{+}35$                  & $2345{+}123{+}45$            & $10$  \\ \hline
			$9$   & $2345{+}123{+}14{+}23{+}24^*$& $123{+}145{+}234{+}24{+}35{+}14$         & $2345{+}145{+}234{+}23{+}35$            & $2345{+}123{+}12{+}45$       & $13$  \\ \hline
			$10$  & $2345{+}123{+}45$            & ---                                      & ---                                     & ---                          & ---  \\ \hline
			$11$  & $2345{+}123{+}12{+}34$       & $123{+}145{+}245{+}25{+}34{+}12$         & $2345{+}145{+}245{+}25$                 & $2345{+}123{+}24$            & $8$  \\ \hline
			$12$  & $2345{+}123{+}14{+}35$       & ---                                      & ---                                     & ---                          & ---  \\ \hline
			$13$  & $2345{+}123{+}12{+}45$       & ---                                      & ---                                     & ---                          & ---  \\ \hline
			$14^1$& $2345{+}123{+}24{+}35$       & $123{+}145{+}23{+}24{+}35$               & $2345{+}145{+}23$                       & $2345{+}123{+}45$            & $10$  \\ \hline
			$15$  & $2345{+}123{+}145$           & $123{+}145{+}23{+}24{+}35$               & $2345{+}23{+}24{+}35$                   & $2345{+}23{+}45$             & $4$  \\ \hline
			$16$  & $2345{+}123{+}145{+}45$      & $123{+}145{+}45{+}24{+}35$               & $2345{+}24{+}35$                        & $2345{+}23{+}45$             & $4$  \\ \hline
			$17$  & $2345{+}123{+}145{+}24{+}45$ & $123{+}145{+}23{+}24{+}35$               & $2345{+}23{+}35{+}45$                   & $2345{+}23{+}45$             & $4$  \\ \hline
			$18$  & $2345{+}123{+}145{+}24{+}35$ & ---                                      & ---                                     & ---                          & ---  \\ \hline
			$19$  & $123$                        & $123{+}145{+}23{+}24{+}35$               & $145{+}23{+}24{+}35$                    & $123{+}45$                   & $20$  \\ \hline
			$20$  & $123{+}45$                   & ---                                      & ---                                     & ---                          & ---  \\ \hline
			$21$  & $123{+}14$                   & $123{+}145{+}234{+}24{+}35{+}14$         & $145{+}234{+}24{+}35$                   & $123{+}145{+}24$             & $25$  \\ \hline
			$22^2$& $123{+}14{+}25$              & $123{+}145{+}23{+}25{+}34$               & $145{+}14{+}23{+}34$                    & $123{+}45$                   & $20$  \\ \hline
			$23$  & $123{+}145$                  & $123{+}145{+}245{+}24{+}35$              & $245{+}24{+}35$                         & $123{+}14$                   & $21$  \\ \hline
			$24$  & $123{+}145{+}23$             & $123{+}145{+}245{+}24{+}35$              & $245{+}23{+}24{+}35$                    & $123{+}14$                   & $21$  \\ \hline
			$25$  & $123{+}145{+}24$             & ---                                      & ---                                     & ---                          & ---  \\ \hline
			$26^3$& $123{+}145{+}23{+}24{+}35$   & $123{+}145{+}23{+}24{+}35$               & $0$                                     & $\leftarrow$                 & $0$  \\ \hline
			$27$  & $35$                         & $123{+}145{+}45{+}24{+}35$               & $123{+}145{+}45{+}24$                   & $123{+}145{+}23$             & $24$  \\ \hline
			$28^4$& $24{+}35$                    & $123{+}145{+}23{+}24{+}35$               & $123{+}145{+}23$                        & $\leftarrow$                 & $24$  \\ \hline
		\end{tabular}
	\end{adjustbox}
	\caption{Proof of Lemma 12 for the class $\widehat{\mathcal{R}}_{1,5}^3$.}\label{table:RM1526Cosets}
\end{table}

\begin{table}[]
	\begin{adjustbox}{width=\columnwidth,center}
		\begin{tabular}{|l|l|l|l|l|l|}
			\hline
			No    & Representative $f$           & $g$ from $\widehat{\mathcal{R}}_{1,5}^4 (28)$ & Sum $h=f{+}g$                & $h$ is equal to      & $C(h)$ \\ \hline
			
			$0$   & $0$                          & ---                                      & ---                          & ---                          & --- \\ \hline
			$1$   & $2345$                       & $12{+}34$                                & $2345{+}12{+}34$             & $\leftarrow$                 & $5$  \\ \hline
			$2$   & $2345{+}12$                  & $12{+}34$                                & $2345{+}34$                  & $2345{+}23$                  & $3$ \\ \hline
			$3$   & $2345{+}23$                  & ---                                      & ---                          & ---                          & ---  \\ \hline
			$4$   & $2345{+}23{+}45$             & $23{+}45$                                & $2345$                       & $\leftarrow$                 & $1$  \\ \hline
			$5$   & $2345{+}12{+}34$             & ---                                      & ---                          & ---                          & ---  \\ \hline
			$6$   & $2345{+}123$                 & $12{+}34$                                & $2345{+}123{+}12{+}34$       & $\leftarrow$                 & $11$  \\ \hline
			$7$   & $2345{+}123{+}12$            & $12{+}34$                                & $2345{+}123{+}34$            & $2345{+}123{+}24$            & $8$  \\ \hline
			$8$   & $2345{+}123{+}24$            & ---                                      & ---                          & ---                          & ---  \\ \hline
			$9$   & $2345{+}123{+}14$            & $14{+}35$                                & $2345{+}123{+}35$            & $2345{+}123{+}24$            & $8$  \\ \hline
			$10$  & $2345{+}123{+}45$            & $12{+}45$                                & $2345{+}123{+}12$            & $\leftarrow$                 & $7$  \\ \hline
			$11$  & $2345{+}123{+}12{+}34$       & ---                                      & ---                          & ---                          & ---  \\ \hline
			$12$  & $2345{+}123{+}14{+}35$       & $14{+}35$                                & $2345{+}123$                 & $\leftarrow$                 & $6$  \\ \hline
			$13$  & $2345{+}123{+}12{+}45$       & $12{+}45$                                & $2345{+}123$                 & $\leftarrow$                 & $6$  \\ \hline
			$14^1$& $2345{+}123{+}24{+}35$       & $24{+}35$                                & $2345{+}123$                 & $\leftarrow$                 & $6$  \\ \hline
			$15$  & $2345{+}123{+}145$           & $24{+}35$                                & $2345{+}123{+}145{+}24{+}35$ & $\leftarrow$                 & $18$  \\ \hline
			$16$  & $2345{+}123{+}145{+}45$      & $23{+}45$                                & $2345{+}123{+}145{+}23$      & $2345{+}123{+}145{+}45$      & $16$  \\ \hline
			$17$  & $2345{+}123{+}145{+}24{+}45$ & $23{+}45$                                & $2345{+}123{+}145{+}24{+}23$ & $2345{+}123{+}145{+}24{+}45$ & $17$  \\ \hline
			$18$  & $2345{+}123{+}145{+}24{+}35$ & ---                                      & ---                          & ---                          & ---  \\ \hline
			$19$  & $123$                        & $12{+}45$                                & $123{+}12{+}45$              & $123{+}45$                   & $20$  \\ \hline
			$20$  & $123{+}45$                   & ---                                      & ---                          & ---                          & ---  \\ \hline
			$21$  & $123{+}14$                   & $14{+}25$                                & $123{+}25$                   & $123{+}14$                   & $21$  \\ \hline
			$22^2$& $123{+}14{+}25$              & $14{+}25$                                & $123$                        & $\leftarrow$                 & $19$  \\ \hline
			$23$  & $123{+}145$                  & $23{+}45$                                & $123{+}145{+}23{+}45$        & $123{+}145$                  & $23$  \\ \hline
			$24$  & $123{+}145{+}23$             & $23{+}45$                                & $123{+}145{+}45$             & $123{+}145{+}23$             & $24$  \\ \hline
			$25$  & $123{+}145{+}24$             & $24{+}35$                                & $123{+}145{+}35$             & $123{+}145{+}24$             & $25$  \\ \hline
			$26^3$& $123{+}145{+}23{+}24{+}35$   & $24{+}35$                                & $123{+}145{+}23$             & $\leftarrow$                 & $24$  \\ \hline
			$27$  & $12$                         & $12{+}34$                                & $34$                         & $12$                         & $27$  \\ \hline
			$28^4$& $12{+}34$                    & $12{+}34$                                & $0$                          & $\leftarrow$                 & $0$  \\ \hline
		\end{tabular}
	\end{adjustbox}
	\caption{Proof of Lemma 12 for the class $\widehat{\mathcal{R}}_{1,5}^4$.}\label{table:RM1528Cosets}
\end{table}

\end{document}